\let\oldnl\nl
\newcommand{\nonl}{\renewcommand{\nl}{\let\nl\oldnl}}
\newtheorem*{rep@theorem}{\rep@title}
\newcommand{\newreptheorem}[2]{%
\newenvironment{rep#1}[1]{%
 \def\rep@title{#2 \ref{##1}}%
 \begin{rep@theorem}}%
 {\end{rep@theorem}}}
\def\zeit{\number\shorthour:\ifnum\shortminute<10 0\number\shortminute
\else\number\shortminute\fi}
\newcommand{\mydriver}{hypertex}
 \renewcommand{\mydriver}{pdftex}
\newcommand{\child}{\textrm{c}}
\newcommand{\buffer}{\textrm{X}}
\newcommand{\EE}{\mathcal{E}}
\newcommand{\nil}{\textsc{nil}}
\newcommand{\poly}{\textrm{poly}}
\newcommand{\R}{\mathbb{R}}
\newcommand{\1}{\textbf{1}}
\newcommand{\vect}[1]{\ensuremath{\mathbf{#1}}}
\newcommand{\mat}[1]{\ensuremath{\mathbf{#1}}}
\theoremstyle{plain}
\newtheorem{theorem}{Theorem}[section]
\newtheorem{fact}[theorem]{Fact}
\newtheorem{lemma}[theorem]{Lemma}
\newtheorem{claim}[theorem]{Claim}
\newtheorem{invariant}[theorem]{Invariant}
\newtheorem{conjecture}[theorem]{Conjecture}
\newtheorem{definition}[theorem]{Definition}
\theoremstyle{definition}
\newcommand{\junk}[1]{{}}
\providecommand{\abs}[1]{\lvert#1\rvert} 
\title{Dynamic Effective Resistances and Approximate Schur Complement on Separable Graphs\footnote{The research leading to these results has received funding from the
		European Research Council under the European Union's Seventh	Framework Programme (FP/2007-2013) / ERC Grant Agreement no. 340506.}}
\author{Gramoz Goranci\footnote{University of Vienna, Faculty of Computer Science, Vienna, Austria. E-mail: \texttt{gramoz.goranci@univie.ac.at}.}
	\and
	Monika Henzinger\footnote{University of Vienna, Faculty of Computer Science, Vienna, Austria. E-mail: \texttt{monika.henzinger@univie.ac.at}.}
	\and 
	Pan Peng\footnote{Department of Computer Science, University of Sheffield, Sheffield, UK. E-mail: \texttt{p.peng@sheffield.ac.uk}. Work
		done in part while at the Faculty of Computer Science, University of Vienna, Austria.}
}
\date{}
\begin{document}


%
\begin{titlepage}
	\maketitle
	\thispagestyle{empty}
%
%
%
%



\begin{abstract}
We consider the problem of dynamically maintaining (approximate) all-pairs effective resistances in separable graphs, which are those that admit an $n^{c}$-separator theorem for some $c<1$. We give a fully dynamic algorithm that maintains  $(1+\varepsilon)$-approximations of the all-pairs effective resistances of an $n$-vertex graph $G$ undergoing edge insertions and deletions with $\tilde{O}(\sqrt{n}/\varepsilon^2)$ worst-case update time and $\tilde{O}(\sqrt{n}/\varepsilon^2)$ worst-case query time, if $G$ is guaranteed to be $\sqrt{n}$-separable (i.e., it is taken from a class satisfying a $\sqrt{n}$-separator theorem) and its separator can be computed in $\tilde{O}(n)$ time. Our algorithm is built upon a dynamic algorithm for maintaining \emph{approximate Schur complement} that approximately preserves pairwise effective resistances among a set of terminals for separable graphs, which might be of independent interest.

We complement our result by proving that for any two fixed vertices $s$ and $t$, no incremental or decremental algorithm can maintain the $s-t$ effective resistance for $\sqrt{n}$-separable graphs with worst-case update time $O(n^{1/2-\delta})$ and query time $O(n^{1-\delta})$ for any $\delta>0$, unless the Online Matrix Vector Multiplication (OMv) conjecture is false. 

We further show that for \emph{general} graphs, no incremental or decremental algorithm can maintain the $s-t$ effective resistance problem with worst-case update time $O(n^{1-\delta})$ and query-time $O(n^{2-\delta})$ for any $\delta >0$, unless the OMv conjecture is false.
	\end{abstract}
\end{titlepage}
\section{Introduction}\label{sec:intro}
Effective resistances and the closely related electrical flows  are basic concepts for resistor networks~\cite{doyle84} and were found to be very useful in the design of graph algorithms, e.g., for computing and approximating maximum flow~\cite{ChristianoKMST11,madry2013,Madry16}, random spanning tree generation~\cite{MST15fast,Sch17almost}, multicommodity flow~\cite{KMP12faster}, oblivious routing~\cite{HHNRR08}, and graph sparsification~\cite{SpielmanS11,DinitzKW15}. They also have found applications in social network analysis, e.g., for measuring the similarity of vertices in social networks~\cite{LZ18:kirchhoff}, in machine learning, e.g., for Gaussian sampling~\cite{cheng2015efficient} and in chemistry, e.g., for measuring chemical distances~\cite{klein1993resistance}. Previous research has studied the problem of how to quickly compute and approximate the effective resistances (or equivalently, \emph{energies} of electrical flows; see Appendix~\ref{app:energy} for more discussions), as such algorithms can be used as a crucial subroutine for other graph algorithms. For example, one can $(1+\varepsilon)$-approximate the $s-t$ effective resistance in  $\tilde{O}(m+n\varepsilon^{-2})$~\cite{DurfeeKPRS16} and $\tilde{O}(m \log (1/\varepsilon))$~\cite{CohenKMPPRX14} time, respectively, in any $n$-vertex $m$-edge weighted graph, for any two vertices $s,t$. (Throughout the paper, we use $\tilde{O}$ to hide polylogarithmic factors, i.e., $\tilde{O}(f(n))=O(f(n)\cdot \poly\log f(n))$.) There are also algorithms that find $(1+\varepsilon)$-approximations to the effective resistance between every pair of vertices in $\tilde{O}(n^2/\varepsilon)$ time~\cite{JS18:sketch}. In order to exactly compute the $s-t$ (or single-pair) and all-pairs effective resistance(s), the current fastest algorithms run in times $O(n^\omega)$ (by using the fastest matrix inversion algorithm~\cite{bunch1974,ibarra1982}) and $O(n^{2+\omega})$, respectively, where $\omega < 2.373$ is the matrix multiplication exponent~\cite{williams2012}. In planar graphs, the algorithms for exactly computing $s-t$ and all-pairs effective resistance(s) run in times $O(n^{\omega/2})$~(by the nested dissection method for solving linear system in planar graphs \cite{LiptonRT79}) and $O(n^{2+\omega/2})$, respectively.

A natural algorithmic question is how to efficiently maintain the effective resistances \emph{dynamically}, i.e., if the graph undergoes edge insertions and/or deletions, and the goal is to support the update operations and query for the effective resistances as quickly as possible, rather than having to recompute it from scratch each time. Besides the potential applications in the design of other (dynamic) algorithms, it is also of practical interest, e.g., to quickly report the (dis)similarity between any two nodes in a social network in which its members and their relationship are constantly changing. So far our understanding towards this question is very limited: for exact maintenance, the only approach (for single-pair effective resistance) we are aware of is to invoke the dynamic matrix inversion algorithm which gives $O(n^{1.575})$ update time and $O(n^{0.575})$ query time or $O(n^{1.495})$ update time and $O(n^{1.495})$ query time~\cite{San04:dynamic}; for $(1+\varepsilon)$-approximate maintenance, we can maintain the spectral sparsifier of size $n\poly(\log n, \varepsilon^{-1})$ with $\poly(\log n, \varepsilon^{-1})$ update time~\cite{AbrahamDKKP16}, while answering each query will cost $\Theta(n\poly(\log n, \varepsilon^{-1}))$ time. (Subsequent to the Arxiv submission~\cite{GHP18:dynaER} of this paper, Durfee et al. obtained a fully dynamic algorithm that maintains  $(1+\varepsilon)$-approximations to all-pairs effective resistances of an unweighted, undirected multi-graph with $\tilde{O}(m^{4/5}\varepsilon^{-4})$ expected amortized update and query time~\cite{DGGP18:dynER}.) 

In this paper, we study the problem of dynamically maintaining the (approximate) effective resistances in \emph{separable graphs}, which are those that satisfies an $n^c$-separator theorem for some $c<1$. Interesting classes of separable graphs include planar graphs, minor free graphs, bounded-genus graphs, almost planar graphs (e.g., road networks)~\cite{lipton1979separator}, most $3$-dimensional meshes~\cite{MTTV1997separators} and many real-world networks (e.g., phone-call graphs, Web graphs, Internet router graphs)~\cite{BBK03compact}. In the static setting, effective resistances (or electrical flows) in planar/separable graphs have been utilized by Miller and Peng~\cite{MillerP13} to obtain the first $\tilde{O}(\frac{m^{6/5}}{\varepsilon^{\Theta(1)}})$ time algorithm for approximate maximum flow in such graphs, and have also been studied by Anari and Oveis Gharan~\cite{NO15:TSP} in the analysis of an approximation algorithm for Asymmetric TSP. We now give the necessary definitions to state our results.

\paragraph{Effective Resistances.} Let $G=(V,E,\vect{w})$ be a undirected weighted graph with $\vect{w}(e)>0$ for any $e\in E$. Let $\mat{A}$ denote its weighted adjacency matrix and $\mat{D}$ denote the weighted degree diagonal matrix. Let $\mat{L}=\mat{D}-\mat{A}$ denote the \emph{Laplacian} matrix of $G$. Let $\mat{L}^\dagger$ denote the Moore-Penrose pseudo-inverse of the Laplacian of $G$. Let $\1_u \in \R^V$ denote the indicator vector of vertex $u$ such that $\1_u(v)=1$ if $v=u$ and $0$ otherwise. Let $\chi_{s,t}=\1_s-\1_t$. Given any two vertices $u,v\in V$, the \emph{$s-t$ effective resistance} is defined as $R_G(s,t):=\chi_{s,t}^T\mat{L}^\dagger\chi_{s,t}.$

\paragraph{Separable graphs.} Let $\mathcal{C}$ be a class of graphs that is closed under taking subgraphs. We say that $\mathcal{C}$ satisfies a \emph{$f(n)$-separator theorem} if there are constants $\alpha <1$ and $\beta>0$ such that every graph in $S$ with $n$ vertices has a cut set with at most $\beta f(n)$ vertices that separates the graph into components with at most $\alpha n$ vertices each~\cite{lipton1979separator}. 
In this paper we are particularly interested in the class of graphs that satisfies an $n^{1/2}$-separator theorem, which include the class of planar graphs, $K_t$-minor free graphs and bounded-genus graphs, etc., though our approach can also be generalized to other class of graphs that satisfies a $n^{c}$-separator theorem, for some $c<1$. In the following, we call a graph \emph{$f(n)$-separable} if it is a member of a class that satisfies an $f(n)$-separator theorem.

\paragraph{\ } We would like to quickly maintain the exact or a good approximation of the $s-t$ effective resistances in a $\sqrt{n}$-separable graph that undergoes edge insertions and deletions, for all pairs $s,t\in V$. We call this the \emph{dynamic all-pairs effective resistances problem.} Our goal is to solve this problem with both small update and query times. More precisely, our data structure supports the following operations. 
\begin{itemize}
	\item \textsc{Insert}$(u,v,w)$: Insert the edge $(u,v)$ of weight $w$ to $G$, provided that the updated graph remains $\sqrt{n}$-separable.
	\item \textsc{Delete}$(u,v)$: Delete the edge $(u,v)$ from $G$.   
	\item \textsc{EffectiveResistance}$(s,t)$: Return the exact or approximate value of the effective resistance between $s$ and $t$ in the current graph $G$.
\end{itemize}  

We remark that our algorithm can be extended to handle operations \textsc{Increase}($u,v,\Delta$) and \textsc{Decrease}($u,v,\Delta$) that increases and decreases the weight of any existing edge $(u,v)$ by $\Delta$, respectively, as one can simply delete the edge first and then insert it again with the corresponding new weight. For our lower bound, we will consider the \emph{incremental} (or \emph{decremental}) $s-t$ effective resistance problem, that is, $s,t$ are two vertices fixed at the beginning, and only operations \textsc{Insert} \& \text{Decrease} (or \textsc{Delete} \& \textsc{Increase}) and \textsc{EffectiveResistance} are allowed. The basic idea is that in the incremental (or decremental) setting, the effective resistances are monotonically decreasing (or increasing)~(see e.g., \cite{ChristianoKMST11}). 
For any $\varepsilon \in (0,1)$, we say that an algorithm is a $(1+\varepsilon)$-approximation to $R_G(s,t)$ if \textsc{EffectiveResistance}$(s,t)$ returns a positive number $k$ such that $(1-\varepsilon) \cdot R_G(s,t) \leq k \leq (1+\varepsilon) R_G(s,t)$. 

\subsection{Our Results}
We give a fully dynamic algorithm for maintaining $(1+\varepsilon)$-approximations of all-pairs and single-pair effective resistance(s) with small update and query times for any $\sqrt{n}$-separable graph, if its separator can be computed fast. Throughout the paper, all the running times of our algorithms are measured in \emph{worst-case} performance. All our algorithms are randomized, and the performance guarantees hold with probability at least $1-n^{-c}$ for some $c\geq 1$. Specifically, we show the following theorem. 
\begin{theorem}\label{thm:upperbound}
	Let $G$ denote a dynamic $n$-vertex graph under edge insertions and deletions. Assume that $G$ is $\sqrt{n}$-separable and its separator can be computed in $s(n)$ time, throughout the updates. There exist fully dynamic algorithms that maintain  $(1+\varepsilon)$-approximations of 
	\begin{itemize}
		\item the all-pairs effective resistances with $\tilde{O}(\frac{\sqrt{n}}{\varepsilon^2}+\frac{s(n)}{\sqrt{n}})$ update time and $\tilde{O}(\frac{\sqrt{n}}{\varepsilon^2})$ query time;
		\item the $s-t$ effective resistance with $\tilde{O}(\frac{\sqrt{n}}{\varepsilon^2}+\frac{s(n)}{\sqrt{n}})$ update time and $O(1)$ query time.
	\end{itemize} 
	In particular, if $s(n)=\tilde{O}(n)$, then our update times are $\tilde{O}(\frac{\sqrt{n}}{\varepsilon^2})$.
\end{theorem}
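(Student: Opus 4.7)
The plan is to reduce Theorem~\ref{thm:upperbound} to maintaining an approximate Schur complement onto a $\sqrt{n}$-separator of $G$, after which queries reduce to a small Laplacian solve. First, I would compute a separator $S$ of $G$ with $|S|=O(\sqrt{n})$ in time $s(n)$, decomposing $G$ into pieces $G_1,\dots,G_k$ of size at most $\alpha n$. The core subroutine I would appeal to is the dynamic approximate Schur complement for separable graphs announced in the abstract, which maintains a graph $\tilde H$ on $S$ of size $\tilde O(\sqrt{n}/\varepsilon^2)$ whose pairwise effective resistances $(1\pm\varepsilon)$-approximate those of $G$ restricted to $S$. Edge insertions and deletions update $\tilde H$ in $\tilde O(\sqrt{n}/\varepsilon^2)$ worst-case time; the additive $s(n)/\sqrt{n}$ term comes from recomputing the separator once every $\Theta(\sqrt{n})$ operations and spreading that work across the intervening updates in the standard de-amortized fashion.

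Given $\tilde H$, the query $R_G(s,t)$ splits into two cases. If both $s,t\in S$, directly compute $R_{\tilde H}(s,t)$ with a nearly-linear Laplacian solver on $\tilde H$ in $\tilde O(\sqrt{n}/\varepsilon^2)$ time; correctness is immediate from the defining property of an approximate Schur complement. Otherwise, say $s\in V(G_i)\setminus S$, I would invoke the static approximate Schur complement algorithm on the piece $G_i$ with terminal set $B_i\cup\{s\}$, where $B_i=V(G_i)\cap S$; this produces a sketch of $G_i$ on $\tilde O(\sqrt{n}/\varepsilon^2)$ vertices which, glued to $\tilde H$ along the shared boundary $B_i$, yields an approximate Schur complement of $G$ onto $S\cup\{s,t\}$. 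A single Laplacian solve on this enlarged sketch returns the answer. The per-piece static Schur complement is fast because $G_i$ is itself $\sqrt{n}$-separable, so a nested-dissection style argument gives time $\tilde O(\sqrt{n}/\varepsilon^2)$. For the $O(1)$-query variant on a fixed pair $(s,t)$, after each update simply rerun the query procedure once and cache the answer; this adds $\tilde O(\sqrt{n}/\varepsilon^2)$ to the update cost, which is absorbed into the stated bound, and queries become $O(1)$ lookups.

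The main obstacle is the dynamic Schur complement subroutine itself: a single edge update can in principle alter the Schur-complement contribution of an entire piece of size $\Theta(n)$, so maintaining $\tilde H$ in $\tilde O(\sqrt{n})$ worst-case time requires a recursive separator hierarchy on which piece-level Schur complements are cached at each level, with updates traversing only a single root-to-leaf path, combined with random-walk-based Schur-complement sparsification to keep per-level sizes at $\tilde O(\sqrt{n}/\varepsilon^2)$. A secondary concern is error composition: the approximation factor of $\tilde H$ combines with that of the query-time extension, so each should be built with slack $\varepsilon'=\Theta(\varepsilon)$; once the appropriate combining lemma for Schur complements is in place, this is routine.
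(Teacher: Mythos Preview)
Your overall architecture matches the paper's: maintain a dynamic approximate Schur complement onto an $O(\sqrt{n})$-sized boundary via a recursive separator tree with cached per-node ASCs, rebuild every $\Theta(\sqrt{n})$ operations to amortize the $s(n)$ separator cost, and for the fixed-pair variant cache the answer after each update. So far so good.

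The gap is in your query procedure. You propose, when $s\notin S$, to ``invoke the static approximate Schur complement algorithm on the piece $G_i$ with terminal set $B_i\cup\{s\}$'' and claim this runs in $\tilde O(\sqrt{n}/\varepsilon^2)$ by a nested-dissection argument. But after removing a single separator the pieces $G_i$ have size $\Theta(n)$, not $O(\sqrt{n})$; any from-scratch computation on $G_i$ must touch $\Omega(n)$ edges, and nested dissection does not change that. Separately, gluing an ASC of $G_i$ onto $B_i\cup\{s\}$ to $\tilde H$ double-counts $G_i$'s contribution, since $\tilde H$ already encodes the Schur complement of all of $G$ (including $G_i$) onto $S$.

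The paper avoids both problems by reusing the very hierarchy you mention in your ``main obstacle'' paragraph. It supports an \textsc{AddTerminal}$(u)$ operation: walk the unique root-to-leaf path in the separator tree to the leaf containing $u$, add $u$ to the boundary set of each node along the way, and recompute those $O(\log n)$ ASCs bottom-up. This costs $\tilde O(\sqrt{n}/\varepsilon^2)$ and never touches a $\Theta(n)$-sized piece directly. A query calls \textsc{AddTerminal}$(s)$ and \textsc{AddTerminal}$(t)$, reads off $R_{G'}(s,t)$ from the root's ASC $G'$ via a near-linear Laplacian solve, and then undoes the two terminal additions so boundary sizes do not grow across queries. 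One more detail: because errors compose multiplicatively over $O(\log n)$ tree levels, the per-level slack must be $\Theta(\varepsilon/\log n)$, not $\Theta(\varepsilon)$; this is absorbed by the $\tilde O$.
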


By using the well known facts that a balanced separator of size $O(\sqrt{n})$ for planar graphs (and bounded-genus graphs) can be computed in $O(n)$ time~\cite{lipton1979separator}, and for $K_t$-minor-free graphs (for any fixed integer $t>0$) in $O(n^{1+\delta})$ time, for any constant $\delta > 0$~\cite{kawar2010}, we obtain dynamic algorithms for the effective resistances for planar and minor-free graphs with $\tilde{O}(\sqrt{n}/\varepsilon^2)$ and $\tilde{O}(\sqrt{n}/\varepsilon^2+n^{1/2+\delta})$ update time, respectively. 

The performance of our dynamic algorithm in planar graphs almost matches the best-known dynamic algorithm for $(1+\varepsilon)$-approximate all-pairs shortest path in planar graphs with $\tilde{O}(\sqrt{n})$ update and query time~\cite{AbrahamCG12}, though our approaches are different. This is interesting as the shortest path corresponds to flows with controlled $\ell_1$ norm while the energy of electrical flows (i.e., effective resistance) corresponds to those with minimum $\ell_2$ norm.     

In order to design a dynamic algorithm for effective resistances of separable graphs (i.e., to prove Theorem~\ref{thm:upperbound}), we give a fully dynamic algorithm that efficiently maintains an \emph{approximate Schur complement}~\cite{KyngLPSS16,KyngS16,DurfeeKPRS16} of such graphs (see Section~\ref{sec:dynamic_ASC}), which might be of independent interest.
Approximate Schur complement can be treated as a \emph{vertex sparsifier} that preserves pairwise effective resistances among a set of terminals (see Section~\ref{sec:asc_property}). Therefore, our algorithm is a dynamic algorithm for \emph{vertex effective resistance sparsifiers} with sublinear (in $n$) update time for separable graphs. The problem of dynamically maintaining graph \emph{edge sparsifiers} has received attention very recently. For example, Abraham et al. presented fully dynamic algorithms that maintain cut and spectral sparsifiers with poly-logarithmic update times~\cite{AbrahamDKKP16}. Formally, we prove the following theorem. 

\begin{theorem}\label{thm:dynamic_asc}
	For an $n$-vertex $\sqrt{n}$-separable graph $G$ whose separator can be computed in $s(n)$ time, and a terminal set $K \subseteq V$ with $|K|\leq O(\sqrt{n})$, there exists a fully dynamic algorithm that maintains a $(1+\delta)$-approximate Schur complement with respect to $K'$ such that $K \subseteq K'$ and $|K'|=O(\sqrt{n})$, while achieving $\tilde{O}(\sqrt{n}/\delta^{2}+\frac{s(n)}{\sqrt{n}})$ update time. 
	Furthermore, our algorithm supports terminal additions as long as $|K|\leq O(\sqrt{n})$.
\end{theorem}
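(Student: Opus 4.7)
The plan is to combine a recursive $\sqrt{n}$-separator decomposition of $G$ with the near-linear-time approximate Schur complement algorithm from the static literature on Laplacian solvers (Kyng et al., Durfee et al.), which given any weighted graph $H$ with terminal set $T$ produces a $(1+\delta)$-approximate Schur complement with $\tilde{O}(|T|/\delta^2)$ edges in time $\tilde{O}(|E(H)|/\delta^2)$. I would first build a separator tree $\mathcal{T}$ of depth $O(\log n)$: at each node $u$ with vertex set $V_u$, invoke the separator oracle in $s(|V_u|)$ time to obtain a balanced separator $S_u$ of size $O(\sqrt{|V_u|})$ whose removal partitions $G[V_u]$ into $O(1)$ pieces of size at most $\alpha |V_u|$, which become the children of $u$. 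For each node define a boundary $B_u := S_u \cup (K \cap V_u) \cup (V_u \cap N_G(S_{\mathrm{par}(u)}))$; by the separator bound and the $O(n)$-edge sparsity of $\sqrt{n}$-separable graphs, $|B_u| = O(\sqrt{|V_u|} + |K \cap V_u|)$.

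At every node $u$ I would maintain $H_u$, a $(1+\delta/\log n)$-approximate Schur complement of $G[V_u]$ onto $B_u$, stored as a sparse graph with $\tilde{O}(\sqrt{|V_u|}/\delta^2)$ edges. Two structural facts drive the analysis: (i) Schur complements are transitive under vertex elimination and spectral approximations compose multiplicatively, so chaining per-level error $\delta/\log n$ across $O(\log n)$ levels yields an overall $(1+\delta)$-approximation onto $B_{\mathrm{root}} = K \cup S_{\mathrm{root}} =: K'$ with $|K'| = O(\sqrt{n})$; and (ii) the exact Schur complement of $G[V_u]$ onto $B_u$ coincides with the Schur complement of the graph obtained by gluing together the (exact) Schur complements of $u$'s children onto their respective boundaries. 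Property (ii) is what makes sublinear updates possible: it lets me recompute $H_u$ from the sparse children's Schur complements $H_{c_1},\dots,H_{c_k}$ \emph{without scanning the interior of $V_u$}, feeding a combined input of size $\tilde{O}(\sqrt{|V_u|}/\delta^2)$.

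On an edge insertion or deletion of $e=(x,y)$, I would locate the root-to-leaf path of $\mathcal{T}$ whose pieces contain $e$ and walk up, at each node $v$ recomputing $H_v$ from its (already refreshed) children. The rebuild at level $\ell$ costs $\tilde{O}(\sqrt{\alpha^\ell n}/\delta^2)$, and summing the geometric series yields the claimed $\tilde{O}(\sqrt{n}/\delta^2)$ term. The additive $s(n)/\sqrt{n}$ term arises from periodic separator recomputation: a subtree of size $r$ may drift out of balance under adversarial updates, so I would rebuild such a subtree once every $\Theta(\sqrt{r})$ updates touching it, spreading the $s(r)$ construction cost across the intervening updates to remain worst-case; this contributes $s(r)/\sqrt{r}$ per update, dominated by the root's $s(n)/\sqrt{n}$. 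Terminal additions are absorbed by the same rebuild mechanism, as long as $|K|=O(\sqrt{n})$ throughout.

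The main obstacle, in my view, is twofold: first, rigorously justifying compositionality of approximate Schur complements across the recursion so that per-level spectral errors multiply to $(1+\delta)$ rather than accumulating additively, which requires a careful statement of the Schur complement as a spectral approximation of a smaller Laplacian together with the fact that spectral approximations compose under union with shared vertex sets; second, maintaining worst-case balance of the separator tree under adversarial updates, which I would handle via the standard technique of running a background rebuild on a shadow copy of an affected subtree and swapping it in once complete.
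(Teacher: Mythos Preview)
Your plan is essentially the paper's own approach: a separator tree of depth $O(\log n)$, an approximate Schur complement stored at every node with per-level error $\delta/\Theta(\log n)$, the transitivity and composability lemmas to propagate ASCs up the tree, recomputation along a single root-to-leaf path per update, periodic rebuilding amortized over $\Theta(\sqrt{n})$ updates, and a two-copy background rebuild for worst-case bounds.

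Two technical points deserve correction. First, your boundary definition and the claim $|B_u|=O(\sqrt{|V_u|})$ are not right: to make composability work one needs $\bigcup_i B_{\child_i(u)}\supseteq B_u$, which forces $B_u$ to recursively include the \emph{entire} ancestor boundary intersected with $V_u$, not just the parent separator's neighbors; with that recursive definition the boundary of a deep node can be $\Theta(\sqrt{n})$, not $\Theta(\sqrt{|V_u|})$, so your geometric-sum argument is spurious (the correct accounting is simply $O(\log n)$ recomputations each of cost $\tilde{O}(\sqrt{n}/\delta^2)$). Second, an inserted edge $(x,y)$ may have $x$ and $y$ in \emph{different} children of some node $N$, so there is no single root-to-leaf path ``containing $e$''; the paper handles this by storing such edges in a buffer $\buffer(N)$ at the branching node, promoting $x,y$ to boundary vertices there and below, and including $\buffer(N)$ when recomputing $N$'s ASC. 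Without this mechanism your composability step (ii) breaks for inserted edges. Both fixes are local and do not change the overall scheme.
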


We complement our algorithm by giving a conditional lower bound for any \emph{incremental} or \emph{decremental} algorithm that maintains \emph{single-pair} effective resistance of a $\sqrt{n}$-separable graph. Our lower bound is established from the \emph{Online Matrix Vector Multiplication} (\emph{OMv}) \emph{conjecture} (see Section \ref{sec:conjecture}).
\begin{theorem}\label{thm:lowerbound_separable}
	No incremental or decremental algorithm can maintain the (exact) $s-t$ effective resistance in $\sqrt{n}$-separable graphs on $n$ vertices with both $O(n^{\frac12-\delta})$ worst-case update time and $O(n^{1-\delta})$ worst-case query time for any $\delta >0$, unless the OMv conjecture is false.
\end{theorem}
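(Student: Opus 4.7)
The plan is to reduce from the Online Boolean Matrix-Vector Multiplication problem (OuMv): after preprocessing an $N\times N$ Boolean matrix $M$, an algorithm is presented with $N$ online pairs $(u_t,v_t)\in\{0,1\}^N\times\{0,1\}^N$ and must return the bit $u_t^\top M v_t$ in each round. By the reductions from OMv recalled in Section~\ref{sec:conjecture}, total time $O(N^{3-\delta})$ for any $\delta>0$ would refute the OMv conjecture. I set $N=\Theta(\sqrt n)$, so $N^2=\Theta(n)$.

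In the preprocessing step I would build a planar weighted graph $G^M$ on $\Theta(n)$ vertices encoding $M$: take a planar grid with row-rails $r_1,\dots,r_N$ and column-rails $c_1,\dots,c_N$, and place at each crossing $(i,j)$ a constant-size resistor gadget whose $(r_i,c_j)$ effective conductance is large if $M_{ij}=1$ and negligible if $M_{ij}=0$. Two distinguished terminals $s,t$ are attached through $2N$ activation edges $a_1,\dots,a_N$ (from $s$ to the row-rails) and $b_1,\dots,b_N$ (from the column-rails to $t$). Weights are chosen so that, when precisely the activation edges indexed by $\supp(u)$ and $\supp(v)$ are in place, the $s$-$t$ effective resistance lies in one of two disjoint, a-priori-computable intervals that separate the cases $u^\top M v=0$ and $u^\top M v\ge 1$. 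Since this graph is a constant-size decoration of a planar grid it is $\sqrt n$-separable, and inserting the $O(\sqrt n)$ activation edges inflates the separator by only an additive $O(\sqrt n)$ term, so the graph remains $\sqrt n$-separable throughout.

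To simulate round $t$ of OuMv with an incremental data structure I keep a single copy of $G^M$ and, in round $t$, insert the activation edges for $\supp(u_t)\cup\supp(v_t)$ with conductance $W^t$ for a sufficiently large $W=\poly(n)$; the activation edges inserted in earlier rounds are still present but carry geometrically smaller conductances and hence contribute only a closed-form, negligible amount to $R_{G^M}(s,t)$. A dominant-scale analysis shows that the signal at scale $W^t$ is recoverable with additive error $O(W^{-1})$, which is enough for the \emph{exact} algorithm to decode the bit $u_t^\top M v_t$ by thresholding. For the decremental lower bound I apply the mirror construction: all $2N$ activation edges are present at round-dependent geometric conductances at the start, and round $t$ deletes those not in $\supp(u_t)\cup\supp(v_t)$; the same dominant-scale argument goes through.

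Granted an incremental (or decremental) algorithm with $O(n^{1/2-\delta})$ worst-case update and $O(n^{1-\delta})$ worst-case query time, each OuMv round costs $O(\sqrt n)\cdot O(n^{1/2-\delta})+O(n^{1-\delta})=O(n^{1-\delta})$, so the $N=\Theta(\sqrt n)$ rounds cost $O(n^{3/2-\delta})=O(N^{3-2\delta})$ in total, contradicting OuMv. The main technical obstacle will be the gadget design: realizing a planar resistor network whose $s$-$t$ effective resistance is a cleanly \emph{separating} statistic for the Boolean bilinear form $u^\top M v$ even after monotone activations from earlier rounds have accumulated, given that the underlying bipartite row-times-column topology is itself nonplanar in general. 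I expect to resolve this with a constant-size planar crossover gadget combined with the two widely separated conductance scales above, so that a single $1$-entry of $M$ witnessed by an active row $i$ and an active column $j$ at the current scale creates a near-short-circuit between $s$ and $t$ that dominates every parallel contribution from previous rounds.
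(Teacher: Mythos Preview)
Your plan has a real gap and also works much harder than necessary.

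\textbf{The gap.} The theorem is about $\sqrt{n}$-separable graphs, yet you commit to building a \emph{planar} graph and then have to push the entire $N\times N$ row--column bipartite pattern through ``a constant-size planar crossover gadget.'' For resistor networks this is not a detail you can defer: a planar $4$-terminal response matrix (the Schur complement onto the four boundary nodes in cyclic order) cannot realise the ``two independent crossing wires'' response, so you will not get clean rails that pass through each other without leakage. Any approximate crossover will couple every row-rail to every column-rail it crosses, and with $N^2$ crossings these leakages aggregate into the very statistic you are trying to threshold. You acknowledge this as ``the main technical obstacle'' but offer no construction; as stated, the reduction is not complete. The multi-round trick with conductances $W^t$ compounds the difficulty: after $\Theta(\sqrt n)$ rounds the weights reach $W^{\Theta(\sqrt n)}$, and your ``dominant-scale'' separation has to beat an accumulation of $\Theta(n)$ parallel leakage paths at each earlier scale through an uncontrolled planar mesh.

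\textbf{How the paper avoids all of this.} The paper never aims for planarity; it builds a graph that is $\sqrt n$-separable \emph{by fiat}. With $n_0=\Theta(\sqrt n)$ it creates vertices $a_{ij},b_{ij}$ for all $i,j$, adds $(a_{ij},b_{ij})$ iff $M_{ij}=1$, hangs each $a_{ij}$ off a row vertex $u_i$ and each $b_{ij}$ off a column vertex $v_j$, and adds $s,t$. Then $\{u_1,\dots,u_{n_0},v_1,\dots,v_{n_0},s,t\}$ is a balanced separator of size $O(\sqrt n)$, with no crossover gadget needed. The paper also uses only a \emph{single} round of $\vect u\mat M\vect v$ (Theorem~\ref{thm:umv_hard}), so there is no geometric-weight bookkeeping across rounds: when $\vect u,\vect v$ arrive, it inserts at most $2n_0$ edges from $t$ to the selected $u_i,v_j$ and pads all degrees to a common value $\kappa$ via edges to $s$. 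The decoding is algebraic rather than gadget-based: with $H=G\setminus\{s\}$ one has $R_G(s,t)=(\kappa\mat I-\mat A_H)^{-1}_{tt}$, and the Neumann expansion shows the coefficient of $\kappa^{-6}$ is (up to sign) the number of $5$-cycles through $t$, which is positive iff $\vect u^\top\mat M\vect v=1$. One query and $O(\sqrt n)$ updates then contradict the $\vect u\mat M\vect v$ time bound exactly at the claimed thresholds.

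In short: drop planarity in favour of an explicit $O(\sqrt n)$ separator, reduce from the single-shot $\vect u\mat M\vect v$ problem instead of multi-round OuMv, and replace the unspecified gadget analysis with the walk-counting identity $R_G(s,t)=((\kappa\mat I-\mat A)^{-1})_{tt}$.
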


We note that there are very few conditional lower bounds for dynamic \emph{planar/separable} graphs, as most known reductions are highly non-planar. The only recent result that we are aware of is by Abboud and Dahlgaard~\cite{AbboudD16}, who showed that under some popular conjecture, no algorithm for dynamic shortest paths or maximum weight bipartite matching in planar graphs has both updates and queries in amortized $O(n^{1/2-
	\delta})$ time, for any $\delta>0$.

We also give a stronger conditional lower bound for the same problem in \emph{general} graphs, which shows that it is hard to maintain effective resistances with both sublinear (in $n$) update and query times for general graphs, even for the incremental or decremental setting. 
\begin{theorem}\label{thm:lowerbound_general}
	No incremental or decremental algorithm can maintain the (exact) $s-t$ effective resistance in general graphs on $n$ vertices with both $O(n^{1-\delta})$ worst-case update time and $O(n^{2-\delta})$ worst-case query time for any $\delta >0$, unless the OMv conjecture is false.
\end{theorem}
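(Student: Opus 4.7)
The plan is a reduction from the OuMv variant of the Online Matrix--Vector Multiplication conjecture, which asserts that after polynomial preprocessing of a Boolean matrix $M\in\{0,1\}^{n\times n}$, no algorithm can answer $n$ online query pairs $(u_\tau,v_\tau)\in\{0,1\}^n\times\{0,1\}^n$ by outputting the Boolean inner products $u_\tau^\top M v_\tau$ in total time $O(n^{3-\varepsilon})$ for any $\varepsilon>0$; this is known to follow from the OMv conjecture. Given a data structure that maintains the exact $s$-$t$ effective resistance on an $n_G$-vertex general graph under monotone (incremental or decremental) updates with worst-case update time $u(n_G)=O(n_G^{1-\delta})$ and query time $q(n_G)=O(n_G^{2-\delta})$, I would build an $O(n)$-vertex graph $G_M$ encoding $M$ and simulate each OuMv round using $O(n)$ dynamic updates and a single $R(s,t)$ query; the total simulation cost is then $O(n^2)\cdot u(n)+n\cdot q(n)=O(n^{3-\delta})$, contradicting OuMv.

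The construction encodes $M$ as a weighted bipartite subgraph on row vertices $\{\ell_i\}_{i=1}^n$ and column vertices $\{r_j\}_{j=1}^n$, with a unit-weight edge $(\ell_i,r_j)$ whenever $M_{ij}=1$. To accommodate a purely decremental (or dually, incremental) sequence of updates over $n$ rounds, I add two fixed terminals $s,t$ together with, for each round $\tau\in[n]$, a pair of ``slot'' vertices $\pi_\tau,\rho_\tau$ and heavy-weight gateway edges $(s,\pi_\tau)$, $(\rho_\tau,t)$, $(\pi_\tau,\ell_i)$ for all $i$, and $(r_j,\rho_\tau)$ for all $j$; the resulting initial graph has $\Theta(n)$ vertices and $\Theta(n^2)$ edges. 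At round $\tau$ I perform $O(n)$ deletions in two groups: first I delete $(s,\pi_{\tau-1})$, $(\rho_{\tau-1},t)$, and all remaining $(\pi_{\tau-1},\ell_i)$, $(r_j,\rho_{\tau-1})$, which fully isolates the previous slot, and then I delete $(\pi_\tau,\ell_i)$ for each $u_{\tau,i}=0$ and $(r_j,\rho_\tau)$ for each $v_{\tau,j}=0$ so that the only active slot encodes the current query. A single $R(s,t)$ query then recovers $u_\tau^\top M v_\tau$, as described next.

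The principal technical obstacle is the gadget design: the exact rational value of $R(s,t)$ must distinguish the case $u_\tau^\top M v_\tau=0$ from $u_\tau^\top M v_\tau>0$ despite possible leakage of electrical flow through inactive row or column vertices that still carry matrix edges. I would resolve this by choosing the gateway weight to be a sufficiently large polynomial in $n$, so that in the closed-form expression for $R(s,t)$ the contribution of the ``direct'' routes $s$--$\pi_\tau$--$\ell_i$--$r_j$--$\rho_\tau$--$t$ with $u_{\tau,i}=v_{\tau,j}=M_{ij}=1$ dominates in order of magnitude over any longer leakage path through inactive vertices. With the weight polynomial large enough, the exact value of $R(s,t)$ falls into two disjoint, precomputable rational sets according to whether $u_\tau^\top M v_\tau>0$ or $=0$, and a simple threshold test in $O(1)$ time per query extracts the Boolean answer. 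Combining this per-round simulation with the arithmetic accounting above and invoking the OuMv lower bound yields the claimed impossibility of simultaneously achieving $u(n)=O(n^{1-\delta})$ and $q(n)=O(n^{2-\delta})$.
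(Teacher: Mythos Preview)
Your reduction has a genuine gap in the multi-round slot construction. In the decremental version you describe, at round~$\tau$ the slots $1,\dots,\tau-1$ are isolated, slot~$\tau$ is pruned according to $(u_\tau,v_\tau)$, but the slots $\tau+1,\dots,n$ are still \emph{fully} connected: every future $\pi_{\tau'}$ is adjacent to $s$ and to \emph{all} $\ell_i$, and every future $\rho_{\tau'}$ is adjacent to $t$ and to \emph{all} $r_j$. Hence there are $s$--$t$ paths $s\text{--}\pi_{\tau'}\text{--}\ell_i\text{--}r_j\text{--}\rho_{\tau'}\text{--}t$ of the \emph{same length} as your ``direct'' routes, for every pair $(i,j)$ with $M_{ij}=1$, irrespective of $u_\tau,v_\tau$. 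Your dominance argument only handles \emph{longer} leakage paths; these equal-length parallel paths through future slots make $R(s,t)$ depend essentially on $M$ alone and not on the current query vectors, so the threshold test cannot recover $u_\tau^\top M v_\tau$. The incremental ``dual'' has the symmetric problem with accumulated past slots.

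The paper sidesteps this difficulty entirely by reducing from the \emph{single-pair} $\vect{u}\mat{M}\vect{v}$ problem (Theorem~\ref{thm:umv_hard}) rather than the $n$-round OuMv: one batch of $O(n)$ insertions after $(\vect{u},\vect{v})$ arrives, followed by a single query, already yields the same $O(n^{1-\delta})$/$O(n^{2-\delta})$ tradeoff, and no slot machinery is needed. Its gadget is also sharper than your large-weight heuristic: after inserting edges from $t$ to the selected rows and columns, it adds weighted edges from a fresh vertex $s$ to every other vertex so that each vertex has weighted degree exactly $\kappa$; then the Laplacian block on $V\setminus\{s\}$ equals $\kappa\mat{I}-\mat{A}$, and one has the exact identity $R_G(s,t)=((\kappa\mat{I}-\mat{A})^{-1})_{tt}=\sum_{i\ge 0}(-1/\kappa)^{i}(\mat{A}^{i})_{tt}/\kappa$. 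Truncating this Neumann series at $i=3$ shows that a triangle through $t$ (equivalently $\vect{u}^\top\mat{M}\vect{v}=1$) shifts $R_G(s,t)$ by a precisely quantified amount, giving an explicit threshold rather than an asymptotic separation-of-scales argument.
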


We remark that both lower bounds for separable and general graphs hold for any algorithm with sufficiently high accurate approximation ratio, and both lower bounds for incremental algorithms hold even if only edge insertions are allowed (see Section~\ref{sec: lowerBound}). 

\paragraph*{Comparison to~\cite{GHP17:sparsifier}}
In our previous work~\cite{GHP17:sparsifier}, we gave a fully dynamic algorithm for $(1+\varepsilon)$-approximating all-pairs effective resistances for planar graphs with $\tilde{O}(r/\varepsilon^2)$ update time and $\tilde{O}((r+n/\sqrt{r})/\varepsilon^2)$ query time, for any $r$ larger than some constant. The algorithm can also be generalized to $\sqrt{n}$-separable graphs, and we also provided a conditioned lower bound for any approximation algorithm of the $s-t$ effective resistance in general graphs in the \emph{vertex-update} model. However, besides the apparent improvement of the performance of the dynamic algorithm (i.e., we reduce the best trade off between update time and query time from $\tilde{O}(n^{2/3})$ and $\tilde{O}(n^{2/3})$ to $\tilde{O}(n^{1/2})$ and $\tilde{O}(n^{1/2})$), our current work also improves over and differs from~\cite{GHP17:sparsifier} in the following perspectives.
\begin{itemize}
\item Our algorithm dynamically maintains the approximate Schur complement of a separable graphs by maintaining a separator tree of such graphs, rather than their \emph{$r$-divisions} as used in~\cite{GHP17:sparsifier}. In fact, we do not believe purely $r$-divisions based algorithms will achieve the performance as guaranteed by our new algorithm. This is evidenced by previous dynamic algorithms for maintaining reachability in directed planar graphs by Subramanian \cite{Subramanian93}, $(1+\varepsilon)$-approximating to all-pairs shortest paths by~Klein and Subramanian~\cite{KleinS98}, exactly maintaining $s-t$ max-flow in planar graphs by Italiano et al.~\cite{ItalianoNSW11}, all of which are based on $r$-divisions and have running times of order $n^{2/3}$ (and some of which have been improved by using other approaches). 
\item Our current lower bound is much stronger than the previous one: the previous lower bound only holds for general graphs and the \emph{vertex-update} model, where nodes, not edges, are turned on or off, and its proof was based on a simple relation between $s-t$ connectivity and $s-t$ effective resistance $R_G(s,t)$ (i.e., if $s,t$ is connected iff $R_G(s,t)$ is not infinity). In contrast, our new lower bounds hold for separable graphs (and also general graphs) and the edge-update model. The corresponding proofs exploit new reductions from the OMv problem to the $5$-length cycle detection and triangle detection problems in separable graphs and general graphs, respectively, which might be of independent interest, and the latter problems are related to the effective resistances (see Section~\ref{sec:proof_lower_separable}).    
\end{itemize}


\subsection{Our Techniques}
Our dynamic algorithm for maintaining an Approximate Schur complement (ASC) w.r.t. a set of terminals for separable graphs is built upon maintaining a \emph{separator tree} of such graphs and two properties (called \emph{transitivity} and \emph{composability}) of ASCs. Such a tree can be constructed very efficiently by recursively partitioning the subgraphs using separators. Slightly more formally, each node in the tree corresponds to a subgraph of the original graph and contains a subset of vertices as its boundary vertices which in turn are treated as terminals. For each node $H$, we will maintain an ASC $H'$ of $H$ w.r.t its terminals. We will guarantee throughout all the updates that the ASC of any node can be computed efficiently in a bottom-up fashion, by the above two properties of ASCs. This stems from the fact that we only need to recompute the ASCs of nodes that lie on a path from a \emph{constant} number leaves to the node of interest. Since each such path has length $O(\log n)$ and the recomputation of ASC of one node takes time $\tilde{O}(\sqrt{n})$, the update time will be guaranteed to be $\tilde{O}(\sqrt{n})$. For the detailed implementation, we need to overcome the difficulty that the error in the approximation ratio might accumulate through this recursive computation and an update might require to change the set of boundary vertices of many nodes, thus resulting in a prohibitive running time. We remark that though the idea of using separator tree of planar/separable graphs is standard~(e.g.,~\cite{EppsteinGIS96}), the main novelty of our algorithm is to use such a tree as the backbone to dynamically maintain the approximate Schur complement.

To obtain our dynamic algorithms for all-pairs effective resistance, we appropriately declare and add new terminals whenever we get a new query, and then run the above dynamic algorithm for ASC with respect to the corresponding terminal set.

To obtain our lower bound, we provide new reductions from the Online Boolean Matrix-Vector Multiplication (OMv) problem to the incremental or decremental single-source effective resistance problem. More specifically, given an OMv instance with vectors $\vect{u,v}$ and a matrix $\mat{M}$, we construct a $\sqrt{n}$-separable graph $G$ such that $\vect{u}\mat{M}\vect{v}=1$ if and only if there exists a cycle of length $5$ incident to some vertex $t$ in $G$. This $5$-length cycle detection problem in turn can be solved by inspecting the diagonal entry corresponding to $t$ of the inverse of a matrix that is defined from $G$. Furthermore, the diagonal entry of this matrix is inherently related to the effective resistance~\cite{MNSUW17:spectrum}. By appropriately dynamizing the graph $G$ and using the time bounds for the OMv problem from the conjecture, we get the conditional lower bound for separable graphs. For general graphs, the lower bound is proved in a similar way, except that the constructed graph is different and we instead use a relation between effective resistance and triangle detection problem. That is, we first reduce the OMv problem to the $t$-triangle detection problem such that the OMv instance satisfies $\vect{u}\mat{M}\vect{v}=1$ if and only if there exists a triangle incident to some vertex $t$ in the constructed $G$. The latter problem can again be solved by checking the diagonal entry corresponding to $t$ of some matrix, which in turn encodes the effective resistance of between $t$ and a properly specified vertex $s$.

\paragraph*{Other Related Work.}
Previous work on dynamic algorithms for planar or plane graphs include: shortest paths~\cite{KleinS98,AbrahamCG12,ItalianoNSW11}, $s-t$ min-cuts/max-flows~\cite{ItalianoNSW11}, reachability in directed graphs~\cite{Subramanian93,IKLS17decremental,DS07dynamic}, ($k$-edge) connected components~\cite{EppsteinGIS96,HIKLRS2017contracting}, the best swap and the minimum spanning forest~\cite{EppsteinGIS96}. There also exist work on dynamic algorithms for $\sqrt{n}$-separable graphs, e.g., on transitive closure and $(1+\varepsilon)$-approximation of all-pairs shortest paths~\cite{karczmarz18}.

As mentioned before, subsequent to our Arxiv submission, Durfee et al.~\cite{DGGP18:dynER} obtained a dynamic all-pairs effective resistances algorithm with $\tilde{O}(m^{4/5}\varepsilon^{-4})$ expected amortized update and query time, against an oblivious adversary. This algorithm uses ideas stemmed from this paper, in particular, one of their key ideas is to dynamically maintain an approximate Schur complement. If restricted to separable graphs, the running times of their algorithm are worse than ours. It is also interesting to note that for the (simpler) offline dynamic effective resistance problems, i.e., the sequence of updates and queries are given as an input, Li et al.~\cite{LPYZ18} recently gave an incremental algorithm with $O(\frac{\poly\log n}{\varepsilon^2})$ amortized update and query time for general graphs.

\section{Preliminaries} \label{sec: prelim}


\subsection{Properties of Separable Graphs} \label{sec: sepTree}
\paragraph*{Separator Trees.} Let $G=(V,E)$ be a sparse, $O(\sqrt{n})$-separable graph. For an edge-induced subgraph $H$ of $G$, any vertex that is incident to vertices not in $H$ is called a \emph{boundary vertex}. We let $\partial(H)$ denote the set of \emph{boundary vertices} belonging to $H$. All other vertices incident to edges from only $H$ will be called \emph{interior vertices} of $H$.

A hierarchical decomposition of $G$ is obtained by recursively partitioning the graph using separators into edge-disjoint subgraphs (called regions), where the removal of each separator partitions the subgraph into two two edge-disjoint subgraph. This decomposition is represented by a binary (decomposition) tree $\mathcal{T}(G)$, which we refer to as a \emph{separator tree} of $G$. For any subgraph $H$ of $G$, we use $H \in \mathcal{T}(G)$ to denote that $H$ is a node of $\mathcal{T}(G)$ (to avoid confusion with the vertices of $G$, we refer to the vertices of $\mathcal{T}(G)$ as nodes). The \emph{height} $\eta(H)$ of a node is the number of edges in the longest path between that node and a leaf. In addition, let $S(H)$ denote a balanced separator of the subgraph $H$. Formally, $\mathcal{T}(G)$ satisfies the following properties:
\begin{enumerate}[noitemsep]
\item The root node of $\mathcal{T}(G)$ is the graph $G$.
\item A non-leaf node $H \in \mathcal{T}(G)$ has exactly two children $\child_1(H)$, $\child_2(H)$ and a balanced separator $S(H)$ such that $\child_1(H) \cup \child_2(H) = H$, $V(\child_1(H)) \cap V(\child_2(H)) = S(H)$ and $E(\child_1(H)) \cap E(\child_2(H)) = \emptyset$.
\item For a node $H \in \mathcal{T}(G)$, the set of boundary vertices $\partial(H) \subseteq V(H)$ is defined recursively as follows:
\begin{itemize}[noitemsep]
\item If $H$ is the root of $\mathcal{T}(G)$, i.e., $H=G$, then $\partial(G) = S(G)$. 
\item Otherwise, $\partial(H) = S(H) \cup (\partial(P) \cap V(H))$, where $P$ is the parent of $H$ in $\mathcal{T}(G)$.
\end{itemize}
\item For each node $H \in \mathcal{T}(G)$ and its children $\child_1(H)$, $\child_2(H)$, we have $\partial(\child_1(H)) \cup \partial(\child_2(H)) \supseteq \partial(H)$.
\item The number of boundary vertices per node $H \in \mathcal{T}(G)$, i.e., $|\partial(H)|$, is bounded by $O(\sqrt{n})$.
\item There are $O(\sqrt{n})$ leaf subgraphs in $\mathcal{T}(G)$, each having at most $O(\sqrt{n})$ edges.
\item The height of the tree $\mathcal{T}(G)$ is $O(\log n)$, i.e., $\eta(G) = O(\log n)$.
\item Each edge $e \in E$ is contained in a unique leaf subgraph of $\mathcal{T}(G)$.
\end{enumerate}

The lemma below shows that a separator tree can be constructed with an additional $\log n$ factor overhead in the running time for computing a separator. For the sake of completeness we include its proof in Appendix~\ref{app: prelim}.

\begin{lemma} \label{lem: sepTreeTime}
Let $G=(V,E)$ be a $O(\sqrt{n})$-separable graph whose balanced separator can be computed in $s(n)$ time. There is an algorithm that computes a separator tree $\mathcal{T}(G)$ in $O(s(n) \log n)$ time. 
\end{lemma}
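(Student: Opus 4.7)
\textbf{Proof proposal for Lemma~\ref{lem: sepTreeTime}.}

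The plan is to build $\mathcal{T}(G)$ by a direct recursion. Define a routine $\textsc{BuildTree}(H,B)$, where $H$ is an edge-induced subgraph and $B \subseteq V(H)$ is the restriction of the parent's boundary set. The routine does the following: (i) if $H$ has at most $O(\sqrt{n})$ edges, return $H$ as a leaf; (ii) otherwise, invoke the given algorithm to compute a balanced separator $S(H)$ of $H$ in $s(|V(H)|)$ time; (iii) remove $S(H)$ from $H$ and greedily group the resulting connected components into two parts whose vertex counts are each at most $\alpha|V(H)|$ (this is the standard trick to recover a bipartition from a possibly multi-piece separator decomposition), then add $S(H)$ back to both parts to form the edge-disjoint subgraphs $\child_1(H)$ and $\child_2(H)$ satisfying $V(\child_1(H))\cap V(\child_2(H))=S(H)$; (iv) set $\partial(\child_i(H))= S(\child_i(H))\cup (\partial(H)\cap V(\child_i(H)))$ lazily (computed when recursing); (v) recursively call $\textsc{BuildTree}(\child_i(H),\partial(H)\cap V(\child_i(H)))$ for $i=1,2$. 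The top-level invocation is $\textsc{BuildTree}(G,\emptyset)$, so that $\partial(G)=S(G)$ as required.

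Next I would verify properties (1)--(8). Items (1), (2), (3), (4), (6), (8) are immediate from the construction. The height bound (7) follows from $|V(\child_i(H))|\le \alpha |V(H)|$ at every recursive call, so recursion depth is $O(\log n)$; and each leaf contains $O(\sqrt{n})$ edges by the stopping rule. The delicate point is (5), the uniform $|\partial(H)|=O(\sqrt{n})$ bound. Since boundary vertices of $H$ are exactly the union, over ancestors $A$ of $H$ in $\mathcal{T}(G)$, of $S(A)\cap V(H)$, and since $|S(A)|\le \beta\sqrt{|V(A)|}\le \beta\sqrt{\alpha^{k}\,n}$ where $k$ is the depth of $A$, we get
\[
|\partial(H)| \;\le\; \sum_{k=0}^{\eta(G)} \beta\sqrt{\alpha^{k} n} \;\le\; \beta\sqrt{n}\,\sum_{k\ge 0}\alpha^{k/2} \;=\; O(\sqrt{n}),
\]
since $\alpha<1$; this is the standard geometric-sum argument for planar-style decompositions. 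The use of the class $\mathcal{C}$ being closed under subgraphs is crucial here: it guarantees that the separator theorem applies at every recursive call to the current subgraph.

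For the running time, let $T(n)$ denote the worst-case time on an $n$-vertex input. The recursion satisfies
\[
T(n) \;\le\; T(n_1)+T(n_2)+O(s(n)),\qquad n_1+n_2\le n + O(\sqrt{n}),\quad n_i\le \alpha n.
\]
Since any reasonable separator routine satisfies $s(n)\ge \Omega(n)$ (one must at least read the input), the function $s(n)/n$ can be assumed nondecreasing, which yields $\sum_i s(n_i)\le (s(n)/n)\sum_i n_i \le O(s(n))$ at every level of the recursion. Because there are $O(\log n)$ levels by property (7), the total cost telescopes to $T(n)=O(s(n)\log n)$, as claimed.

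The main obstacle I anticipate is step (iii): a balanced separator theorem only guarantees a cut-set whose removal leaves components of size at most $\alpha n$, not necessarily a clean \emph{bipartition}. One must therefore bucket the post-separator components into two groups of size at most $\alpha' n$ (for some slightly weaker $\alpha'<1$) while keeping them vertex-disjoint outside $S(H)$ and edge-disjoint; a simple first-fit grouping suffices but needs a careful size argument so that property (5) (and hence the runtime) survives the slightly larger constant $\alpha'$. Everything else is a straightforward bookkeeping and recurrence.
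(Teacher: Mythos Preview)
Your proposal is correct and follows essentially the same approach as the paper's proof: a direct recursive construction, the geometric-sum bound $\sum_k \beta\sqrt{\alpha^k n}=O(\sqrt n)$ for $|\partial(H)|$, and the same recurrence $T(n)\le T(n_1)+T(n_2)+s(n)$ summed over $O(\log n)$ levels (the paper outsources the last step to~\cite{EppsteinGIS96} rather than assuming $s(n)\ge\Omega(n)$). If anything, you are more careful than the paper about the bipartition step~(iii), which the paper simply glosses over by speaking of ``the two disjoint components.''
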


\subsection{The Online Boolean Matrix-Vector Multiplication (OMv) Conjecture}\label{sec:conjecture}
Our lower bound will be built upon the following OMv problem and conjecture.
\begin{definition}
	In the \emph{Online Boolean Matrix-Vector Multiplication (OMv)} problem, we are given an integer $n$ and an $n\times n$ Boolean matrix $\mat{M}$. Then at each step $i$ for $1\leq i\leq n$, we are given an $n$-dimensional column vector $\vect{v}_i$, and we should compute $\mat{M}\vect{v}_i$ and output the resulting vector before we proceed to the next round.
\end{definition}
\begin{conjecture}[OMv conjecture~\cite{henzinger2015unifying}]~\label{conjecture}
	For any constant $\varepsilon>0$, there is no $O(n^{3-\varepsilon})$-time algorithm that solves OMv with error probability at most $1/3$. 
\end{conjecture}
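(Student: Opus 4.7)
The statement in question is the OMv conjecture itself, which is an unresolved hypothesis in fine-grained complexity rather than a theorem admitting a proof from established axioms; so any ``proof proposal'' is necessarily speculative. Nevertheless, let me outline the two genres of attack one would pursue, and where the true obstructions lie.

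The plan is to separate the problem into (i) \emph{justifying} the conjecture by reducing it to other widely-believed hardness assumptions, and (ii) attempting an \emph{unconditional} lower bound in a restricted model. For (i), the natural route is to relate OMv to online variants of Boolean matrix--matrix multiplication (OMM). A basic observation is that if one could solve OMv in truly subcubic time $O(n^{3-\varepsilon})$, then by batching the $n$ vector queries together one would at least recover the known (not truly subcubic) combinatorial algorithms for BMM, and more delicately, one would give truly subcubic online BMM. Henzinger et al.\ already established tight relationships between OMv, OuMv (the $u$-$v$ version queried at the same step), and the Mv variant; the plan would be to strengthen these relationships to show that even mild polynomial speedups for OMv propagate back to a truly subcubic \emph{combinatorial} BMM algorithm, thereby refuting the long-standing Combinatorial BMM conjecture. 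Since truly subcubic non-combinatorial algorithms for BMM exist via fast matrix multiplication, the reduction must be careful to preserve the combinatorial/algebraic distinction, which is the first delicate point.

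For (ii), one would try to exhibit an unconditional time lower bound in a structured model such as cell-probe, algebraic decision trees over $\mathbb{F}_2$, or nonuniform circuits with bounded fan-in. First I would fix the preprocessed matrix $\mat{M}$ to be a random $n\times n$ Boolean matrix and study the information-theoretic amount of data needed to answer an online sequence of $n$ Boolean vector queries with error $1/3$. A Yao-style argument combined with the ``sunflower'' structure of Boolean product rows could plausibly rule out algorithms that use $o(n^{3-\varepsilon})$ probes. One would then attempt to lift the probe lower bound to a time lower bound using standard simulation arguments, restricting attention to online algorithms (so that rewinding is not allowed).

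The hard part, of course, is step (ii): unconditional super-linear time lower bounds in general RAM or word-RAM models for explicit functions are not known beyond tiny factors, and this is precisely the reason the OMv statement is stated as a conjecture rather than a theorem. Even for (i), showing that refuting OMv implies a truly subcubic combinatorial BMM is subtle, because the $n$ online queries must be amortized without the algorithm being allowed to see future queries, whereas combinatorial BMM algorithms exploit global structure. A realistic ``proposal'' therefore is to collect equivalences with other problems believed hard (dynamic reachability, Pagh's problem, $3$-SUM-hard dynamic problems), thereby giving the conjecture the same level of credibility as BMM-hardness or $3$-SUM-hardness without actually proving it; this is the status quo in the literature that the paper relies on when invoking the conjecture in Theorems~\ref{thm:lowerbound_separable} and~\ref{thm:lowerbound_general}.
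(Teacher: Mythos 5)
You have correctly recognized that this statement is a conjecture, not a theorem: the paper offers no proof of it, but simply imports it from Henzinger et al.~\cite{henzinger2015unifying} as a hardness assumption underlying the conditional lower bounds in Theorems~\ref{thm:lowerbound_separable} and~\ref{thm:lowerbound_general}. Your assessment of its status and of the obstructions to proving it matches the paper's treatment, so there is nothing to reconcile.
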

We will work on a related problem which is called the $\vect{u}\mat{M}\vect{v}$ problem.
\begin{definition}\label{def:umv_problem}
	In the $\vect{u}\mat{M}\vect{v}$ problem with parameters $n_1,n_2$, we are given a matrix $M$ of size $n_1\times n_2$ which can be preprocessed. After preprocessing, a vector pair $\vect{u},\vect{v}$ is presented, and our goal is to compute $\vect{u}^T\mat{M}\vect{v}$.
\end{definition}
\begin{theorem}[\cite{henzinger2015unifying}]\label{thm:umv_hard}
	Unless the OMv conjecture~\ref{conjecture} is false, there is no algorithm for the $\vect{u}\mat{M}\vect{v}$ problem with parameters $n_1,n_2$ using polynomial preprocessing time and computation time $O(n_1^{1-\delta}n_2+n_1n_2^{1-\delta})$ that has an error probability at most $1/3$, for some constant $\delta$.
\end{theorem}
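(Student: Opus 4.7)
The plan is to derive Theorem~\ref{thm:umv_hard} by showing that a fast algorithm for $\vect{u}\mat{M}\vect{v}$ implies a subcubic algorithm for OMv, violating Conjecture~\ref{conjecture}. It is convenient to route through the \emph{online vector-matrix-vector} (OuMv) problem, in which one preprocesses an $n\times n$ Boolean matrix $\mat{M}$ and then answers $n$ online pairs $(\vect{u}_i,\vect{v}_i)$ with the scalar $\vect{u}_i^T\mat{M}\vect{v}_i$. A standard argument from \cite{henzinger2015unifying} shows that an $O(n^{3-\varepsilon})$-time OuMv algorithm yields an $O(n^{3-\varepsilon'})$-time OMv algorithm by simulating each online vector multiplication of OMv through $n$ OuMv calls using the standard basis vectors $e_j$; hence hardness of OMv implies hardness of OuMv.

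The core of the argument is a block reduction. Suppose for contradiction that an algorithm $A$ solves the $\vect{u}\mat{M}\vect{v}$ problem with parameters $(n_1,n_2)$ in polynomial preprocessing time and per-query time $O(n_1^{1-\delta}n_2 + n_1 n_2^{1-\delta})$. Given an OuMv instance with $n\times n$ matrix $\mat{M}$, partition $\mat{M}$ into a grid of $(n/n_1)\times(n/n_2)$ blocks $\mat{M}^{(p,q)}$ of size $n_1\times n_2$ and preprocess each block with $A$; the total preprocessing cost is still polynomial in $n$. When a query $(\vect{u}_i,\vect{v}_i)$ arrives, split $\vect{u}_i$ into $n/n_1$ consecutive sub-vectors $\vect{u}_i^{(p)}$ and $\vect{v}_i$ into $n/n_2$ sub-vectors $\vect{v}_i^{(q)}$, and observe that in the Boolean semiring
\[
\vect{u}_i^T\mat{M}\vect{v}_i \;=\; \bigvee_{p,q} \bigl(\vect{u}_i^{(p)}\bigr)^T \mat{M}^{(p,q)}\,\vect{v}_i^{(q)},
\]
so the answer is obtained by $(n/n_1)(n/n_2)$ calls to $A$ on the preprocessed blocks.

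Summing over all $n$ online pairs, the total computation time is
\[
O\!\left(n\cdot \frac{n^2}{n_1 n_2}\cdot \bigl(n_1^{1-\delta}n_2 + n_1 n_2^{1-\delta}\bigr)\right) \;=\; O\!\left(n^3\bigl(n_1^{-\delta} + n_2^{-\delta}\bigr)\right),
\]
which is $O(n^{3-\delta'})$ for some $\delta'>0$ as soon as $n_1$ and $n_2$ are any fixed positive powers of $n$; this contradicts the OuMv, and therefore the OMv, hardness. The main obstacle is bookkeeping: ensuring the block decomposition is well-defined when $n_1,n_2$ do not divide $n$ (handled by zero-padding $\mat{M},\vect{u}_i,\vect{v}_i$ up to the next multiples of $n_1$ and $n_2$, which inflates sizes by only a constant factor), and charging the polynomial preprocessing of $A$ once per block rather than per query so that it does not dominate. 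With these routine adjustments the reduction goes through, proving Theorem~\ref{thm:umv_hard}.
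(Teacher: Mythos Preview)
The paper does not prove Theorem~\ref{thm:umv_hard}; it is quoted from \cite{henzinger2015unifying} without proof, so there is no ``paper's own proof'' to compare against. Your block-decomposition reduction from a fast $\vect{u}\mat{M}\vect{v}$ algorithm to a subcubic OuMv algorithm is the standard approach from \cite{henzinger2015unifying} and is correct as written.

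There is, however, a genuine error in your one-line justification of why OuMv inherits OMv hardness. You write that an $O(n^{3-\varepsilon})$ OuMv algorithm gives an $O(n^{3-\varepsilon'})$ OMv algorithm ``by simulating each online vector multiplication of OMv through $n$ OuMv calls using the standard basis vectors $e_j$.'' This does not work: recovering $\mat{M}\vect{v}_t$ via the $n$ scalars $e_j^{T}\mat{M}\vect{v}_t$ costs $n$ scalar queries per OMv round, hence $n^2$ queries across all $n$ rounds. Routing these through an OuMv algorithm whose total cost on $n$ rounds is $O(n^{3-\varepsilon})$ yields only $O(n^{4-\varepsilon})$ for OMv, which is not subcubic and contradicts nothing. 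The actual OMv$\to$OuMv reduction in \cite{henzinger2015unifying} is substantially more delicate and does not proceed coordinate-by-coordinate. Since you explicitly defer this step to \cite{henzinger2015unifying}, the overall argument is salvageable as a citation, but the parenthetical explanation is wrong and should be removed or replaced.

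A smaller omission: your block reduction invokes the $\vect{u}\mat{M}\vect{v}$ algorithm $n\cdot(n/n_1)(n/n_2)$ times, each with error probability up to $1/3$, so you need the routine $O(\log n)$-fold repetition with majority vote to keep the overall error below $1/3$; this only adds a logarithmic factor and does not affect the conclusion.
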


\subsection{Spectral and Resistance Sparsifiers}
Below we present two notion of edge sparsifiers. The first requires that the quadratic form of the original and sparsified graph are close. The second requires that all-pairs effective resistances of the corresponding graphs are close. 
\begin{definition}[Spectral Sparsifier] \label{def: specSpar} Given a graph $G=(V,E,\vect{w})$ and $\varepsilon \in (0,1)$, we say that a subgraph $H=(V,E_H,\vect{w}_H)$ is an $(1 \pm \varepsilon)$-\emph{spectral sparsifier} of $G$ if 
	\[ \forall \vect{x} \in \mathbb{R}^{n},~(1-\varepsilon)\vect{x}^{T}\mat{L}(G)\vect{x} \leq \vect{x}^{T}{\mat{L}(H)}\vect{x} \leq (1+\varepsilon)\vect{x}^{T}\mat{L}(G)\vect{x}. \]
\end{definition}
\begin{definition}[Resistance Sparsifier] Given a graph $G=(V,E,\vect{w})$ and $\varepsilon \in (0,1)$, we say that a subgraph $H=(V,E_H,\vect{w}_H)$ is an  $(1 \pm \varepsilon)$-\emph{resistance sparsifier} of $G$ if
	\[
	\forall u,v \in V,~(1-\varepsilon)R_G(u,v) \leq R_H(u,v) \leq (1+\varepsilon) R_G(u,v).
	\]
\end{definition}
The following lemma shows that Definition~\ref{def: specSpar} is equivalent to approximating the pseudoinverse Laplacians. For the sake of completeness we include its proof in Appendix~\ref{app: prelim}.

\begin{lemma} \label{lem: equivPseudo}
Assume $G$ is connected. Then the following statements are equivalent:
\begin{enumerate}
\item  $\forall \vect{x} \in \mathbb{R}^{n},~(1-\varepsilon)\vect{x}^{T}\mat{L}(G)\vect{x} \leq \vect{x}^{T}{\mat{L}(H)}\vect{x} \leq (1+\varepsilon)\vect{x}^{T}\mat{L}(G)\vect{x}.$
\item $\displaystyle \forall \vect{x} \in \mathbb{R}^{n},~\frac{1}{(1+\varepsilon)}\vect{x}^{T}\mat{L}(G)^\dagger\vect{x} \leq \vect{x}^{T}\mat{L}(H)^\dagger\vect{x} \leq \frac{1}{(1-\varepsilon)}\vect{x}^{T}\mat{L}(G)^\dagger\vect{x}.$
\end{enumerate}
\end{lemma}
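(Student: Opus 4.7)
}
The plan is to reduce the statement to a fact about positive definite matrices by restricting attention to the common range of $\mat{L}(G)$ and $\mat{L}(H)$, namely $\vect{1}^\perp$. The first step is to observe that both Laplacians have the same null space. Since $G$ is connected, $\ker(\mat{L}(G)) = \mathrm{span}(\vect{1})$. If $\vect{x} \in \vect{1}^\perp$ and $\vect{x}^T \mat{L}(H) \vect{x} = 0$, then the sandwich in (1) forces $\vect{x}^T \mat{L}(G) \vect{x} = 0$, hence $\vect{x} = 0$ since $\mat{L}(G)$ is positive definite on $\vect{1}^\perp$. Conversely $\mat{L}(H)\vect{1} = \vect{0}$ by the Laplacian structure. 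So $\ker(\mat{L}(H)) = \mathrm{span}(\vect{1})$, and both pseudoinverses $\mat{L}(G)^\dagger$ and $\mat{L}(H)^\dagger$ act as honest inverses on the $(n-1)$-dimensional subspace $\vect{1}^\perp$ and vanish on $\mathrm{span}(\vect{1})$.

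Working inside $\vect{1}^\perp$, statement (1) is equivalent to the operator inequality
\[
(1-\varepsilon)\,\mat{L}(G) \preceq \mat{L}(H) \preceq (1+\varepsilon)\,\mat{L}(G),
\]
where both matrices are positive definite on this subspace. Let $\mat{L}(G)^{1/2}$ denote the positive definite square root restricted to $\vect{1}^\perp$, with pseudoinverse $\mat{L}(G)^{-1/2}$. Conjugating the above by $\mat{L}(G)^{-1/2}$ yields
\[
(1-\varepsilon)\,\mat{I} \preceq \mat{L}(G)^{-1/2}\,\mat{L}(H)\,\mat{L}(G)^{-1/2} \preceq (1+\varepsilon)\,\mat{I}.
\]
Inversion reverses the Loewner order on positive definite matrices, so the middle matrix has inverse sandwiched between $\frac{1}{1+\varepsilon}\mat{I}$ and $\frac{1}{1-\varepsilon}\mat{I}$. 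Using the identity $(\mat{A}\mat{B}\mat{A})^{-1} = \mat{A}^{-1}\mat{B}^{-1}\mat{A}^{-1}$ (valid here since all factors are invertible on $\vect{1}^\perp$), the inverse equals $\mat{L}(G)^{1/2}\,\mat{L}(H)^{-1}\,\mat{L}(G)^{1/2}$. Conjugating back by $\mat{L}(G)^{-1/2}$ gives
\[
\tfrac{1}{1+\varepsilon}\,\mat{L}(G)^{-1} \preceq \mat{L}(H)^{-1} \preceq \tfrac{1}{1-\varepsilon}\,\mat{L}(G)^{-1}
\]
as operators on $\vect{1}^\perp$. Translating back to pseudoinverses on all of $\mathbb{R}^n$ (the statements agree trivially on $\mathrm{span}(\vect{1})$, where both quadratic forms vanish), this is exactly statement (2).

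For the reverse implication, the same chain of manipulations applies symmetrically: starting from (2) on $\vect{1}^\perp$, conjugating by $\mat{L}(G)^{\dagger,1/2}$ and inverting the resulting sandwich around $\mat{I}$ recovers (1). The main subtlety I expect to have to be careful about is ensuring the pseudoinverse really behaves like a genuine inverse at every step: this is precisely why the first step (establishing that both Laplacians share the null space $\mathrm{span}(\vect{1})$) is essential, and why connectedness of $G$ is a hypothesis. Once that structural fact is in hand, the rest is a routine congruence-and-invert argument on positive definite matrices.
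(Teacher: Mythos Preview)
Your proof is correct and takes a genuinely different route from the paper. The paper argues via spectral decomposition: it diagonalizes $\mat{L}(G) = \sum_i \lambda_i^G \vect{u}_i\vect{u}_i^T$ and then tries to show that the quadratic-form sandwich is equivalent to the coordinatewise eigenvalue inequality $(1-\varepsilon)\lambda_i^G \le \lambda_i^H$, from which inverting each eigenvalue immediately gives the pseudoinverse sandwich. As written, though, the paper's argument tacitly assumes $\mat{L}(H)$ is diagonal in the \emph{same} eigenbasis $\{\vect{u}_i\}$ (it expands $\vect{x}^T\mat{L}(H)\vect{x}$ as $\sum_i \lambda_i^H(\vect{u}_i^T\vect{x})^2$), which is not true in general since $\mat{L}(G)$ and $\mat{L}(H)$ need not commute.

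Your congruence-and-invert argument avoids this issue entirely. By first pinning down that both Laplacians share the kernel $\mathrm{span}(\vect{1})$ and then working on $\vect{1}^\perp$ with the transformation $\mat{L}(G)^{-1/2}(\cdot)\mat{L}(G)^{-1/2}$, you reduce to a sandwich around the identity, where inversion is order-reversing with no commutativity assumption needed. This is the standard, robust way to prove such equivalences in the Loewner order, and it is both cleaner and more rigorous than the paper's version; indeed, the natural fix for the paper's proof is precisely to pass to $\mat{L}(G)^{-1/2}\mat{L}(H)\mat{L}(G)^{-1/2}$ first, which is exactly what you do.
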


In our algorithm we use the following observations: (1) Since, by definition, the effective resistance between any two nodes $u$ and $v$ is the quadratic form defined by the pseudo-inverse of the Laplacian computed at the vector $\1_s-\1_t$, i.e., $R_G(u,v) = (\1_s-\1_t)^T\mat{L}^\dagger(\1_s-\1_t)$, it follows that the effective resistances between any two nodes in $G$ and $H$ are the same up to a $1/(1 \pm \varepsilon)$ factor. By definitions for resistance and spectral sparsifiers, and Lemma~\ref{lem: equivPseudo} we have the following fact.

\begin{fact}
	Let $\varepsilon\in (0,1)$ and let $G$ be a graph. Then every $(1\pm\varepsilon)$-spectral sparsifier of $G$ is an $1/(1\pm \varepsilon)$-resistance sparsifier of $G$. 
\end{fact}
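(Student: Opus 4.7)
The plan is a direct chain of implications: start from the spectral sparsifier inequality, invert it via Lemma~\ref{lem: equivPseudo} to obtain the analogous two-sided bound on the quadratic form of $\mat{L}^{\dagger}$, and then specialize the test vector to $\chi_{u,v} = \1_u - \1_v$ so the quadratic form becomes exactly the effective resistance $R_G(u,v)$.

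More concretely, first I would reduce to the connected case: any $(1\pm\varepsilon)$-spectral sparsifier $H$ has the same kernel of $\mat{L}$ as $G$ (since $\vect{x}^T\mat{L}(G)\vect{x}=0$ iff $\vect{x}^T\mat{L}(H)\vect{x}=0$ by the sandwich inequality), so $G$ and $H$ share the same connected components. For vertices $u,v$ in different components, $R_G(u,v)=R_H(u,v)=\infty$ and the sparsifier condition is trivially satisfied; otherwise we may restrict attention to the connected component containing $u,v$, on which Lemma~\ref{lem: equivPseudo} applies. Applying that lemma yields
\[
\frac{1}{1+\varepsilon}\vect{x}^{T}\mat{L}(G)^\dagger\vect{x} \;\leq\; \vect{x}^{T}\mat{L}(H)^\dagger\vect{x} \;\leq\; \frac{1}{1-\varepsilon}\vect{x}^{T}\mat{L}(G)^\dagger\vect{x}
\]
for every $\vect{x}\in\mathbb{R}^n$.

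Next, I would substitute $\vect{x}=\chi_{u,v}=\1_u-\1_v$ for arbitrary $u,v\in V$. By the definition of effective resistance, $\chi_{u,v}^T\mat{L}(G)^\dagger\chi_{u,v}=R_G(u,v)$ and likewise $\chi_{u,v}^T\mat{L}(H)^\dagger\chi_{u,v}=R_H(u,v)$, so the displayed inequality becomes
\[
\frac{1}{1+\varepsilon}\, R_G(u,v) \;\leq\; R_H(u,v) \;\leq\; \frac{1}{1-\varepsilon}\, R_G(u,v),
\]
which is exactly the $1/(1\pm\varepsilon)$-resistance sparsifier condition.

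There is no real obstacle here; the only subtlety worth flagging is the connectedness caveat above, which is needed because Lemma~\ref{lem: equivPseudo} is stated for connected graphs. Since spectral sparsification automatically preserves the connected-component structure, this issue disposes of itself by a component-wise argument, and the Fact follows.
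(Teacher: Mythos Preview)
Your proposal is correct and follows exactly the route the paper indicates: the Fact is stated without a formal proof, with the paper simply noting that it follows from the definitions of spectral and resistance sparsifiers together with Lemma~\ref{lem: equivPseudo}. Your additional care in reducing to the connected case is a welcome clarification, since Lemma~\ref{lem: equivPseudo} is only stated for connected $G$, but otherwise the argument is identical to the paper's intended one.
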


(2) The lemma below suggests that given a graph, by decomposing the graph into several  pieces and computing a good sparsifier for each piece, one can obtain a good sparsifier for the original graph which is the union of the sparsifiers for all pieces. 

\begin{lemma}[\cite{AbrahamDKKP16}, Lemma~4.18] \label{lemm: decomposability} Let $G=(V,E,\vect{w})$ be a weighted graph whose set of edges is partitioned into $E_1,\ldots,E_\ell$. Let $H_i$ be a $(1\pm \varepsilon)$-spectral sparsifier of $G_i=(V,E_i)$, where $i=1,\ldots,\ell$. Then $H=\bigcup_{i=1}^{\ell} H_i$ is a $(1\pm \varepsilon)$-spectral sparsifier of $G$. 
\end{lemma}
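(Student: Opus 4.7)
The plan is to exploit the fundamental additivity of the Laplacian quadratic form over an edge partition. Recall that the Laplacian of a weighted graph $G=(V,E,\vect{w})$ admits the edge representation
\[
\mat{L}(G) \;=\; \sum_{e=(u,v)\in E} \vect{w}(e)\,(\1_u-\1_v)(\1_u-\1_v)^T,
\]
so its quadratic form $\vect{x}^T\mat{L}(G)\vect{x} = \sum_{(u,v)\in E} \vect{w}(e)\,(\vect{x}(u)-\vect{x}(v))^2$ is a positive linear combination indexed by edges. First I would use this to observe that, because $E = E_1 \sqcup \cdots \sqcup E_\ell$ is an edge partition, one has $\mat{L}(G) = \sum_{i=1}^\ell \mat{L}(G_i)$ and hence $\vect{x}^T\mat{L}(G)\vect{x} = \sum_{i=1}^\ell \vect{x}^T \mat{L}(G_i)\vect{x}$ for every $\vect{x}\in\mathbb{R}^n$.

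Next I would do the analogous decomposition for $H$. Since each $H_i$ is a subgraph of $G_i$ on the same vertex set, and $H = \bigcup_{i=1}^\ell H_i$ is formed by taking the union of their edge sets (with weights inherited from each $H_i$), the edges of $H$ are naturally partitioned as $E(H) = E(H_1) \sqcup \cdots \sqcup E(H_\ell)$ (the $E_i$'s were disjoint to begin with, so there is no overlap among sparsifier edges across different pieces). By the same reasoning as above,
\[
\vect{x}^T\mat{L}(H)\vect{x} \;=\; \sum_{i=1}^\ell \vect{x}^T\mat{L}(H_i)\vect{x} \qquad \text{for all } \vect{x}\in\mathbb{R}^n.
\]

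Finally I would invoke the hypothesis. For each $i$, since $H_i$ is a $(1\pm\varepsilon)$-spectral sparsifier of $G_i$, we have
\[
(1-\varepsilon)\,\vect{x}^T\mat{L}(G_i)\vect{x} \;\le\; \vect{x}^T\mat{L}(H_i)\vect{x} \;\le\; (1+\varepsilon)\,\vect{x}^T\mat{L}(G_i)\vect{x}
\]
for every $\vect{x}\in\mathbb{R}^n$. Summing these inequalities over $i=1,\ldots,\ell$ and using the two decomposition identities above immediately yields
\[
(1-\varepsilon)\,\vect{x}^T\mat{L}(G)\vect{x} \;\le\; \vect{x}^T\mat{L}(H)\vect{x} \;\le\; (1+\varepsilon)\,\vect{x}^T\mat{L}(G)\vect{x},
\]
which is exactly the statement that $H$ is a $(1\pm\varepsilon)$-spectral sparsifier of $G$.

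There is no real obstacle here: the whole argument is the observation that spectral approximation is preserved under nonnegative linear combinations, combined with the edge-additivity of the Laplacian. The only subtle point worth stating explicitly in the write-up is that the $E_i$ form an edge partition (so edges are not double-counted in $H$, and the additive decomposition for $H$ is valid); otherwise the proof is a one-line summation.
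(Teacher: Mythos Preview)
Your argument is correct. Note, however, that the paper does not supply its own proof of this lemma: it is quoted verbatim from \cite{AbrahamDKKP16} (their Lemma~4.18) and used as a black box, so there is no in-paper proof to compare against. What you wrote is precisely the standard justification---additivity of the Laplacian quadratic form over an edge partition, followed by summing the per-piece sparsifier inequalities---and it matches the argument one finds in the cited source. The one point you flagged (that the $E_i$ are disjoint, hence so are the $E(H_i)$ since each $H_i$ is by Definition~\ref{def: specSpar} a subgraph of $G_i$, so $\mat{L}(H)=\sum_i \mat{L}(H_i)$ without double counting) is indeed the only thing that needs saying.
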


\subsection{Schur Complement and Approximate Schur Complement}
For a given connected graph $G=(V,E)$ and a set $K\subset V$ of terminals with $1\leq |K|\leq |V|-1$, let $N = V \setminus K$ be the set of non-terminal vertices in $G$. The partition of $V$ into $N$ and $K$ naturally induces the following partition of the Laplacian $\mat{L}(G)$ into blocks:
\[ \mat{L}(G) = 
\begin{bmatrix}
\mat{L}_{N} & \mat{L}_{M} \\
\mat{L}^{T}_{M} & \mat{L}_K \\
\end{bmatrix}
\]
We remark that since $G$ is connected and $N$ and $K$ are non-empty, one can show that $\mat{L}_N$ is invertible. We have the following definition of Schur complement.
\begin{definition}[Schur Complement] \label{def: Schur} The (unique) \emph{Schur complement} of a graph Laplacian $\mat{L}(G)$ with respect to a terminal set $K$ is
	\[
	\mat{S}(G,K) := \mat{L}_{K} - \mat{L}^{T}_{M}\mat{L}^{-1}_{N}\mat{L}_{M}.
	\]
\end{definition} 
It is well known that the matrix $\mat{S}(G,K)$ is a Laplacian matrix for some graph $G'$. 


\begin{definition}[Approximate Schur Complement (ASC)] \label{def: approxSchurComplement} Given a graph $G=(V,E,\mat{w})$, $K \subset V$ and its Schur complement $\mat{S}(G,K)$, we say that a graph $H=(K,E_H,\mat{w}_H)$ is a $(1\pm \varepsilon)$\emph{-approximate Schur complement (abbr. $(1\pm \varepsilon)$-ASC)} of $G$ with respect to $K$ if 
	\[
	\forall \vect{x} \in \mathbb{R}^{k},~(1-\varepsilon)\vect{x}^{T}\mat{S}(G,K)\vect{x} \leq \vect{x}^{T}{\mat{L}(H)}\vect{x} \leq (1+\varepsilon)\vect{x}^{T}\mat{S}(G,K)\vect{x}. 
	\]
	Moreover, we say that $H$ is an $1$\emph{-ASC} of $G$ with respect to $K$ if $\mat{L}(H) = \mat{S}(G,K)$.
\end{definition}

Note that $(1\pm \varepsilon)$-ASC is a spectral sparsifier of Schur complement. 
Furthermore, approximate Schur complement can be computed efficiently as guaranteed in the following lemma~\cite{DurfeeKPRS16}. 
\begin{lemma} \label{lem: ApproxSchur}
	Fix $\varepsilon \in (0,1/2)$ and $\gamma \in (0,1)$, and let $G=(V,E,\vect{w})$ be a graph with $K \subset V$ and  $|K|=k$. There is an algorithm \textsc{ApproxSchur}$(G,K, \varepsilon, \delta)$ that computes a $(1 \pm \varepsilon)$-\emph{ASC} $H$ of $G$ with respect to $K$ such that the following statements hold  probability at least $1-\gamma$: 
	\begin{enumerate}
		\item The graph $H$ has $O(k\varepsilon^{-2}\log(n/\gamma))$ edges.
		\item The total running time for computing $H$ is $\tilde{O}( m \log^{3}(n / \gamma) + n \varepsilon^{-2} \log^{4} (n/\gamma))$.
	\end{enumerate}	
\end{lemma}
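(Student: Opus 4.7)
The plan is to implement an approximate analogue of the Cholesky factorization that eliminates the non-terminal vertices from $\mat{L}(G)$. Recall that, algebraically, the Schur complement $\mat{S}(G,K)$ is precisely the matrix that results from symmetrically Gaussian-eliminating every vertex of $N = V \setminus K$ from $\mat{L}(G)$. Graph-theoretically, eliminating a single non-terminal $v$ of weighted degree $w(v)$ replaces the weighted star at $v$ by a weighted clique on its neighborhood $N(v)$, where the new edge between $u,u'$ has weight $w_{vu}w_{vu'}/w(v)$. Exact elimination is prohibitive because an intermediate graph may become dense, so the algorithm's job is to sparsify each elimination step while tracking the cumulative multiplicative spectral error.

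The first key ingredient is a one-step sparsification lemma: the clique created by eliminating $v$ can be sampled by picking $O(\log(n/\gamma)/\varepsilon'^{2})$ random length-two walks $u\to v\to u'$ (each sampled with probability proportional to $w_{vu}/w(v)$ conditional on starting along $\{v,u\}$), rescaling each sampled edge appropriately. A standard matrix Chernoff argument (Tropp's inequality) shows that the empirical clique is a $(1\pm\varepsilon')$-spectral sparsifier of the true clique with probability $1-\gamma/\text{poly}(n)$. Chaining this across all of $N$ and using the telescoping identity $\mat{L}_{M}^{T}\mat{L}_{N}^{-1}\mat{L}_{M} = \sum_{t\geq 0} (\text{random walk contributions through }N)$ proves that sampling length-$\ell$ random walks out of each edge, terminated at the first hit of $K$, produces a graph $H$ whose expected Laplacian is exactly $\mat{S}(G,K)$.

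To convert this into the claimed guarantee I would (i) pick the per-step error $\varepsilon' = \Theta(\varepsilon/\log n)$ so that composing the errors across the $O(\log n)$ rounds of eliminations still gives an overall $(1\pm\varepsilon)$ bound, and (ii) take $O(\varepsilon^{-2}\log(n/\gamma))$ samples per terminal to achieve the stated $O(k\varepsilon^{-2}\log(n/\gamma))$ edge count in $H$. For the running time, the main trick is to iterate the scheme in $O(\log n)$ phases: in each phase, perform elimination on a constant fraction of the remaining non-terminals in parallel (e.g., a low-degree independent set in the current intermediate graph). This guarantees that each phase touches only $\tilde{O}(m)$ edges while shrinking the non-terminal set geometrically, so the total cost is $\tilde{O}(m\log^{3}(n/\gamma) + n\varepsilon^{-2}\log^{4}(n/\gamma))$ after accounting for the sampling overhead per edge.

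The main obstacle I anticipate is controlling the length and cost of the random walks, which could otherwise blow up the running time. The resolution is two-fold: first, the phase-wise elimination of low-degree non-terminals ensures the intermediate graphs never become too dense, so each walk step is cheap; second, one must carefully argue that the effective resistances used as sampling probabilities can themselves be estimated in polylogarithmic overhead per edge, for example by invoking a Johnson-Lindenstrauss projection of $\mat{L}^{\dagger/2}$ or by using the $O(\log n)$-sized support of the intermediate Laplacians. A secondary subtlety is ensuring the concentration bound holds simultaneously for all pairs of terminals despite the adaptive nature of the walks; standard matrix martingale arguments (Tropp's matrix freedman inequality) suffice, provided the per-step sparsification error $\varepsilon'$ is chosen as above.
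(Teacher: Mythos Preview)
The paper does not prove this lemma at all: it is quoted as a black-box result from Durfee, Kyng, Peebles, Rao, and Sachdeva~\cite{DurfeeKPRS16} (see the sentence immediately preceding the lemma, ``approximate Schur complement can be computed efficiently as guaranteed in the following lemma~\cite{DurfeeKPRS16}''). So there is nothing in the paper to compare your proposal against; for the purposes of this paper the correct ``proof'' is a citation.

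That said, your sketch is broadly in the spirit of how \cite{DurfeeKPRS16} actually obtains the result: they too interpret the Schur complement via terminal-free random walks (this is exactly Lemma~\ref{lem: EquivSchur} in the present paper), sample edges of the elimination cliques by short random walks, and control the error with matrix concentration. A few of your details are off, though. You propose setting $\varepsilon' = \Theta(\varepsilon/\log n)$ and chaining $(1\pm\varepsilon')$ errors across $O(\log n)$ phases, but that would inflate the sample count and hence the edge bound by a $\log^{2} n$ factor you do not account for; the actual argument avoids per-phase spectral chaining by analyzing the whole random-walk sampler as a single martingale against the exact $\mat{S}(G,K)$. Your remark about ``the $O(\log n)$-sized support of the intermediate Laplacians'' is not meaningful as stated, and the appeal to estimating effective resistances via a Johnson--Lindenstrauss sketch of $\mat{L}^{\dagger/2}$ is unnecessary here --- the sampling probabilities in \cite{DurfeeKPRS16} are just degree-proportional transition probabilities, not leverage scores. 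Finally, the running-time bound requires a more careful accounting of walk lengths (they bound the expected number of steps before hitting $K$ using the structure of the elimination order), which your phase argument gestures at but does not pin down.
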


\section{Useful Properties of Approximate Schur Complement}
\label{sec:asc_vs}
In this section we show that Approximate Schur complement can be treated as a vertex effective resistance sparsifier, which is a small graph that (approximately) preserves the pairwise effective resistances among terminal vertices of the original graph. Then we show two important properties called \emph{transitivity} and \emph{composability} properties of ASCs, which will be exploited in our dynamic algorithms for ASCs and effective resistances.


\subsection{ASC as Vertex Resistance Sparsifier}\label{sec:asc_property}
To maintain all-pairs effective resistances efficiently, it will be useful to consider the following notion of \emph{vertex sparsifier} that preserves pairwise effective resistances among a set of terminals. 
\begin{definition}[Vertex Resistance Sparsifier (VRS)] Given a graph $G=(V,E,\vect{w})$ with $K \subset V$, we say that a graph $H=(K,E_H,\vect{w}_H)$ is an $(1 \pm \varepsilon)$-\emph{vertex resistance sparsifier}~\emph{(abbr. $(1\pm \varepsilon)$-VRS)} of $G$ with respect to $K$ if
	\[
	\forall s,t \in K,~(1-\varepsilon) R_G(s,t) \leq R_H(s,t) \leq (1+\varepsilon) R_H(s,t).
	\]
\end{definition} 

We show that ASC can be treated as a vertex resitance sparsifier. For this, we recall the following lemma which shows that the quadratic form of the pseudo-inverse of the Laplacian $\mat{L}$ is preserved by taking the quadratic form of the pseudo-inverse of its Schur complement, for demand vectors supported on the terminals. 

\begin{lemma}[\cite{MillerP13}, Lemma~5.1] \label{lemm: SchurComp} Let $\vect{d}$ be a demand vector of a graph $G$ whose vertices are partitioned into terminals $K$, and non-terminals $N$ such that only terminals have non-zero entries in $\vect{d}$. Let $\vect{d}_K$ be the restriction of $\vect{d}$ on the terminals and let $\mat{S}(G,K)$ be the Schur complement of $\mat{L}(G)$ with respect to $K$. Then 
	\[
	\vect{d}^{T} \mat{L}(G)^{\dagger} \vect{d} =  \vect{d}_K^{T} \mat{S}(G,K)^{\dagger} \vect{d}_K.
	\]
\end{lemma}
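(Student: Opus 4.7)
The plan is to use the potential-theoretic characterization of the quadratic form: for any solution $\vect{p}$ of the linear system $\mat{L}(G)\vect{p} = \vect{d}$, one has $\vect{d}^T \mat{L}(G)^\dagger \vect{d} = \vect{d}^T \vect{p}$. Such a $\vect{p}$ exists because $\vect{d}$, being a demand vector, is orthogonal to $\ker(\mat{L}(G))$ (spanned by $\vect{1}$ since $G$ is assumed connected). My goal is to build a specific $\vect{p}$ blockwise out of $\mat{S}(G,K)^\dagger \vect{d}_K$ and read off the identity.

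Writing vectors in the block form induced by $V = N \sqcup K$, so that $\vect{d} = (\vect{0},\vect{d}_K)^T$, the equation $\mat{L}(G)\vect{p} = \vect{d}$ decouples into
\[
\mat{L}_N \vect{p}_N + \mat{L}_M \vect{p}_K = \vect{0}, \qquad \mat{L}_M^T \vect{p}_N + \mat{L}_K \vect{p}_K = \vect{d}_K.
\]
Since $\mat{L}_N$ is invertible (as noted right before Definition~\ref{def: Schur}), I solve the first equation for $\vect{p}_N = -\mat{L}_N^{-1}\mat{L}_M \vect{p}_K$ and substitute it into the second to obtain $\mat{S}(G,K)\vect{p}_K = \vect{d}_K$. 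This is the crux of the argument: the restriction to $K$ of any potential realizing the demand $\vect{d}$ must solve the Schur-complement system.

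Next I take $\vect{p}_K := \mat{S}(G,K)^\dagger \vect{d}_K$. This is a valid solution because $\mat{S}(G,K)$ is itself a graph Laplacian (fact recalled right after Definition~\ref{def: Schur}), so its kernel is spanned by $\vect{1}_K$, and $\vect{d}_K \perp \vect{1}_K$ since $\vect{d}$ is a demand vector supported on $K$. Defining $\vect{p}_N := -\mat{L}_N^{-1}\mat{L}_M \vect{p}_K$ and combining the blocks produces a bona fide solution of $\mat{L}(G)\vect{p} = \vect{d}$, whence
\[
\vect{d}^T \mat{L}(G)^\dagger \vect{d} \;=\; \vect{d}^T \vect{p} \;=\; \vect{d}_K^T \vect{p}_K \;=\; \vect{d}_K^T \mat{S}(G,K)^\dagger \vect{d}_K,
\]
using $\vect{d} = (\vect{0},\vect{d}_K)^T$ in the middle equality.

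The only point requiring care, rather than a real obstacle, is the bookkeeping around the pseudoinverses: one needs that $\vect{d}$ and $\vect{d}_K$ each lie in the range of the respective Laplacian so that both identities $\vect{d}^T\mat{L}(G)^\dagger\vect{d} = \vect{d}^T\vect{p}$ and $\mat{S}(G,K)^\dagger \vect{d}_K \in \vect{1}_K^\perp$ are meaningful, and that the freedom in $\vect{p}$ modulo $\ker(\mat{L}(G))$ does not affect the quadratic form. All of these follow from $\vect{d}$ summing to zero and being supported on $K$.
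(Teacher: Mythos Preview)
The paper does not supply its own proof of this lemma; it simply quotes it from \cite{MillerP13} (Lemma~5.1). Your argument is correct and is the standard block-elimination derivation: eliminate the non-terminal block using the invertibility of $\mat{L}_N$, recognize the resulting system on $K$ as $\mat{S}(G,K)\vect{p}_K = \vect{d}_K$, and read off the quadratic form. The bookkeeping you flag about kernels and ranges---that $\vect{d}\perp\vect{1}$ and $\vect{d}_K\perp\vect{1}_K$ ensure both pseudoinverse expressions are well-defined and that the ambiguity in $\vect{p}$ is harmless---is exactly what is needed and is handled cleanly.
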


Using interchangeability between graphs and their Laplacians, we can interpret the above result in terms of graphs as well. The lemma below relates ASCs and vertex resistance sparsifiers. For the sake of completeness, we include its proof in Appendix~\ref{app: approxSchur}.
\begin{lemma}~\label{lemma:exact_schur} Let $G=(V,E,\mat{w})$ be a graph with $K \subset V$. If $H$ is an $(1\pm \varepsilon)$-\emph{ASC} of $G$ with respect to $K$, then $H$ is an \emph{$1/(1 \pm \varepsilon)$-VRS} of $G$ with respect to $K$.
\end{lemma}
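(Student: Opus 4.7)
The plan is to reduce the effective-resistance statement to the pseudoinverse form of the spectral-approximation hypothesis, using the Miller--Peng identity (Lemma~\ref{lemm: SchurComp}) to eliminate the non-terminals. Fix any two terminals $s,t \in K$ and recall that $R_G(s,t)=\chi_{s,t}^{T}\mat{L}(G)^{\dagger}\chi_{s,t}$. Because $s,t\in K$, the demand vector $\chi_{s,t}=\1_s-\1_t$ is supported entirely on the terminal set, so Lemma~\ref{lemm: SchurComp} applies and yields
\[
R_G(s,t)\;=\;\chi_{s,t}^{T}\mat{L}(G)^{\dagger}\chi_{s,t}\;=\;\chi_{s,t}^{T}\mat{S}(G,K)^{\dagger}\chi_{s,t}.
\]
Since $\chi_{s,t}$ lies in $\mathbb{R}^{K}$, the right-hand side is a quadratic form in the pseudoinverse of the Schur-complement Laplacian, which is exactly the object that ASC approximates.

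Next I would invoke the hypothesis that $H$ is a $(1\pm\varepsilon)$-ASC of $G$ with respect to $K$. By Definition~\ref{def: approxSchurComplement}, this means $\mat{L}(H)$ spectrally $(1\pm\varepsilon)$-approximates the Laplacian $\mat{S}(G,K)$ (both are Laplacians of graphs on vertex set $K$). The equivalence stated in Lemma~\ref{lem: equivPseudo}, applied to the pair $(\mat{S}(G,K),\mat{L}(H))$ in place of $(\mat{L}(G),\mat{L}(H))$, then gives
\[
\frac{1}{1+\varepsilon}\,\vect{x}^{T}\mat{S}(G,K)^{\dagger}\vect{x}\;\leq\;\vect{x}^{T}\mat{L}(H)^{\dagger}\vect{x}\;\leq\;\frac{1}{1-\varepsilon}\,\vect{x}^{T}\mat{S}(G,K)^{\dagger}\vect{x}
\]
for every $\vect{x}\in\mathbb{R}^{K}$. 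Plugging in $\vect{x}=\chi_{s,t}$ and recognizing that $\chi_{s,t}^{T}\mat{L}(H)^{\dagger}\chi_{s,t}=R_H(s,t)$ yields $R_H(s,t)\in[\,R_G(s,t)/(1+\varepsilon),\,R_G(s,t)/(1-\varepsilon)\,]$, which is exactly the $1/(1\pm\varepsilon)$-VRS condition.

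The only subtle point is the applicability of Lemma~\ref{lem: equivPseudo}, which is stated for connected graphs; I would address this by noting that if $G$ is connected then $\mat{S}(G,K)$ is the Laplacian of a connected graph on $K$, and the spectral approximation by $\mat{L}(H)$ forces $H$ to have the same kernel (the all-ones vector on $K$), so both pseudoinverses are taken on the same orthogonal complement. In the disconnected case one simply works component by component on the common kernel, which does not change the argument. This is the only non-routine check; the rest is a direct substitution of the hypothesis into the Miller--Peng identity.
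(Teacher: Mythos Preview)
Your proof is correct and follows essentially the same route as the paper's: apply Lemma~\ref{lem: equivPseudo} to pass from the ASC hypothesis to a pseudoinverse inequality between $\mat{L}(H)$ and $\mat{S}(G,K)$, then instantiate at $\chi_{s,t}$ and invoke Lemma~\ref{lemm: SchurComp} to identify the resulting quadratic form with $R_G(s,t)$. Your added remark about connectedness is a welcome clarification that the paper leaves implicit.
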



\subsection{Transitivity and Composability of ASCs}
In the following, we show a \emph{transitivity} property of ASCs and then show how the ASCs of two neighboring nodes of the separator tree $\mathcal{T}(G)$ can be combined to give the ASC of their parent (called \emph{composability}), which will enable us to compute the ASCs of all nodes of $\mathcal{T}(G)$ in a bottom-up fashion. 

\paragraph{Transitivity of ASCs.} To show the transitivity property the ASCs, we will use the following lemma which establishes the connection between the Schur complement and the Laplacian of the original graph. 

\begin{lemma}[\cite{MillerP13}, Lemma B.2] \label{lem: Schur}
Let $\mat{L}(G)$ be the Laplacian of $G$ and let $\mat{S}(G,K)$ be its Schur complement. For any $x \in \mathbb{R}^{k}$ the following holds
\[
	\mat{x}^{T} \mat{S}(G,K) \mat{x} = \min_{\mat{y}}
	 \begin{bmatrix*}
           \mat{y} \\
           \mat{x} \\
         \end{bmatrix*}^T \mat{L}(G) \begin{bmatrix*}
           \mat{y} \\
           \mat{x} \\
         \end{bmatrix*}.
\]
\end{lemma}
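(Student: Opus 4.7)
The plan is to prove the lemma by a direct block-matrix calculation: write out the quadratic form using the natural $N|K$ block decomposition of $\mat{L}(G)$, minimize over $\mat{y}$ by first-order optimality, and verify that the minimum value matches the definition of $\mat{S}(G,K)$.

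Concretely, I would start from the block form
\[
\mat{L}(G) = \begin{bmatrix} \mat{L}_N & \mat{L}_M \\ \mat{L}_M^T & \mat{L}_K \end{bmatrix},
\]
already introduced in the definition of the Schur complement. Expanding gives
\[
\begin{bmatrix} \mat{y} \\ \mat{x} \end{bmatrix}^T \mat{L}(G) \begin{bmatrix} \mat{y} \\ \mat{x} \end{bmatrix}
= \mat{y}^T \mat{L}_N \mat{y} + 2\,\mat{y}^T \mat{L}_M \mat{x} + \mat{x}^T \mat{L}_K \mat{x}.
\]
Viewed as a function of $\mat{y}$ (with $\mat{x}$ fixed), this is a quadratic whose gradient in $\mat{y}$ is $2\mat{L}_N \mat{y} + 2\mat{L}_M \mat{x}$, so the stationary point is $\mat{y}^\star = -\mat{L}_N^{-1}\mat{L}_M \mat{x}$.

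Next I would justify that $\mat{y}^\star$ is indeed a minimizer. The key point, noted immediately after Definition~\ref{def: Schur}, is that $\mat{L}_N$ is invertible; in fact, as the principal submatrix of the Laplacian of a connected graph obtained by deleting at least one vertex (any terminal in $K$), $\mat{L}_N$ is strictly positive definite. Hence the quadratic in $\mat{y}$ is strictly convex and its unique global minimum is attained at $\mat{y}^\star$.

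Substituting $\mat{y}^\star$ back into the expansion,
\[
\min_{\mat{y}} \begin{bmatrix} \mat{y} \\ \mat{x} \end{bmatrix}^T \mat{L}(G) \begin{bmatrix} \mat{y} \\ \mat{x} \end{bmatrix}
= \mat{x}^T \mat{L}_M^T \mat{L}_N^{-1} \mat{L}_M \mat{x} - 2\,\mat{x}^T \mat{L}_M^T \mat{L}_N^{-1} \mat{L}_M \mat{x} + \mat{x}^T \mat{L}_K \mat{x}
= \mat{x}^T\bigl(\mat{L}_K - \mat{L}_M^T \mat{L}_N^{-1} \mat{L}_M\bigr)\mat{x},
\]
which equals $\mat{x}^T \mat{S}(G,K)\mat{x}$ by Definition~\ref{def: Schur}. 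This gives the claimed identity.

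There is essentially no obstacle here; the only subtlety is the invertibility and positive-definiteness of $\mat{L}_N$, which is why we must either assume $G$ is connected (as is standard in this setting, and consistent with the hypothesis used in Lemma~\ref{lem: equivPseudo}) or argue connected-component-wise. Everything else is a two-line completion of the square.
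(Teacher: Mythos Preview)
Your argument is correct and is exactly the standard completion-of-the-square derivation of the Schur complement variational identity. Note, however, that the paper does not supply its own proof of this lemma: it is quoted verbatim as \cite{MillerP13}, Lemma~B.2, and used as a black box. So there is nothing in the paper to compare against beyond the citation; your write-up would serve perfectly well as a self-contained proof if one were desired.
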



We are now ready to show the following transitive property of ASCs.

\begin{lemma}[Transitivity of ASCs] \label{lem: multiComb}

If $H'$ is an $(1 \pm \varepsilon)$-\emph{ASC} of $G$ with respect to $K'$, and $H$ is an $(1 \pm \varepsilon)$-\emph{ASC} of $H'$ with respect to $K$, where $K' \supseteq K$, then $H$ is an $(1 \pm \varepsilon)^2$-\emph{ASC} of $G$ with respect to $K$.
\end{lemma}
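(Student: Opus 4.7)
The plan is to chain the two approximation guarantees through the Schur complement itself. Concretely, I will first show the intermediate claim that $\mat{L}(H)$ is a good approximation to $\mat{S}(H', K)$ (given by hypothesis), and then that $\mat{S}(H', K)$ is a $(1 \pm \varepsilon)$-spectral approximation of $\mat{S}(G, K)$; composing the two factors yields the desired $(1\pm\varepsilon)^2$ bound.

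For the first step there is nothing to do: the hypothesis that $H$ is a $(1\pm\varepsilon)$-ASC of $H'$ with respect to $K$ means exactly that $(1-\varepsilon)\,\mat{x}^T \mat{S}(H',K) \mat{x} \leq \mat{x}^T \mat{L}(H) \mat{x} \leq (1+\varepsilon)\,\mat{x}^T \mat{S}(H',K) \mat{x}$ for every $\mat{x} \in \mathbb{R}^{|K|}$. The heart of the argument is the second step: transferring the spectral approximation $\mat{L}(H') \approx_{1\pm\varepsilon} \mat{S}(G, K')$ down to Schur complements over the smaller terminal set $K \subseteq K'$. For this I will invoke Lemma~\ref{lem: Schur}, which states
\[
\mat{x}^T \mat{S}(H', K) \mat{x} \;=\; \min_{\mat{z}} \begin{bmatrix} \mat{z} \\ \mat{x} \end{bmatrix}^T \mat{L}(H') \begin{bmatrix} \mat{z} \\ \mat{x} \end{bmatrix},
\]
where $\mat{z}$ ranges over $\mathbb{R}^{|K' \setminus K|}$, and similarly expresses $\mat{x}^T \mat{S}(G, K') \mat{x}$ as a minimum over a vector $\mat{y}$ supported on $V \setminus K'$ of a quadratic form in $\mat{L}(G)$.

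Since the spectral approximation $\mat{L}(H') \approx_{1\pm\varepsilon} \mat{S}(G,K')$ holds pointwise in the vector $[\mat{z}^T,\mat{x}^T]^T$, taking the minimum over $\mat{z}$ on both sides preserves the $(1\pm\varepsilon)$ factors and yields
\[
(1-\varepsilon)\,\mat{x}^T \mat{S}(G,K') \mat{x} \;\leq\; \mat{x}^T \mat{S}(H',K) \mat{x} \;\leq\; (1+\varepsilon)\,\mat{x}^T \mat{S}(G,K') \mat{x},
\]
where I am using that $\mat{x}$ is treated as a fixed block in the quadratic form of $\mat{S}(G,K')$, so Lemma~\ref{lem: Schur} applied to $\mat{S}(G,K')$ with terminal subset $K$ collapses the nested minima $\min_{\mat{z}} \min_{\mat{y}}$ into a single minimum over the full non-terminal set $V \setminus K$, recovering $\mat{x}^T \mat{S}(G, K) \mat{x}$. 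Putting the two chains together gives
\[
(1-\varepsilon)^2\,\mat{x}^T \mat{S}(G,K) \mat{x} \;\leq\; \mat{x}^T \mat{L}(H) \mat{x} \;\leq\; (1+\varepsilon)^2\,\mat{x}^T \mat{S}(G,K) \mat{x},
\]
which is precisely the statement that $H$ is a $(1\pm\varepsilon)^2$-ASC of $G$ with respect to $K$.

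The only subtle point — and the step I expect to be the main obstacle to state crisply — is the observation that Schur complements compose: minimizing out $V \setminus K'$ and then $K' \setminus K$ is the same as minimizing out $V \setminus K$ in one shot. This follows straightforwardly from Lemma~\ref{lem: Schur} applied twice (once to $G$ with terminal set $K'$ and once to $\mat{S}(G,K')$ with terminal set $K$), together with the fact that $\min_{\mat{y}}\min_{\mat{z}} = \min_{(\mat{y},\mat{z})}$. Everything else is routine manipulation with the spectral inequalities, and no additional structural hypothesis on $G$, $H'$, $H$, $K$, or $K'$ is needed beyond what is assumed.
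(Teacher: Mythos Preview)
Your proposal is correct and follows essentially the same approach as the paper: both arguments use Lemma~\ref{lem: Schur} (the variational characterization of the Schur complement) together with the composition property $\mat{S}(\mat{S}(G,K'),K)=\mat{S}(G,K)$ to pass the $(1\pm\varepsilon)$ approximation from $\mat{L}(H')\approx\mat{S}(G,K')$ down to $\mat{S}(H',K)\approx\mat{S}(G,K)$, and then chain with the second hypothesis. The paper unpacks ``taking the minimum on both sides'' by explicitly choosing the minimizing vector for each direction, whereas you state it abstractly; the only blemish is that your displayed inequality writes $\mat{x}^T\mat{S}(G,K')\mat{x}$ with $\mat{x}\in\mathbb{R}^{|K|}$, which is dimensionally off---it should read $\mat{x}^T\mat{S}(G,K)\mat{x}$, as your surrounding text makes clear.
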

\begin{proof}
Let $k=|K|$ and $k'=|K'|$. By the assumption of the lemma, the following inequalities hold:
\[
 \forall \vect{x} \in \mathbb{R}^{k'},~(1-\varepsilon)\vect{x}^{T}\mat{S}(G,K')\vect{x} \leq \vect{x}^{T}{\mat{L}(H')}\vect{x} \leq (1+\varepsilon)\vect{x}^{T}\mat{S}(G,K')\vect{x},
 \]
and
\[
 \forall \vect{x} \in \mathbb{R}^{k},~(1-\varepsilon)\vect{x}^{T}\mat{S}(H',K)\vect{x} \leq \vect{x}^{T}{\mat{L}(H)}\vect{x} \leq (1+\varepsilon)\vect{x}^{T}\mat{S}(H',K)\vect{x}.
 \]
We need to show that
\[
	\forall \vect{x} \in \mathbb{R}^{k},~ (1-\varepsilon)^2 \vect{x}^{T}{\mat{S}(G,K)}\vect{x} \leq \vect{x}^{T}{\mat{L}(H)}\vect{x} \leq (1+\varepsilon)^2 \vect{x}^{T}{\mat{S}(G,K)}\vect{x}.
\]

We first show the upper bound on $\vect{x}^{T}{\mat{L}(H)}\vect{x}$. Note that since $K' \supseteq K$, using Gaussian elimination, $\mat{S}(G,K)$ can be constructed by first constructing $\mat{S}(G,K')$ from $G$ and then constructing $\mat{S}(G,K)$ from $\mat{S}(G,K')$ using Gaussian elimination. Thus $\mat{S}(G,K)$ is the Schur complement of $\mat{S}(G,K')$ with respect to $K$. For any $\vect{x} \in \mathbb{R}^{k}$, let $\vect{y}$ be the vector that attains the minimum value in Lemma~\ref{lem: Schur} for $\mat{S}(G,K')$. If we define $\vect{x'} = \begin{bmatrix} \vect{y} &  \vect{x} \end{bmatrix}^T \in \mathbb{R}^{k'}$, then we get
\begin{align*}
 \vect{x}^{T}{\mat{L}(H)}\vect{x} & \leq (1+\varepsilon)\vect{x}^{T}\mat{S} (H',K)\vect{x} \\ 
 & \leq (1+\varepsilon) \vect{x'}^T \mat{L}(H') \vect{x'} \\ 
 & \leq (1+\varepsilon)^2 \vect{x'}^{T}\mat{S}(G,K')\vect{x'} \\
 & = (1+\varepsilon)^2 \vect{x}^{T}{\mat{S}(G,K)}\vect{x}.
\end{align*}
We now give the lower bound on $\vect{x}^{T}{\mat{L}(H)}\vect{x}$. Recall that $\mat{S}(H',K)$ is the Schur complement of $\mat{L}(H')$ with respect to $K$. For any vertex $\vect{x} \in \mathbb{R}^{k}$, let $\vect{y}$ be the vector given by Lemma~\ref{lem: Schur} for $\mat{L}(H')$. If we define $\vect{x''} = \begin{bmatrix} \vect{y} &  \vect{x} \end{bmatrix}^T \in \mathbb{R}^{k'}$, then we get
\begin{align*}
 \vect{x}^{T}{\mat{L}(H)}\vect{x} & \geq (1-\varepsilon)\vect{x}^{T}\mat{S} (H',K)\vect{x} \\ 
 & = (1-\varepsilon) \vect{x''}^T \mat{L}(H') \vect{x''} \\ 
 & \geq (1-\varepsilon)^2 \vect{x''}^{T}\mat{S}(G,K')\vect{x''} \\
 & \geq (1-\varepsilon)^2 \vect{x}^{T}{\mat{S}(G,K)}\vect{x}.
\end{align*}\end{proof}

\paragraph{Composability of ASCs.} To show the composability of ASCs, we first review an equivalent way of defining Schur complements. The main idea is to view $\mat{S}(G,K)$ as a multi-graph where each multi-edge corresponds to a walk in $G$ that starts and ends at $K$, but has all intermediate vertices in $V \setminus K$. We call such a walk a \emph{terminal-free} walk that starts and ends in $K$. Formally, a terminal-free walk \[u_0, \ldots, u_\ell\] of length $\ell$, with $u_0,u_\ell \in K$ and $u_i \in V \setminus K$, for $i=1,\ldots,\ell$ corresponds to a multi-edge between $u_0$ and $u_\ell$ in $\mat{S}(G,K)$ with weight given by
\begin{equation} \label{eq: weightSchur}
	w^{\mat{S}(G,K)}_{u_0,\ldots,u_{\ell}} = \frac{\prod_{0 \leq i \leq \ell} w^{G}_{u_i u_i+1}}{\prod_{0 < i < \ell} d^{G}_{u_i}}.
\end{equation}

This connection is formally proven in the lemma below.

\begin{lemma}[\cite{DurfeePPR17:arxiv}, Lemma 5.4] \label{lem: EquivSchur} 
Given a graph $G$ and a partition of its vertices into $K$ and $V \setminus K$, the graph $G^{K}$ obtained by forming an union over all multi-edges corresponding to terminal-free walks that start and end in $K$, with weights given by Equation (\ref{eq: weightSchur}) is exactly $\mat{S}(G,K)$.
\end{lemma}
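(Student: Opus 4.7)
The plan is to prove this identity by expanding the matrix inverse hidden in the Schur complement as a Neumann series, and then identifying each term of that series with a terminal-free walk of a given length. Write $\mat{L}_N = \mat{D}_N - \mat{A}_{NN}$, where $\mat{D}_N$ is the diagonal matrix of weighted degrees in the \emph{full} graph $G$ restricted to non-terminal rows, and $\mat{A}_{NN}$ is the weighted adjacency matrix of the subgraph induced on $N$. Factoring, $\mat{L}_N = \mat{D}_N(\mat{I} - \mat{D}_N^{-1}\mat{A}_{NN})$. Because $G$ is connected and $K$ is non-empty, at least one vertex of $N$ has a neighbour in $K$, so the substochastic random-walk matrix $\mat{W} := \mat{D}_N^{-1}\mat{A}_{NN}$ has spectral radius strictly less than $1$. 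This gives the convergent expansion
\[
\mat{L}_N^{-1} \;=\; \sum_{k\ge 0} \mat{W}^{k}\mat{D}_N^{-1}.
\]

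Next I would plug this expansion into $\mat{L}_M^T\mat{L}_N^{-1}\mat{L}_M$ and read off the $(u_0,u_\ell)$ entry for $u_0,u_\ell\in K$. Since $\mat{L}_M(v,u) = -w_{vu}$ for $v\in N,u\in K$ (and zero if there is no such edge), the $k$-th term in the series produces exactly
\[
\sum_{u_1,\dots,u_{k+1}\in N} \frac{w_{u_0u_1}\,w_{u_1u_2}\cdots w_{u_{k+1}u_\ell}}{d_{u_1}d_{u_2}\cdots d_{u_{k+1}}},
\]
i.e., the sum over terminal-free walks of length $\ell = k+2$ from $u_0$ to $u_\ell$, weighted according to Equation~(\ref{eq: weightSchur}). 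Summing over $k\ge 0$ collects all walks of length $\ge 2$. The remaining contribution to the off-diagonal entries of $\mat{S}(G,K)$ comes from $\mat{L}_K(u_0,u_\ell) = -w_{u_0u_\ell}$, which precisely accounts for the terminal-free walk of length $1$ (the edge $u_0u_\ell$ in $G$, for which the product over intermediate vertices is empty).

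Combining the two pieces, for $u_0\ne u_\ell$ in $K$ we obtain
\[
\mat{S}(G,K)(u_0,u_\ell) \;=\; -\sum_{\ell\ge 1}\;\sum_{\substack{u_0,u_1,\dots,u_\ell \\ u_1,\dots,u_{\ell-1}\in N}} w^{\mat{S}(G,K)}_{u_0,\dots,u_\ell},
\]
which is exactly the off-diagonal Laplacian entry of the multigraph $G^{K}$ formed by overlaying all terminal-free walks with weights given by (\ref{eq: weightSchur}). For the diagonal entries, I would argue without further computation: one checks that $\mat{S}(G,K)\mathbf{1}_K = 0$ using $\mat{L}(G)\mathbf{1}_V = 0$ together with the block identity $\mat{L}_N^{-1}\mat{L}_M\mathbf{1}_K = -\mathbf{1}_N$, so $\mat{S}(G,K)$ is itself a graph Laplacian. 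Since $\mat{L}(G^{K})$ is also a Laplacian, and the two matrices agree off the diagonal, they must agree on the diagonal as well.

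The routine part is the algebraic expansion; the one subtle point that I would emphasize is that $\mat{D}_N$ records the \emph{full} $G$-degree rather than the $N$-induced degree, which is what makes the substochasticity argument work and what makes the denominator $\prod_{0<i<\ell} d^{G}_{u_i}$ come out correctly in (\ref{eq: weightSchur}). The rest is bookkeeping to match indices between matrix entries and walks.
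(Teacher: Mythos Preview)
The paper does not prove this lemma; it is quoted from \cite{DurfeePPR17:arxiv} and used as a black box, so there is no in-paper argument to compare against. Your Neumann-series approach is the standard one and is correct in outline: writing $\mat{L}_N=\mat{D}_N(\mat{I}-\mat{W})$ with $\mat{W}=\mat{D}_N^{-1}\mat{A}_{NN}$, expanding $(\mat{I}-\mat{W})^{-1}=\sum_{k\ge 0}\mat{W}^k$, and matching the $(u_0,u_\ell)$ entry of $\mat{L}_M^T\mat{L}_N^{-1}\mat{L}_M$ with terminal-free walks of length $k+2$ is exactly how this identity is typically established, and your treatment of the length-$1$ walks via $\mat{L}_K$ and of the diagonal via $\mat{S}(G,K)\mathbf{1}_K=0$ is clean.

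One point that deserves a sharper justification is the claim that $\rho(\mat{W})<1$. Merely observing that \emph{some} row of the substochastic matrix $\mat{W}$ has sum strictly below $1$ is not enough in general; you need that every connected component of $G[N]$ touches $K$, which does follow from $G$ being connected and $K\neq\emptyset$. A crisper route, and one that avoids any appeal to Perron--Frobenius, is to use what the paper already states: $\mat{L}_N$ is symmetric positive definite, hence so is $\mat{D}_N^{-1/2}\mat{L}_N\mat{D}_N^{-1/2}=\mat{I}-\mat{D}_N^{-1/2}\mat{A}_{NN}\mat{D}_N^{-1/2}$, and the same irreducible diagonal-dominance reasoning applied to $\mat{D}_N+\mat{A}_{NN}$ gives the other inequality, yielding $\lVert \mat{D}_N^{-1/2}\mat{A}_{NN}\mat{D}_N^{-1/2}\rVert_2<1$ and hence $\rho(\mat{W})<1$. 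With that tightened, your proof is complete.
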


We next show that if a graph can be viewed as a combination of two graphs along some subset of shared terminals, combining the respective sparsifiers of these two graphs in the same way gives a sparsifer for the original graph.

Formally, Let $G_1=(V_1,E_1)$ and $G_2=(V_2,E_2)$ be edge-disjoint graphs with terminals $K_1$ and $K_2$, respectively. Furthermore, assume that all vertices in the intersection of $V_1$ and $V_2$, if exist, are terminals in both graphs. That is, $(V_1 \cap V_2) \subset K_i$, for $i=\{1,2\}$. The \emph{merge} of $G_1$ and $G_2$ is the graph $G =(V_1 \cup V_2, E_1 \cup E_2)$ with terminals $K_1 \cup K_2$ formed by identifying the terminals in $S$. We denote this operation by $G := G_1 \oplus G_2$.

\begin{lemma}[Composability of Schur complement]\label{lemm: exactMerge}

Let $G := G_1 \oplus G_2$. If $H_1$ is an $1$\emph{-ASC} of $G_1$ with respect to $K_1$, and $H_2$ is an $1$\emph{-ASC} of $G_2$ with respect to $K_2$, then $H := H_1 \oplus H_2$ is an $1$\emph{-ASC} of $G$ with respect to $K$.
\end{lemma}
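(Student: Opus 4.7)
My plan is to work with the combinatorial (walk-based) description of Schur complements given in Lemma~\ref{lem: EquivSchur}. Since $H_1$ and $H_2$ are $1$-ASCs, we have $\mat{L}(H_1) = \mat{S}(G_1,K_1)$ and $\mat{L}(H_2) = \mat{S}(G_2,K_2)$ exactly, and the merge operation $\oplus$ on edge-disjoint graphs that identify their shared terminals corresponds, at the level of Laplacians, to ordinary addition (with $0$-padding onto $K = K_1 \cup K_2$). So it suffices to prove the matrix identity
\[
\mat{S}(G,K) \;=\; \mat{S}(G_1,K_1) \;+\; \mat{S}(G_2,K_2),
\]
and by Lemma~\ref{lem: EquivSchur} it is enough to exhibit a weight-preserving bijection between the multiset of terminal-free walks in $G$ (with respect to $K$) and the disjoint union of the multisets of terminal-free walks in $G_1$ (w.r.t.\ $K_1$) and in $G_2$ (w.r.t.\ $K_2$).

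The key structural observation, which I would establish first, is that the hypothesis $V_1 \cap V_2 \subseteq K_1 \cap K_2$ forces every non-terminal of $G$ to belong to exactly one of $V_1 \setminus K$ or $V_2 \setminus K$, and moreover every edge incident to such a vertex lies in the corresponding $E_i$ (because the vertex does not appear in the other $V_j$). Consequently, along a terminal-free walk $u_0, u_1, \ldots, u_\ell$ in $G$, the index of the piece ($G_1$ or $G_2$) is locally constant at every interior vertex; since consecutive edges share an interior vertex when $\ell \geq 2$, the entire walk lies in a single $G_i$. The $\ell = 1$ case is even simpler, as any single edge already lies in exactly one $E_i$. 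Once the walk is confined to $G_i$, its endpoints (which are in $K$ and in $V_i$) must lie in $K \cap V_i = K_i$, again using $V_1 \cap V_2 \subseteq K_1 \cap K_2$.

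Going the other direction, any terminal-free walk of $G_i$ w.r.t.\ $K_i$ is automatically terminal-free in $G$ w.r.t.\ $K$: its endpoints lie in $K_i \subseteq K$, and any interior vertex $u$ lies in $V_i \setminus K_i$; if such a $u$ also lay in $K = K_1 \cup K_2$, it would have to lie in $K_{3-i}$, hence in $V_1 \cap V_2 \subseteq K_1 \cap K_2 \subseteq K_i$, a contradiction. This establishes the bijection. The bijection is weight-preserving because, for every walk living in $G_i$, the edge weights appearing in the numerator of \eqref{eq: weightSchur} are the same in $G$ and $G_i$ (the edge sets agree on $E_i$), and the weighted degrees $d^G_{u_j}$ of the interior vertices $u_j$ coincide with $d^{G_i}_{u_j}$, precisely because every edge incident to such a $u_j$ belongs to $E_i$.

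Summing over all terminal-free walks and applying Lemma~\ref{lem: EquivSchur} to each side yields $\mat{S}(G,K) = \mat{S}(G_1,K_1) + \mat{S}(G_2,K_2)$, and combining with $\mat{L}(H_i) = \mat{S}(G_i,K_i)$ and $\mat{L}(H_1 \oplus H_2) = \mat{L}(H_1) + \mat{L}(H_2)$ (after $0$-padding) completes the proof. The only technical point that must be handled carefully, and where I expect the bulk of the write-up to go, is the argument that interior vertices of a $G$-walk never switch between the two sides; once the containment $V_1 \cap V_2 \subseteq K_1 \cap K_2$ is used correctly, the rest is essentially bookkeeping.
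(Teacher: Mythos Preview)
Your proposal is correct and follows essentially the same approach as the paper: both proofs invoke the terminal-free walk characterization of the Schur complement (Lemma~\ref{lem: EquivSchur}) and argue that every terminal-free walk in $G$ w.r.t.\ $K$ is confined to a single $G_i$ because all shared vertices lie in $K_1 \cap K_2$. Your write-up is in fact a bit more careful than the paper's on one point the paper leaves implicit, namely that the weighted degrees $d^{G}_{u_j}$ of interior vertices agree with $d^{G_i}_{u_j}$ (so the weight formula \eqref{eq: weightSchur} is preserved), and you state the bijection and its inverse explicitly rather than via a case split on terminal locations; but the substance is the same.
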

\begin{proof}
Note that $H_i = \mat{S}(G_i, K_i)$, for $i=\{1,2\}$, and recall that the $G_1$ and $G_2$ share the terminals in some non-empty subset $S$, i.e., $S \subset K_i$, for $i=\{1,2\}$. To prove the lemma, we need to show that 

\[ \mat{S}(G_1, K_1) \oplus \mat{S}(G_2, K_2) = \mat{S}(G,K). \] 

We do so by making use of Lemma~\ref{lem: EquivSchur}. More specifically, we argue that every multi-edge (along with its corresponding weight) in $\mat{S}(G,K)$ is contained either in $\mat{S}(G_1, K_1)$ or $\mat{S}(G_2, K_2)$. We distinguish the following cases.

(1) For any two terminals $t$ and $t'$ in $K_1 \setminus S$, we have that $\mat{S}(G_1, K_1)$ contains all the multi-edges between $t$ and $t'$ in $\mat{S}(G,K)$. This is because $G_1$ and $G_2$ are edge-disjoint, and there is no terminal-free walk between $t$ and $t'$ in $G$ that does not use a terminal in $S$. The same reasoning can be applied to terminal pairs in $K_2 \setminus S$. 

(2) For any two terminals $s$ and $t$ in $S \times K$, we have that the corresponding multi-edges in $\mat{S}(G,K)$, are either contained in $\mat{S}(G_1, K_1)$ or $\mat{S}(G_2, K_2)$. If $t \in K_1 \setminus S$ or $t \in K_2 \setminus S$, then the same reasoning as in case (1) applies. However, if $t \in S$, then $S(G_1,K_1)$ contains all the multi-edges that correspond to terminal-free walks between $s$ and $t$ that use the edges in $G_1$, and $S(G_2,K_2)$ contains all the multi-edges that correspond to terminal-free walks between $s$ and $t$ that use the edges in $G_2$.

(3) For any two terminals $t$ and $t'$ in $(K_1 \setminus S) \times (K_2 \setminus S)$, there is no terminal-free walk between $t$ and $t'$ in $G$ that does not use a terminal in $S$, since $S$ is a separator of $G$. Thus there are no multi-edges between $t$ and $t'$ in $\mat{S}(G,K)$, so the merge $\mat{S}(G_1, K_1) \oplus \mat{S}(G_2, K_2)$ correctly does not add such edges.   
\end{proof}

\begin{lemma}[Composition of ASCs] \label{lem: approxMerge}

Let $G := G_1 \oplus G_2$. If $H_1'$ is an $(1\pm \varepsilon)$\emph{-ASC} of $G_1$ with respect to $K_1$, and $H_2'$ is an $(1 \pm \varepsilon)$\emph{-ASC} of $G_2$ with respect to $K_2$, then $H' := H_1' \oplus H_2'$ is an $(1 \pm \varepsilon)$\emph{-ASC} of $G$ with respect to $K$.
\end{lemma}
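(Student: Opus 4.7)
The approach is to reduce the approximate statement to the combination of two already-established facts: (i) the exact composability of Schur complements (Lemma~\ref{lemm: exactMerge}), and (ii) the decomposability of spectral sparsifiers under edge partitions (Lemma~\ref{lemm: decomposability}). The key observation is that the $\oplus$ operation on graphs corresponds, at the Laplacian level, to addition of quadratic forms once we embed both summands into $\mathbb{R}^K$ with $K = K_1 \cup K_2$.

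First I would set up notation: let $A_i := \mathbf{S}(G_i,K_i)$ for $i=1,2$, viewed as Laplacian matrices on the common vertex set $K$ by extending with zero rows and columns for vertices of $K$ that do not belong to $K_i$. By Lemma~\ref{lemm: exactMerge}, the exact Schur complements satisfy $\mathbf{S}(G_1,K_1) \oplus \mathbf{S}(G_2,K_2) = \mathbf{S}(G,K)$, which at the matrix level means
\[
\mathbf{S}(G,K) \;=\; A_1 + A_2.
\]
Similarly, after embedding into $\mathbb{R}^K$, the Laplacian of the merged graph satisfies $\mathbf{L}(H') = \mathbf{L}(H_1') + \mathbf{L}(H_2')$, since $H_1'$ and $H_2'$ are edge-disjoint once their shared terminals are identified.

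Next, I would apply the definition of an $(1\pm\varepsilon)$-ASC to each piece. This yields, for every $\vect{x} \in \mathbb{R}^{K}$,
\[
(1-\varepsilon)\,\vect{x}^T A_1 \vect{x} \;\le\; \vect{x}^T \mathbf{L}(H_1')\vect{x} \;\le\; (1+\varepsilon)\,\vect{x}^T A_1 \vect{x},
\]
\[
(1-\varepsilon)\,\vect{x}^T A_2 \vect{x} \;\le\; \vect{x}^T \mathbf{L}(H_2')\vect{x} \;\le\; (1+\varepsilon)\,\vect{x}^T A_2 \vect{x}.
\]
(Strictly speaking, the ASC guarantees are stated over $\mathbb{R}^{K_i}$, but since the extensions of $A_i$ and $\mathbf{L}(H_i')$ to $\mathbb{R}^K$ have matching zero blocks, the inequalities lift verbatim to all of $\mathbb{R}^K$.) Summing the two chains of inequalities and using $\mathbf{L}(H') = \mathbf{L}(H_1') + \mathbf{L}(H_2')$ together with $\mathbf{S}(G,K) = A_1 + A_2$ immediately gives
\[
(1-\varepsilon)\,\vect{x}^T \mathbf{S}(G,K)\vect{x} \;\le\; \vect{x}^T \mathbf{L}(H')\vect{x} \;\le\; (1+\varepsilon)\,\vect{x}^T \mathbf{S}(G,K)\vect{x},
\]
which is exactly the definition of $H'$ being a $(1\pm\varepsilon)$-ASC of $G$ with respect to $K$. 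Alternatively, one may invoke Lemma~\ref{lemm: decomposability} directly on the edge-partitioned graph $\mathbf{S}(G,K) = \mathbf{S}(G_1,K_1) \cup \mathbf{S}(G_2,K_2)$ to obtain the same conclusion.

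There is no real obstacle beyond the bookkeeping required to justify the embedding of $A_i$ and $\mathbf{L}(H_i')$ into the common space $\mathbb{R}^K$; this is routine because the rows and columns corresponding to vertices outside $K_i$ are zero in both $A_i$ and the extended $\mathbf{L}(H_i')$, so the quadratic form identity is preserved. Consequently the error does \emph{not} compound across the composition, in contrast to the transitivity Lemma~\ref{lem: multiComb} where it does; this non-accumulation is precisely what makes the separator-tree-based dynamic algorithm viable.
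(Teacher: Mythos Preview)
Your proposal is correct and follows essentially the same approach as the paper: both arguments invoke Lemma~\ref{lemm: exactMerge} to identify $\mathbf{S}(G,K)$ with the sum (merge) of the two exact Schur complements, and then use the decomposability of spectral sparsifiers (Lemma~\ref{lemm: decomposability}, or equivalently your direct summation of the two quadratic-form inequalities) to conclude. The paper phrases the second step by introducing the exact Schur complements $H_1,H_2$ as intermediate graphs on the common vertex set and citing Lemma~\ref{lemm: decomposability}, whereas you unfold that lemma by hand; the content is the same.
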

\begin{proof}
First, let $H_1$ be an $1$-ASC of $G_1$ with respect to $K_1$, and $H_2$ be an $1$-ASC of $G_2$ with respect to $K_2$. By Lemma~\ref{lemm: exactMerge}, $H := H_1 \oplus H_2$ is an $1$-ASC of $G$ with respect to $K$, i.e., $\mat{L}(H) = \mat{S}(G,K)$. Now note that we can treat $H_i$ and $H_i'$, for $i=\{1,2\}$ as graphs defined on the same vertex set $V(H)$, by adding appropriate isolated vertices. By assumption, each $H_i'$ is an $(1\pm \varepsilon)$-spectral sparsifier of $H_i$ and thus, applying the Decomposition Lemma~\ref{lemm: decomposability} gives that $H':= H_1' \oplus H_2'$ is an $(1\pm \varepsilon)$-spectral sparsifier of $H$, or equivalently, $H'$ is an $(1 \pm \varepsilon)$-ASC of $G$.
\end{proof}

\section{Dynamic Algorithms for Effective Resistances in Separable Graphs}
In this section, we first present our fully dynamic algorithm for maintaining a $(1\pm \delta)$-approximate Schur complement (i.e., prove Theorem~\ref{thm:dynamic_asc}) and then use it give a dynamic algorithm for $(1+\varepsilon)$-approximating all-pairs effective resistances in separable graphs and prove Theorem~\ref{thm:upperbound}. For simplicity, we assume that the separator of $G$ can be computed in $\tilde{O}(n)$ time. 
\subsection{Dynamic Approximate Schur Complement}\label{sec:dynamic_ASC}

Let $\delta \in (0,1)$. Let $K\subset V$ be a set of terminals with $|K|\leq O(\sqrt{n})$. We give a data-structure for maintaining a $(1 \pm \delta)$-ASC of a $O(\sqrt{n})$-separable graph $G$ with respect to a set $K'$ of $O(\sqrt{n})$ vertices (which contains the terminal set $K$) that supports \textsc{Insert} and \textsc{Delete} operations as defined before. In addition, it supports the following operation:
\begin{itemize}

\item \textsc{AddTerminal}$(u)$: Add the vertex $u$ to the terminal set $K$, as long as $\abs{K} \leq O(\sqrt{n})$. 
\end{itemize}
%


\paragraph*{Data Structure.} Throughout we compute and maintain a balanced separator $S(G)$ of $G$ that contains $K$ and satisfies that $|S(G)|\leq O(\sqrt{n})$. We let $K'=S(G)$ and we will maintain a $(1\pm \delta)$-ASC of $G$ w.r.t. $K'$. By definition of boundary vertices,  $K'=\partial(G)$. Let $\delta'=\frac{\delta}{c\log n+1}$ for some constant $c$. In our dynamic algorithm, we will maintain a separator tree $\mathcal{T}(G)$ (see {Section~\ref{sec: sepTree}) such that for each node $H \in \mathcal{T}(G)$, we maintain its separator $S(H)$ and a set $\buffer(H)$ of edges of $H$, which is initially empty, and an ASC $H'$ of $H$ w.r.t. $\partial(H)$. 
Throughout the updates, the set $\buffer(H)$ will denote the subset of edges which are only contained in $H$ while contained in neither of its children. Let $\mathcal{D}(G,\delta)$ denote such a data-structure. We recompute $\mathcal{D}(G,\delta)$ every $\Theta(\sqrt{n})$ operations using the initialization below.


\paragraph*{Initialization.} We show how to efficiently compute the ASC $H'$ for each node $H$ from $\mathcal{T}(G)$. We do this in a bottom-up fashion by first calling 
Algorithm~\ref{alg: approxSchurNode} on each leaf node and then on the non-leaf nodes, where $\textsc{ApproxSchur}$ is the procedure from Lemma~\ref{lem: ApproxSchur}.

In what follows, whenever we compute an approximate Schur complement, we assume that procedure \textsc{ApproxSchur} from Lemma~\ref{lem: ApproxSchur} is invoked on the corresponding subgraph and its boundary vertices, with error $\delta'$ and error probability $\gamma = 1/n^3$. In the following, we will assume that all the calls to the \textsc{ApproxSchur} are correct.




\begin{algorithm2e}[htb!]
\caption{\textsc{ApproxSchurNode}$(H, \partial(H),\delta')$}
\label{alg: approxSchurNode}

Set $\gamma = 1/n^{3}$. \\
\If{$H$ is a leaf} 
{
	Set $H' \gets \textsc{ApproxSchur}(H,\partial(H),\delta', \gamma)$. 
} 

\If{$H$ is a non-leaf}
{
    Let $\child_1(H),\child_2(H)$ be the children of $H$. \\
    Let $\child_i(H)'$ be the ASC of $\child_i(H)$, for $i=1,2$.  \\ 
    Set $R \gets \child_1(H)' \oplus_{\phi} \child_2(H)'$ and $E(R) \gets E(R) \cup \buffer(H)$.  \\
    Set $H' \gets \textsc{ApproxSchur}(R,\partial(H),\delta', \gamma)$.
}

\Return $H'$. 

\end{algorithm2e}

%
%

The following lemma shows that after invoking Algorithm~\ref{alg: approxSchurNode} in a bottom-up fashion, we have computed the ASC for every node in $\mathcal{T}(G)$.
\begin{lemma} \label{lem: dsCorrectness}

Let $H \in \mathcal{T}(G)$ be a node of height $\eta(H) \geq 0$ and $\emph{X}(H) = \emptyset$. Then $H' = \textsc{ApproxSchur}$ $\textsc{Node}(H,\partial(H),\varepsilon)$ is an $(1\pm \delta')^{\eta(H)+1}$\emph{-ASC} of $H$ with respect to $\partial(H)$. 
\end{lemma}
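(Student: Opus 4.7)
My plan is to argue by induction on the height $\eta(H)$, implicitly carrying along the hypothesis that every descendant of $H$ in $\mathcal{T}(G)$ also has an empty buffer (which is the situation in initialization, the setting in which Lemma~\ref{lem: dsCorrectness} is invoked, so that the precomputed children's ASCs referenced by Algorithm~\ref{alg: approxSchurNode} were themselves produced by the recursive procedure under the same assumption).

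\textbf{Base case.} If $\eta(H)=0$, then $H$ is a leaf and Algorithm~\ref{alg: approxSchurNode} directly returns $H' = \textsc{ApproxSchur}(H,\partial(H),\delta',\gamma)$. I will invoke Lemma~\ref{lem: ApproxSchur} to conclude that $H'$ is a $(1\pm\delta')$-ASC of $H$ with respect to $\partial(H)$, which matches $(1\pm\delta')^{\eta(H)+1}$.

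\textbf{Inductive step.} Assume the statement for all tree nodes of height less than $\eta(H)$. By the inductive hypothesis, for $i\in\{1,2\}$ the stored approximation $\child_i(H)'$ is a $(1\pm\delta')^{\eta(\child_i(H))+1}$-ASC of $\child_i(H)$ with respect to $\partial(\child_i(H))$, which I weaken to a $(1\pm\delta')^{\eta(H)}$-ASC using $\eta(\child_i(H))\le \eta(H)-1$. Since the two children are edge-disjoint with $V(\child_1(H))\cap V(\child_2(H))=S(H)\subseteq\partial(\child_i(H))$ (properties 2 and 3 of separator trees), the merge $\oplus$ is well-defined; then Lemma~\ref{lem: approxMerge} applied to $\child_1(H)'$ and $\child_2(H)'$ yields that $R:=\child_1(H)'\oplus\child_2(H)'$ is a $(1\pm\delta')^{\eta(H)}$-ASC of $\child_1(H)\oplus\child_2(H)$ with respect to $\partial(\child_1(H))\cup\partial(\child_2(H))$. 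Because $\buffer(H)=\emptyset$, the graph added back in line~7 of Algorithm~\ref{alg: approxSchurNode} is unchanged, and property 2 of the separator tree gives $\child_1(H)\oplus\child_2(H)=H$; property 4 gives $\partial(\child_1(H))\cup\partial(\child_2(H))\supseteq\partial(H)$. Finally, Lemma~\ref{lem: ApproxSchur} ensures that $H' = \textsc{ApproxSchur}(R,\partial(H),\delta',\gamma)$ is a $(1\pm\delta')$-ASC of $R$ with respect to $\partial(H)$. I will then conclude via the transitivity of ASCs (Lemma~\ref{lem: multiComb}) that $H'$ is a $(1\pm\delta')^{\eta(H)+1}$-ASC of $H$ with respect to $\partial(H)$, completing the induction.

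\textbf{Main obstacle.} The only nontrivial point is that Lemma~\ref{lem: multiComb} is stated with a single $\varepsilon$ for both hypotheses, whereas here I am composing a $(1\pm\delta')^{\eta(H)}$-ASC with a $(1\pm\delta')$-ASC. The proof of Lemma~\ref{lem: multiComb}, however, simply multiplies the two one-sided quadratic-form inequalities, so it extends verbatim to composing a $(1\pm a)$-ASC and a $(1\pm b)$-ASC, yielding an approximation whose lower and upper multiplicative errors are $(1-a)(1-b)$ and $(1+a)(1+b)$, respectively. Instantiating with $a=(1\pm\delta')^{\eta(H)}$ (in the sense of the two one-sided bounds) and $b=\delta'$ gives exactly $(1\pm\delta')^{\eta(H)+1}$. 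Apart from this generalization, every other step is a direct bookkeeping application of the separator-tree properties and the composability/transitivity lemmas already proven in Section~\ref{sec:asc_vs}.
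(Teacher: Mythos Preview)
Your proposal is correct and follows essentially the same approach as the paper's proof: induction on $\eta(H)$, with the base case handled directly by Lemma~\ref{lem: ApproxSchur}, and the inductive step obtained by applying Lemma~\ref{lem: approxMerge} to the children's ASCs and then Lemma~\ref{lem: multiComb} to compose with the final call to \textsc{ApproxSchur}. Your treatment is in fact slightly more careful than the paper's, since you use $\eta(\child_i(H))\le \eta(H)-1$ (the paper asserts equality for both children, which need not hold) and you explicitly note the trivial extension of Lemma~\ref{lem: multiComb} to two distinct error parameters, which the paper silently relies on.
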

\begin{proof}
We proceed by induction on $\eta(H)$. For the base case, i.e., $\eta(H) = 0$, $H$ is a leaf node. By Lemma~\ref{lem: ApproxSchur} and Algorithm~\ref{alg: approxSchurNode}, $H'$ is indeed a $(1 \pm \delta')$-ASC of $H$ with respect to $\partial(H)$. 
	
	Let $H$ be a non-leaf node, i.e. $\eta(H) > 0$. Let $\child_1(H), \child_2(H)$ and $\child_1'(H), \child_2'(H)$ be defined as in Algorithm~\ref{alg: approxSchurNode}. By properties (2), (3) and (4) of $\mathcal{T}(G)$ and the fact that $X(H)=\emptyset$, we have $H = \child_1(H) \oplus \child_2(H)$. 
	By induction hypothesis, it follows that $\child_i(H)'$ is an $(1\pm \delta')^{\eta(\child_i(H)) + 1}$-ASC of $\child_i(H)$, for $i=1,2$. Using Lemma~\ref{lem: approxMerge} and since $\eta(\child_i(H)) + 1 = \eta(H)$, for $i=1,2$, we get that $R := \child_1(H)' \oplus \child_2(H)'$ is an $(1\pm \delta')^{\eta(H)}$-ASC of $H$ with respect to $V(R) := \partial(\child_1(H)) \cup \partial(\child_2(H))$. Now, since $V(R) \supseteq \partial(H)$ by property (4) of $\mathcal{T}(G)$ and by Lemma~\ref{lem: ApproxSchur}, it follows that $H'$ is an $(1 \pm \delta')$-ASC of $R$ with respect to $\partial(H)$. Finally, applying Lemma~\ref{lem: multiComb} on $R$ and $H'$ we get that $H'$ is an $(1 \pm \delta')^{\eta(H) + 1}$-ASC of $H$.
\end{proof}

Next we analyze the running time of the initialization and recomputation procedure. The lemma below shows that the ASC of any node in $\mathcal{T}(G)$ can be computed in $\tilde{O}(\sqrt{n}/\delta^{2})$.

\begin{lemma} \label{lemm: computeAscNode}
	Let $H \in \mathcal{T}(G)$ and assume that $\abs{\emph{\buffer}(H)} \leq O(\sqrt{n})$. Then we can compute an \emph{ASC} $H'= \textsc{ApproxSchurNode}(H,\partial(H),\varepsilon)$ of $H$ in $\tilde{O}(\sqrt{n}/\delta^{2})$ time.
\end{lemma}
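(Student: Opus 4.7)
The plan is to analyze Algorithm~\ref{alg: approxSchurNode} by a case split on whether $H$ is a leaf, invoking Lemma~\ref{lem: ApproxSchur} to bound the cost of the single call to \textsc{ApproxSchur} that is made in each case. Writing $m_R$ and $n_R$ for the number of edges and vertices of the graph actually passed to \textsc{ApproxSchur}, Lemma~\ref{lem: ApproxSchur} yields a running time of $\tilde{O}(m_R + n_R/{\delta'}^{2})$ once the $\log^{O(1)}(n/\gamma)$ factors are absorbed into the $\tilde{O}(\cdot)$ notation (with $\gamma = 1/n^3$).

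If $H$ is a leaf, then by property (6) of the separator tree $H$ has at most $O(\sqrt{n})$ edges and hence $O(\sqrt{n})$ vertices, while $|\partial(H)| \leq O(\sqrt{n})$ by property (5). Substituting these into Lemma~\ref{lem: ApproxSchur} with parameter $\delta'$ gives the bound $\tilde{O}(\sqrt{n}/{\delta'}^{2})$. If $H$ is a non-leaf, I will bound the size of the graph $R$ constructed by the algorithm. Each child's ASC $\child_i(H)'$ was produced by a previous invocation of \textsc{ApproxSchur} with terminal set $\partial(\child_i(H))$ of size $O(\sqrt{n})$, and hence by the edge bound in Lemma~\ref{lem: ApproxSchur} contains $O(|\partial(\child_i(H))|\,{\delta'}^{-2}\log(n/\gamma)) = \tilde{O}(\sqrt{n}/{\delta'}^{2})$ edges on at most $|\partial(\child_i(H))| = O(\sqrt{n})$ vertices. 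Merging the two children's ASCs and adding $\buffer(H)$, which by hypothesis has $O(\sqrt{n})$ edges, yields a graph $R$ with
\[
m_R = \tilde{O}(\sqrt{n}/{\delta'}^{2}) \quad\text{and}\quad n_R = O(\sqrt{n}).
\]
Applying Lemma~\ref{lem: ApproxSchur} to $R$ with terminals $\partial(H)$ and parameter $\delta'$ then takes $\tilde{O}(\sqrt{n}/{\delta'}^{2})$ time, and the merging itself is linear in the sizes involved so does not dominate.

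Finally I substitute $\delta' = \delta/(c\log n + 1)$, so ${\delta'}^{-2} = O(\log^2 n)\cdot \delta^{-2}$; the extra $\log^2 n$ factor is polylogarithmic and therefore absorbed into $\tilde{O}(\cdot)$, giving the claimed $\tilde{O}(\sqrt{n}/\delta^2)$ bound in both cases. The main subtlety to be careful about is the distinction between the \emph{vertex count} and the \emph{edge count} of the children's ASCs: because each $\child_i(H)'$ is supported on only its $O(\sqrt{n})$ boundary vertices, we have $n_R = O(\sqrt{n})$ rather than $\tilde{O}(\sqrt{n}/{\delta'}^{2})$, which is precisely what prevents the running time from blowing up to $\tilde{O}(\sqrt{n}/\delta^4)$ and lets the bound stay at the advertised $\tilde{O}(\sqrt{n}/\delta^2)$.
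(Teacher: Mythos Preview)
Your proof is correct and follows essentially the same approach as the paper: a case split on whether $H$ is a leaf, followed by an application of Lemma~\ref{lem: ApproxSchur} to bound the running time of the single call to \textsc{ApproxSchur}. Your treatment is in fact a bit more careful than the paper's in two places: you cite property~(6) (rather than property~(5)) for the $O(\sqrt{n})$ edge bound on leaf nodes, and you explicitly separate the vertex count $n_R = O(\sqrt{n})$ from the edge count $m_R = \tilde{O}(\sqrt{n}/{\delta'}^{2})$ of $R$, which the paper leaves implicit but is indeed needed to avoid a $1/\delta^{4}$ blowup in the $n\varepsilon^{-2}$ term of Lemma~\ref{lem: ApproxSchur}.
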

\begin{proof}
	We distinguish two cases. First, if $H$ is a leaf node, then by property (5) of $\mathcal{T}(G)$, we have that $\abs{E(H)} \leq O(\sqrt{n})$. The latter along with Lemma~\ref{lem: ApproxSchur} (2) imply the time to compute $H'$ is $\tilde{O}(\sqrt{n}/\delta'^{2})$. Second, if $H$ is a non-leaf node, then by Lemma~\ref{lem: ApproxSchur} (1) we know that $\abs{E(\child_i(H)')} \leq \tilde{O}(\sqrt{n}/\delta'^{2})$, for $i = 1,2$. Since by assumption $\abs{\buffer(H)} \leq O(\sqrt{n})$, we get that $\abs{R \cup \buffer(H)} \leq \tilde{O}(\sqrt{n}/\delta'^{2})$. Thus, the time to compute $H'$ on top of $R \cup \buffer(H)$ is bounded by $\tilde{O}(\sqrt{n}/\delta'^{2}) = \tilde{O}(\sqrt{n}/\delta^{2})$ (again by Lemma~\ref{lem: ApproxSchur} (2) and the choice of $\delta'$).
\end{proof}

We now analyze the running time for initializing our data-structure. Let $T_{\mathcal{D}(G)}$ denote the time required to compute $\mathcal{D}(G)$. 

\begin{lemma} \label{lem: DSrunningTime}
	
	The time $T_{\mathcal{D}(G)}$ required to compute $\mathcal{D}(G)$ is $\tilde{O}(n / \delta^{2})$.
\end{lemma}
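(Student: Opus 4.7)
The plan is to decompose $T_{\mathcal{D}(G)}$ into the cost of building the separator tree $\mathcal{T}(G)$ and the cost of populating the auxiliary data at every node, and then to bound the second term by summing the per-node cost guaranteed by Lemma~\ref{lemm: computeAscNode} across all nodes of $\mathcal{T}(G)$.

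First, I would invoke Lemma~\ref{lem: sepTreeTime} with the standing assumption $s(n) = \tilde{O}(n)$ to construct $\mathcal{T}(G)$, including the separators $S(H)$ and the boundary sets $\partial(H)$ at every node, in time $O(s(n) \log n) = \tilde{O}(n)$. During initialization I set $\buffer(H) = \emptyset$ for every node $H \in \mathcal{T}(G)$, which both satisfies the invariant that $\buffer(H)$ stores the edges of $H$ present in neither child (true at the start) and lets me invoke Lemma~\ref{lemm: computeAscNode} without worrying about its hypothesis $|\buffer(H)| \leq O(\sqrt{n})$.

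Next, I traverse $\mathcal{T}(G)$ in a bottom-up order and call $\textsc{ApproxSchurNode}(H, \partial(H), \delta')$ at every node $H$. Lemma~\ref{lemm: computeAscNode} gives a uniform $\tilde{O}(\sqrt{n}/\delta^{2})$ bound on the time spent at each node (noting that $\delta' = \delta/(c\log n + 1)$ only introduces polylogarithmic factors that are absorbed into $\tilde{O}$). By properties (6) and (7) of the separator tree, $\mathcal{T}(G)$ is a binary tree of height $O(\log n)$ with $O(\sqrt{n})$ leaf subgraphs, and hence has $O(\sqrt{n})$ nodes in total. Multiplying, the total cost of computing all ASCs is $O(\sqrt{n}) \cdot \tilde{O}(\sqrt{n}/\delta^{2}) = \tilde{O}(n/\delta^{2})$.

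Combining the two contributions yields $T_{\mathcal{D}(G)} = \tilde{O}(n) + \tilde{O}(n/\delta^{2}) = \tilde{O}(n/\delta^{2})$, as claimed. There is no real obstacle here beyond bookkeeping; the only subtlety is verifying that the preconditions of Lemma~\ref{lemm: computeAscNode} hold uniformly throughout the bottom-up sweep, which is immediate at initialization because $\buffer(H) = \emptyset$ everywhere, so the per-node budget $\tilde{O}(\sqrt{n}/\delta^{2})$ applies to every node without exception.
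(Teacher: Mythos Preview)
Your proposal is correct and follows essentially the same approach as the paper: decompose the cost into building $\mathcal{T}(G)$ via Lemma~\ref{lem: sepTreeTime} and then summing the per-node cost from Lemma~\ref{lemm: computeAscNode} over the $O(\sqrt{n})$ nodes of the tree. Your write-up is in fact a bit more careful than the paper's, explicitly noting that $\buffer(H)=\emptyset$ at initialization so the hypothesis of Lemma~\ref{lemm: computeAscNode} is met, and that the $\delta'$-to-$\delta$ conversion is absorbed by $\tilde{O}$.
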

\begin{proof}
	By Lemma~\ref{lem: sepTreeTime} recall that we can construct $\mathcal{T}(G)$ in $\tilde{O}(n)$ time. Note that by construction of the separator tree, the number of non-leaf nodes is bounded by the number of leaf nodes. Since there there are $O(\sqrt{n})$ leaf nodes, the total number of nodes in $\mathcal{T}(G)$ is $O(\sqrt{n})$. By Lemma~\ref{lemm: computeAscNode} we get that the time needed to compute an ASC $H'$ for every node $H \in \mathcal{T}(G)$ is $\tilde{O}(\sqrt{n}/\delta^{2})$. Combining the above bounds gives that $T_{\mathcal{D}(G)}$ is $\tilde{O}(n / \delta^{2})$.
\end{proof}
Since $\delta'=\frac{\delta}{c\log n+1}$ and $\eta(G)=O(\log n)$, the graph $G'$ is a $(1\pm\delta)$-ASC of $G$ w.r.t. $\partial(G)$.

\paragraph*{Handling Edge Insertions.} We now describe the \textsc{Insert} operation. Let us consider the insertion of an edge $e=(u,v)$ of weight $w$. We maintain a stack $Q$, which is initially set to empty. We then update the root node by adding $(u,v)$ with weight $w$ to $G$, and push $G$ onto $Q$. During the traversal of $\mathcal{T}(G)$, our procedure maintains two pointers that point to the current node $H$ (initially set to $G$) and a node $N$ (if any exists) that represents the node for which $u$ and $v$ belong to different children of $N$, respectively. As long as we have not found such a node $N$, and the current node $H$ is not a leaf, we proceed as follows. 

We examine the child of $H$ that contains both $u$ and $v$ (if there is more than one, then we just pick one of them). If $u$ and $v$ belong to the same child, say $\child(H)$, then we add this edge to $\child(H)$ and update the current node $H$ to $\child(H)$. We then push $H$ onto $Q$. If, however, $u$ and $v$ belong to different children, then we set $N$ to be the current node $H$ and add the edge $(u,v)$ to $\buffer(N)$, since $u$ and $v$ cannot appear together in the nodes of the lower levels. At this point, this forces $u$ and $v$ to become boundary vertices in $N$ and all other nodes descending from $N$ that contain either $u$ or $v$. We handle this by making use of the $\textsc{AddBoundary}()$ procedure, depicted in Algorithm~\ref{alg: addBoundary}. Finally, we recompute the ASCs of the affected nodes in a bottom-up fashion using the stack $Q$ (as shown in Algorithm~\ref{alg: updateApproxSchur}). This procedure is summarized in Algorithm~\ref{alg: insert}. We remark that for simplicity, we let $Q.\textsc{Push}(H)$ denote the event of pushing the pointer to $H$ to the stack $Q$, for any node $H$.

\begin{algorithm2e}[htb!]
\caption{\textsc{UpdateApproxSchur}$(\textsc{Stack } Q)$}
\label{alg: updateApproxSchur}

\While{$Q \neq \emptyset$} 
{
  Set $H \gets Q.\textsc{Pull}()$. \\
  Set $H' \gets \textsc{ApproxSchurNode}(H,\partial(H),\varepsilon)$. \\
}

\end{algorithm2e}

\begin{algorithm2e}[htb!]
\caption{\textsc{Insert}$(u,v,w)$}
\label{alg: insert}
Let $Q$ be an initially empty stack. \\
Set $E(G) \gets E(G) \cup \{(u,v)\}$,  $Q.\textsc{Push}(G)$, $H \gets G$ and  $N \gets \nil$. \\
\While {$N = \nil$ and $H$ is a non-leaf}
{
	\eIf{there exists a child of $H$ that contains both $u$ and $v$}
	{
		Let $\child(H)$ denote any such a child. \\
		Set $E\left(\child(H)\right) \gets E\left(\child(H)\right) \cup \{(u,v)\}$. \\
		Set $H \gets \child(H)$. \\
		$Q.\textsc{Push}(H)$. \\
	}
	{
	    Set $N \gets H$. \\
        Set $\buffer(N) \gets \buffer(N) \cup \{(u,v)\}$. \\
        \textsc{AddBoundary}$(u,N)$, \textsc{AddBoundary}$(v,N)$. \\
	}
}

\nonl\texttt{// Update the ASCs of the nodes in $Q$} \\
\textsc{UpdateApproxSchur}$(Q)$.
\end{algorithm2e}

After the pre-processing step and after each insertion/deletion of an edge, our augmented separator tree $\mathcal{T}(G)$ satisfies the following invariant.

\begin{invariant} For every edge $e$ in the current graph $G$, exactly one of the following two holds:
\begin{itemize}
\item there is a leaf node $H \in \mathcal{T}(G)$ such that $e \in E(H)$,
\item there is an internal node $H \in \mathcal{T}(G)$ such that $e \in \emph{X}(H)$.
\end{itemize}
\end{invariant}


The following lemma guarantees that the updated graph $G'$ (i.e., the sparsifier of the root node $G$) is good approximation to the Schur complement of $G$ with respect to the boundary, after the execution of $\textsc{Insert}(u,v)$ in Algorithm~\ref{alg: insert}. 

\begin{lemma}\label{lemma:insertion} Let $G'$ be the updated sparsifier of the root node $G$, after the insertion of edge $(u,v)$. Then $G'$ is an $(1 \pm \delta)$-\emph{ASC} of $G$ with respected to $\partial(G)$.
\end{lemma}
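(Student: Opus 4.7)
The plan is to mimic the inductive structure of Lemma~\ref{lem: dsCorrectness}, but walk upward along the stack $Q$ rather than over the entire tree. Specifically, I will establish by induction on depth (from the bottom of $Q$ to the root) the following invariant: for every node $H \in Q$, after \textsc{UpdateApproxSchur}$(Q)$ finishes, the recomputed $H'$ is a $(1\pm\delta')^{\eta(H)+1}$-ASC of the updated subgraph $H$ with respect to $\partial(H)$. Since the root $G$ is at the top of $Q$ and $\eta(G) = O(\log n)$, the choice $\delta' = \delta/(c\log n + 1)$ yields $(1\pm\delta')^{\eta(G)+1} \subseteq (1\pm \delta)$, giving the claim.

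For the base case (the bottom of $Q$), the node is either a leaf whose edge set was directly updated, or the node $N$ where the edge was placed in $\buffer(N)$. In the leaf case, \textsc{ApproxSchur} applied to $E(H)$ (which now includes $(u,v)$) produces a $(1\pm\delta')$-ASC of $H$ w.r.t.\ $\partial(H)$ by Lemma~\ref{lem: ApproxSchur}. In the $N$ case, the children $\child_1(N),\child_2(N)$ are unaffected by \textsc{Insert} except for the boundary updates performed by \textsc{AddBoundary}; assuming (as a precondition we maintain) that after \textsc{AddBoundary} each affected descendant still holds a valid ASC with respect to its possibly enlarged boundary, the children's sparsifiers are $(1\pm\delta')^{\eta(\child_i(N))+1}$-ASCs of $\child_i(N)$ w.r.t.\ $\partial(\child_i(N))$. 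Then $R = \child_1(N)' \oplus \child_2(N)' \cup \buffer(N)$ is a $(1\pm\delta')^{\eta(N)}$-ASC of $N$ w.r.t.\ $V(R)$ by the Composability Lemma~\ref{lem: approxMerge} applied to the decomposition $N = \child_1(N)\oplus\child_2(N)$ (with the freshly buffered edge $(u,v)$ contributing exactly to $\buffer(N)$), and the subsequent call to \textsc{ApproxSchur} on $R$ with error $\delta'$ gives a $(1\pm\delta')$-ASC of $R$ w.r.t.\ $\partial(N)$. Chaining these two approximations via the Transitivity Lemma~\ref{lem: multiComb} yields the desired $(1\pm\delta')^{\eta(N)+1}$-ASC of $N$ w.r.t.\ $\partial(N)$.

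For the inductive step, take any non-bottom node $H \in Q$. Exactly one of its children, say $\child_1(H)$, lies in $Q$ (it is the next node we descended into); by induction $\child_1(H)'$ is a $(1\pm\delta')^{\eta(\child_1(H))+1}$-ASC of $\child_1(H)$. The other child $\child_2(H)$ was not visited by \textsc{Insert}, so neither its edges nor its boundary were changed, and its stored sparsifier $\child_2(H)'$ remains a $(1\pm\delta')^{\eta(\child_2(H))+1}$-ASC of $\child_2(H)$ (this is where we invoke the inductive invariant maintained across past operations, justified by the periodic rebuilds every $\Theta(\sqrt n)$ operations). Since $H = \child_1(H) \oplus \child_2(H)$ (possibly with $\buffer(H)$ unchanged), Lemma~\ref{lem: approxMerge} gives that $R$ is a $(1\pm\delta')^{\eta(H)}$-ASC of $H$ w.r.t.\ $V(R) \supseteq \partial(H)$, and Lemma~\ref{lem: ApproxSchur} followed by Lemma~\ref{lem: multiComb} produces the required $(1\pm\delta')^{\eta(H)+1}$-ASC of $H$ w.r.t.\ $\partial(H)$, identical to the inductive step in the proof of Lemma~\ref{lem: dsCorrectness}.

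\textbf{Main obstacle.} The delicate point is justifying that nodes \emph{not} on $Q$ still carry valid ASCs after \textsc{Insert}, i.e., that \textsc{AddBoundary} restores (or preserves) the invariant for every descendant of $N$ whose boundary set was enlarged by the promotion of $u$ or $v$. Once this is granted, the rest is a bottom-up composition-and-sparsification argument that exactly parallels Lemma~\ref{lem: dsCorrectness}, with the telescoping $(1\pm\delta')^{\eta(G)+1} \subseteq (1\pm\delta)$ providing the final approximation ratio via the calibration of $\delta'$.
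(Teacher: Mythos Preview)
Your proposal is essentially the same approach as the paper's: both establish inductively that every node $H$ carries an $(1\pm\delta')^{\eta(H)+1}$-ASC of $H$ with respect to $\partial(H)$, using composability (Lemma~\ref{lem: approxMerge}) on the children's sparsifiers, adding the buffer $\buffer(H)$, and then chaining via transitivity (Lemma~\ref{lem: multiComb}) after the \textsc{ApproxSchur} call. The paper frames this as a global induction on height over all nodes, while you frame it operationally along the stack $Q$; these are equivalent since nodes outside $Q$ (and outside the \textsc{AddBoundary} paths) are untouched.

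One small slip to fix in your inductive step: you write $H = \child_1(H)\oplus\child_2(H)$ ``possibly with $\buffer(H)$ unchanged'', but if $\buffer(H)\neq\emptyset$ from a \emph{previous} insertion, then $H = (\child_1(H)\oplus\child_2(H))\cup\buffer(H)$, and you must include $\buffer(H)$ in $R$ (as Algorithm~\ref{alg: approxSchurNode} does) and argue, exactly as you did for the base case $N$, that $R\cup\buffer(H)$ is an $(1\pm\delta')^{\eta(H)}$-ASC of $H$ before applying \textsc{ApproxSchur}. The paper makes this case explicit; your argument for $N$ already shows you have the right mechanism, you just need to invoke it at every level, not only at the bottom.
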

\begin{proof}
	We proceed inductively as in the proof of Lemma~\ref{lem: dsCorrectness} and show that for any node $H$, the corresponding sparsifier $H'$ is an $(1\pm \delta')^{\eta(H)+1}$-ASC of $H$ with respect to $\partial(H)$. Since the base case remains the same, let us consider a non-leaf node $H$. If $\buffer(H) = \emptyset$, then the correctness follows from the inductive step of Lemma~\ref{lem: dsCorrectness}. However, $\buffer(H) \neq \emptyset$ implies that $H \neq \child_1(H) \oplus \child_2(H)$. This is because $H$ is the last node for the edges of $\buffer(H)$ whose endpoints were contained in the same node in $\mathcal{T}(G)$. Recall that the endpoints of all the edges in $\buffer(H)$ were declared boundary vertices for $H$ and all descendants containing them. Thus we have that
	\[
	H = \left(\child_1(H) \oplus \child_2(H)\right) \cup \buffer(H).
	\]
	By induction hypothesis, it follows that $c_i(H)'$ is an $(1\pm \delta')^{\eta(\child_i(H)) + 1}$-ASC of $c_i(H)$, for $i=1,2$. Using Lemma~\ref{lem: approxMerge} and since $\eta(\child_i(H)) + 1 = \eta(H)$, for $i=1,2$, we get that $R := \child_1(H)' \oplus_{\phi} \child_2(H)'$ is an $(1\pm \delta')^{\eta(H)}$-ASC of $H \setminus \buffer(H)$ with respect to $V(R):= \partial(\child_1(H)) \cup \partial(\child_2(H))$. First, since $V(R) \supseteq V(\buffer(H))$ by construction, Lemma~\ref{lem: approxMerge} implies that $R' := R \cup \buffer(H)$ is an $(1\pm \delta')^{\eta(H)}$-ASC of $(H \setminus \buffer(H)) \cup \buffer(H) = H$ with respect to $V(R)$. Second, since $V(R) \supseteq \partial(H)$ by property (4) of $\mathcal{T}(G)$ and by Lemma~\ref{lem: ApproxSchur}, it follows that $H'$ is an $(1 \pm \delta')$-ASC of $R'$ with respect to $\partial(H)$. Finally, applying Lemma~\ref{lem: multiComb} on $R'$ and $H'$ we get that $H'$ is an $(1 \pm \delta')^{\eta(H) + 1}$-ASC of $H$. The statement of the lemma then follows from the facts that $\delta'=\frac{\delta}{c\log n+1}$ and $\eta(G)=O(\log n)$.
\end{proof}

For the running time of $\textsc{Insert}(u,v,w)$, we distinguished two cases. 

First, suppose that the insertion of the edge $(u,v)$ does not trigger a re-computation of the data-structure. Note that the stack $Q$ (in Algorithm~\ref{alg: insert}) contains all nodes in the path starting from the root node $G$, and then repeatedly choosing \emph{exactly one} child of the current node that contains both $u$ and $v$, until the node $N$ is reached. Since the height of $\mathcal{T}(G)$ is $O(\log n)$, it follows that $\abs{Q} \leq O(\log n)$. Additionally, by Lemma~\ref{lemm: computeAscNode}, the time to re-compute an ASC of any node is bounded by $\tilde{O}(\sqrt{n}/\delta^{2})$. Thus we get that the time needed to update the ASCs of the nodes in $Q$ is $\tilde{O}(\sqrt{n}/\delta^{2})$. As we will shortly argue, the running time of $\textsc{AddBoundary}()$ is also bounded by $\tilde{O}(\sqrt{n}/\delta^{2})$. Combining the above, we get that the running time of $\textsc{Insert}(u,v)$ is $\tilde{O}(\sqrt{n}/\delta^{2})$.

Second, suppose that the edge $(u,v)$ triggers a re-computation of the data-structure. Then by Lemma~\ref{lem: DSrunningTime}, we recompute $\mathcal{D}(G, \delta)$ in $\tilde{O}(n/\delta^{2})$ time. Since we recompute that data-structure every $\Theta(\sqrt{n})$ insertions, the amortized update time per insertion is $\tilde{O}(\sqrt{n}/\delta^{2})$. The above bounds combined give that the amortized time per edge insertion is bounded by $\tilde{O}(\sqrt{n}/\delta^{2})$. This bound can be made worst-case by keeping two copies of the data structure and performing periodical rebuilds.

\paragraph*{Handling Terminal Additions to the Boundary.} We now describe the \textsc{AddTerminal}$(u)$ operation. 
It is implemented by simply invoking \textsc{AddBoundary}$(u,G)$, where $G$ is the root of $\mathcal{T}(G)$. For the procedure $\textsc{AddBoundary}(u,H)$, we maintain a stack $Q$, which is initially set to empty. As long as the current $H$ is a node in $\mathcal{T}(G)$, we first check whether $u \in \partial(H)$. If this is the case, then we simply do nothing as the ASC $H'$ of $H$ with respect to $\partial(H)$ contains $u$. Otherwise, we add $u$ to $\partial(H)$, and push the node $H$ to $Q$. Next, if $H$ is not a leaf-node, let $\child(H)$ be the \emph{unique} child that contains $u$. We then set $\child(H)$ to be our current node $H$ and perform the same steps as above, until we reach some leaf-node, in which case we set $H$ to $\nil$. Finally, we recompute the ASCs of the affected nodes in a bottom-up fashion using the stack $Q$. This procedure is summarized in Algorithm~\ref{alg: addBoundary}. 

The correctness of this procedure can be shown similarly to the correctness of $\textsc{Insert}()$. For the running time, the crucial observation is that if $u \not \in \partial(H)$, for some non-leaf node $H$, then by property (2) of $\mathcal{T}(G)$, it follows that $u$ is assigned to an \emph{unique} child of $H$. Thus, in the worst-case, the stack $Q$ contains all the nodes in the path between $H$ and some leaf-node. Note that $\abs{Q} = O(\log n)$ and by Lemma~\ref{lemm: computeAscNode}, time to re-compute an ASC of any node is $\tilde{O}(\sqrt{n}/\delta^{2})$. Combining the above, we get that the running time of $\textsc{AddBoundary}(u,H)$ is $\tilde{O}(\sqrt{n}/\delta^{2})$.

\begin{algorithm2e}[htb!]
\caption{\textsc{AddBoundary}$(u,v,w)$}
\label{alg: addBoundary}
Let $Q$ be an initially empty stack. 
\While {$N = \nil$}
{
	\If{$u \not \in \partial(H)$}
	{
		Set $\partial(H) \gets \partial(H) \cup \{u\}$. \\
	    $Q.\textsc{Push}(H)$. \\
		\If {$H$ is a non-leaf} 
		{
			Let $c(H)$ be the \emph{unique} child that contains $u$. \\
			Set $H \gets \child(H)$. \\
		}
	}
	
	\If{$H$ is a leaf}
	{
		Set $H \gets \nil$. \\
	}
}

\nonl\texttt{// Update the ASCs of the nodes in $Q$} \\
\textsc{UpdateApproxSchur}$(Q)$.

\end{algorithm2e}

\paragraph*{Handling Edge Deletions.} We now describe the \textsc{Delete} operation. Let us consider the deletion of an edge $e = (x,y)$. Our procedure is symmetric to the \textsc{Insert}() operation. We maintain a stack $Q$, which is initially set to empty. We then update the root node by deleting $(u,v)$ from $G$, and pushing $G$ onto $Q$. During the traversal of $\mathcal{T}(G)$, our procedure maintains the current node $H$ (initially set to $G$) and determines the node $N$ that represent the lowest-level node in $\mathcal{T}(G)$ that contains the edge $(u,v)$. Note that $N$ is not necessarily a leaf-node. As long as we have not found such a node we proceed as follows. 

We examine the \emph{unique} child of $H$ that contains the edge $(u,v)$ (by property (2) of $\mathcal{T}(G)$).  If there exists such a child $c(H)$, then we delete $(u,v)$ from $c(H)$ and update the current node $H$ to $c(H)$. We then push $H$ to $Q$. If, however, such a child does not exist, then we set $N$ to be the current node $H$. Next, if $N$ is a non-leaf node, we remove the edge $(u,v)$ from $\buffer(N)$. Finally, we recompute the ASCs of the affected nodes in a bottom-up fashion using the stack $Q$. This procedure is summarized in Algorithm~\ref{alg: delete}.

Similarly to the \textsc{Insert}$()$ operation, we can show  that the worst-case running time of \textsc{Delete}$(u,v)$ operation is $\tilde{O}(\sqrt{n}/\delta^{2})$.

Finally, recall that we set $\gamma=1/n^3$ as the error probability of \textsc{ApproxSchur} from Lemma~\ref{lem: ApproxSchur}. This will guarantee that throughout all updates, our algorithm succeeds with probability at least $1-O(n)\cdot \frac{1}{n^3}\geq 1- O(\frac{1}{n^2})$ as the total number of nodes in $\mathcal{T}(G)$ is $O(\sqrt{n})$, each update involves recomputation of the ASCs of $O(\log n)$ nodes and our algorithm recomputes the data structure every $\Theta(\sqrt{n})$ operations.

\paragraph{Remark.} We can easily generalize the above framework to $O(\sqrt{n})$-separable graphs for which the separator can be computed in $s(n)$ time, since the only place we need such computation is to initialize or re-compute the data structure $\mathcal{D}(G,\delta)$ (after every $\Theta(\sqrt{n})$ operations). This implies that the update time will become $\tilde{O}((s(n)+n/\delta^{2})/\sqrt{n})$ and the query time remains the same as before.  


\begin{algorithm2e}[htb!]
\caption{\textsc{Delete}$(u,v)$}
\label{alg: delete}
Let $Q$ be an initially empty stack. \\
Set $E(G) \gets E(G) \setminus \{(u,v)\}$,  $Q.\textsc{Push}(G)$, $H \gets G$ and  $N \gets \nil$. \\
\While {$N = \nil$}
{
	\eIf{If there exists a (\emph{unique}) child $\child(H)$ of $H$ that contains $(u,v)$}
	{
		$E\left(\child(H)\right) \gets E\left(\child(H)\right) \setminus \{(u,v)\}$. \\
		Set $H \gets \child(H)$. \\
		$Q.\textsc{Push}(H)$. \\
	}
	{
	    Set $N \gets H$. \\
	}
	
	\If{$N$ is a non-leaf} 
	{ 
	  $\buffer(N) \gets \buffer(N) \setminus \{(u,v)\}$. \\
	}	
}

\nonl\texttt{// Update the ASCs of the nodes in $Q$} \\
\textsc{UpdateApproxSchur}$(Q)$.

\end{algorithm2e}

\subsection{Extension to Dynamic All-Pairs Effective Resistance}

We next explain how to use a dynamic ASC algorithm to obtain a fully-dynamic algorithm for maintaining an $(1+\varepsilon)$-approximation to all-pairs (resp., single-pair) effective resistance(s) in a $O(\sqrt{n})$-separable graph $G$ and prove Theorem~\ref{thm:upperbound}. The data-structure support the operations $\textsc{Insert}(u,v,r)$, $\textsc{Delete}(u,v)$, and  \textsc{EffectiveResistance}$(s,t)$ as defined in the beginning of the paper. 

Our dynamic effective resistance algorithm uses the above dynamic algorithm for maintaining a $(1\pm \delta)$-ASC as a subroutine. Formally, to maintain $(1+\varepsilon)$-approximations of effective resistances, we will invoke the dynamic ASC algorithm with parameters $\delta = \varepsilon/4$.
To answer the queries of the effective resistance of any two given vertices, we use the following result due to Durfee et al.~\cite{DurfeeKPRS16}.

\begin{theorem} \label{thm: spielmanTeng}
	Fix $\delta \in (0,1/2)$ and let $G=(V,E , \vect{w})$ be a weighted graph with two distinguished vertices $s,t \in V$. There is an algorithm \textsc{EstimateEffRes}$(G,s,t)$ that computes a value $\psi$ such that
	\[
	(1-\delta) R_G(s,t) \leq \psi \leq (1+\delta)R_G(s,t), 
	\]
	in time $\tilde{O}(m + n/\delta^{2})$ with probability at least $1-n^c$ for some constant $c\geq 1$.
\end{theorem}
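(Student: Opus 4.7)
The plan is to reduce the problem to computing the effective resistance on a graph with only two vertices by invoking the approximate Schur complement machinery with the terminal set $K = \{s,t\}$. Concretely, I would call $\textsc{ApproxSchur}(G, \{s,t\}, \delta', \gamma)$ from Lemma~\ref{lem: ApproxSchur} with error parameter $\delta' = \delta/3$ and failure probability $\gamma = n^{-c}$ for a suitable constant $c \geq 1$. This returns a graph $H$ supported on the two vertices $s,t$ that is a $(1\pm \delta')$-ASC of $G$ with respect to $\{s,t\}$. By Lemma~\ref{lem: ApproxSchur}(1), $H$ consists of $O(\delta'^{-2}\log(n/\gamma)) = \tilde{O}(\delta^{-2})$ parallel edges between $s$ and $t$, and by Lemma~\ref{lem: ApproxSchur}(2) it is built in time $\tilde{O}(m\log^{3}(n/\gamma) + n\delta'^{-2}\log^{4}(n/\gamma)) = \tilde{O}(m + n/\delta^{2})$.

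Next, since $H$ consists solely of parallel edges between $s$ and $t$, its Laplacian is supported on the $\{s,t\}$-block and collapses to the scalar conductance $\sum_{e \in E(H)} w_H(e)$. Therefore $R_H(s,t) = 1 / \sum_{e \in E(H)} w_H(e)$, which can be evaluated by a single pass over $E(H)$ in time $\tilde{O}(\delta^{-2})$. The algorithm \textsc{EstimateEffRes}$(G,s,t)$ simply returns $\psi := R_H(s,t)$. For correctness, Lemma~\ref{lemma:exact_schur} guarantees that $H$ is a $1/(1\pm \delta')$-VRS of $G$ with respect to $\{s,t\}$, i.e.,
\[
\frac{R_G(s,t)}{1+\delta'} \;\leq\; R_H(s,t) \;\leq\; \frac{R_G(s,t)}{1-\delta'}.
\]
Using $\delta \in (0,1/2)$ and $\delta' = \delta/3$, a short calculation shows that $(1-\delta)R_G(s,t) \leq \psi \leq (1+\delta) R_G(s,t)$, as desired. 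The failure probability is inherited from the single call to \textsc{ApproxSchur}, so it is at most $\gamma = n^{-c}$.

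Conceptually, the work is entirely offloaded onto \textsc{ApproxSchur}: all the algorithmic effort (building the ASC via localized random walks / Gaussian elimination from Lemma~\ref{lem: ApproxSchur}) is hidden in that subroutine, and the only step specific to this theorem is recognizing that when $|K|=2$ the resulting graph is a multi-edge on two vertices whose effective resistance admits a closed form. The main obstacle, which is purely bookkeeping, is to choose $\delta'$ so that the compounded factor from Lemma~\ref{lemma:exact_schur} lies within $(1\pm \delta)$ rather than only within some constant blow-up, and to verify that $\log(n/\gamma) = O(\log n)$ so that the time bound reads $\tilde{O}(m + n/\delta^{2})$.
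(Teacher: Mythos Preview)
The paper does not prove this theorem at all: it is quoted verbatim as a black-box result from Durfee et al.\ \cite{DurfeeKPRS16}, with no argument given. Your proposal is therefore not a re-derivation of the paper's proof but an independent proof, and it is a correct one. The idea of instantiating Lemma~\ref{lem: ApproxSchur} with $K=\{s,t\}$ to collapse the whole graph to a two-vertex multigraph and then read off the resistance as the reciprocal of the total edge weight is exactly the right reduction; the running time, failure probability, and the $(1\pm\delta')\mapsto(1\pm\delta)$ bookkeeping all check out. In effect you are observing that the theorem is an immediate corollary of the ApproxSchur primitive (itself from \cite{DurfeeKPRS16}), whereas the paper simply imports the final statement directly. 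Either route is fine; yours has the advantage of being self-contained modulo Lemma~\ref{lem: ApproxSchur} and Lemma~\ref{lemma:exact_schur}.

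One minor caveat: the Schur complement in Definition~\ref{def: Schur} and Lemma~\ref{lem: ApproxSchur} implicitly assume connectivity (so that $\mat{L}_N$ is invertible and $R_G(s,t)$ is finite). If $s$ and $t$ lie in different components the effective resistance is infinite and the statement is vacuous or needs a separate trivial check; you may want to add a sentence handling that edge case.
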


For simplicity, we focus on the case that the separator of the separable graph can be computed in $\tilde{O}(n)$ time. The algorithm and analysis can be easily generalized to handle the case when the computation time for separator is $s(n,m)$, by the same argument as before.

We now describe the query operation. We first consider how to maintain all-pairs effective resistances. Given $s$ and $t$, we start by calling  $\textsc{AddTerminal}(s)$ and $\textsc{AddTerminal}(t)$ from the dynamic ASC data-structure. This ensures that both $s$ and $t$ are boundary nodes at the root node $G$ (if they were not previously). Thus we obtain a $(1\pm \delta)$-ASC, denoted as $G'$, of the root node $G$ with respect to $\partial(G)$ and run on $G'$ a nearly linear time algorithm for estimating the $s-t$ effective resistance (see Theorem~\ref{thm: spielmanTeng}). Let $\psi$ denote such an estimate. This procedure is summarized in Algorithm~\ref{alg: effectiveResistance}.

\begin{algorithm2e}[htb!]
\caption{\textsc{EffectiveResistance}$(s,t)$}
\label{alg: effectiveResistance}
\textsc{AddTerminal}$(s)$, \textsc{AddTerminal}$(t)$.  \\
Let $G'$ be the ASC of the root node $G$ with respect to $\partial(G)$. \\
Set $\psi \gets$ \textsc{EstimateEffRes}$(G',s,t)$. \\
Return $\psi$. \\
\end{algorithm2e}

For the correctness, by Lemma~\ref{lemma:exact_schur}, we have that $G'$ preserves all-pairs effective resistances among vertices in $\partial(G)$ of $G$ up to an $1/(1 \pm \delta) \approx (1 \pm 2\delta)$ factor. Since we ensured that $s$ and $t$ are included in $\partial(G)$, the $s-t$ effective resistance is approximated within the same factor. By Theorem~\ref{thm: spielmanTeng}, it follows that the estimate $\psi$ approximates the effective resistance between $s$ and $t$ in $G'$, up to a $(1\pm \delta)$ factor. Combining the above guarantees, we get $\psi$ gives an $(1 \pm 2\delta)(1\pm \delta) \leq (1 \pm \varepsilon)$-approximation to $R_G(s,t)$, by the choice of $\delta$.

Once the query is answered, we then undo all the changes that we have performed in $\mathcal{T}(G)$ i.e., we bring the data-structure to its state before the query operation. This ensures that the number of terminals at the root node $G$ does not accumulate over a large sequence of query operations.

For the running time, first recall that each $\textsc{AddTerminal}()$ operation can be implemented in $\tilde{O}(\sqrt{n}/\delta^{2})$. Now, as $\abs{V(G')} \leq O(\sqrt{n})$ and $\abs{E(G')} \leq \tilde{O}(\sqrt{n}/\delta^{2})$, by Theorem~\ref{thm: spielmanTeng} it follows that estimate $\psi$ can be computed in $\tilde{O}(\sqrt{n}/\delta^{2})$ time. Combining the time bounds we get that that the worst-case time to answer an $\textsc{EffectiveResistance}(s,t)$ query is $\tilde{O}(\sqrt{n}/ \delta^{2})$. Finally, note that in the same time bound, we can also undo all the changes we have made.

For the single-pair $s-t$ effective resistance, the two vertices $s,t$ are fixed throughout all the operations. For each edge insertion or deletion, we first update the data structure in the same way as for the all-pairs version, and then we compute the $s-t$ effective resistance $R_G(s,t)$ and store the answer. For the query for $R_G(s,t)$, we simply report the stored answer. The update time is still $\tilde{O}(\sqrt{n}/\delta^2)$, while the query time is only $O(1)$.







\section{Lower Bounds for Partially Dynamic Effective Resistances} \label{sec: lowerBound}

\subsection{A Lower Bound for $O(\sqrt{n})$-Separable Graphs}\label{sec:proof_lower_separable}
In this section, we prove a conditional lower bound for incrementally or decrementally maintaining the $s-t$ effective resistance in $O(\sqrt{n})$-separable graphs and give the proof of Theorem~\ref{thm:lowerbound_separable}. Our proof actually holds for any algorithm that maintains a $(1+O(\frac{1}{n^{36}}))$-approximation of $s-t$ effective resistance. 

We first consider the incremental case, in which only edge insertions are allowed. 

\paragraph*{The reduction.} We reduce the $\vect{u}\mat{M}\vect{v}$ problem~(see Definition \ref{def:umv_problem}) with parameters $n_1=n_2:=n_0$ to the $s-t$ effective resistance problem as follows. Let $\mat{M}$ be the $n_0\times n_0$ Boolean matrix of the $\vect{u}\mat{M}\vect{v}$ problem. Let $n= n_0^2+2n_0+2$. Let $\kappa = 3(n-1)^6$.

Given the matrix $\mat{M}$, we construct a graph $G_\mat{M}=(V_\mat{M}, E)$ as follows. 
\begin{itemize}
	\item  For each pair $1\leq i,j\leq n_0$, we create two vertices $a_{ij}$ and $b_{ij}$, and add an edge $(a_{ij},b_{ij})$ if and only if $M_{ij}=1$. 
	\item For each row $i$, we create a vertex $u_i$ and add edge $(u_i,a_{ik})$ for each $1\leq k\leq n_0$. For each column $j$, we create a vertex $v_j$ and add edge $(v_j,b_{kj})$ for each $1\leq k\leq n_0$. 
\end{itemize}		
This finishes the definition of $G_\mat{M}$. Note that $V_\mat{M}=\{a_{ij},b_{ij}, 1\leq i,j\leq n_0 \}\cup\{u_i,1\leq i\leq n_0 \} \cup\{v_j, 1\leq j\leq n_0\}$. For any vertex $x\in V_\mat{M}$, let $\deg_{G_\mat{M}}(x)$ denote the degree of $x$ in $G_\mat{M}$. 


Now we add two new vertices $t$ and $s$ to $G_\mat{M}$. For any $x\in \{a_{ij},b_{ij}, 1\leq i,j\leq n_0 \}$, add an edge $(s,x)$ with weight $\kappa-\deg_{G_\mat{M}}(x)$. Denote the resulting graph by $G$ and note that $G$ contains $|V_\mat{M}\cup\{s,t\}|=n_0^2+2n_0+2 = n$ vertices.

Assume that $G$ is started in a dynamic effective resistance data structure. We also maintain a number of counters in the data structure. More specifically, we initialize a global counter $Y:=0$. For each vertex $x\in \{u_i, 1\leq i\leq n_0 \}\cup \{v_j, 1\leq j\leq n_0\}$, we maintain a counter $c(x)$ which is initialized to be $0$. 
We now explain how we use this data structure to determine $\vect{u}\mat{M}\vect{v}$.

\begin{itemize}
	\item Once $\vect{u}$ arrives, for any $i$ such that $\vect{u}_i=1$, we insert an edge $(t, u_i)$ with weight $1$, increase $Y$ and $c(u_i)$ by $1$. 
	\item Once $\vect{v}$ arrives, for any $j$ such that $\vect{v}_j=1$, we insert an edge $(t, v_j)$ with weight $1$, increase $Y$ and $c(v_j)$ by $1$. 
	\item Insert an edge $(s,t)$ with weight $\kappa - Y$. For each vertex $x\in \{u_i, 1\leq i\le n_0 \}\cup\{v_j, 1\leq j\leq n_0 \}$, insert an edge $(s,x)$ with weight $\kappa-c(x)-\deg_{G_\mat{M}}(x)$.
	\item We perform one effective resistance query $\textsc{EffectiveResistance}(s,t)$ to obtain the (approximate) $s-t$ effective resistance in the final graph. Let $\lambda=\textsc{EffectiveResistance}(s,t)$. If $\lambda\leq \frac{1}{\kappa}+\frac{Y}{\kappa^3}+\frac{Y(n_0+1)}{\kappa^5} -\frac{1}{\kappa^6}$, then return $1$; otherwise, return $0$. 
\end{itemize}

\paragraph*{Analysis.} Note that throughout the whole sequence of updates (which are only edge insertions) and queries, the dynamic graph $G$ is always $O(\sqrt{n})$-separable, since the set $S:=\{u_1,\cdots, u_{n_0}\}\cup\{v_1,\cdots,v_{n_0} \}\cup\{s,t\}$ is a balanced separator of size $O(\sqrt{n})$.

We have the following lemma that shows an important property of our reduction. The proof of the lemma is deferred to the end of this section.
\begin{lemma}\label{lemma:separable_LB}
	For $\kappa = 3(n-1)^6$, assume that $\textsc{EffectiveResistance}(s,t)$ returns an $(1+\frac{1}{\kappa^{6}})$-approximation of the $s-t$ effective resistance in the final graph $G$. Then the following holds:
	\begin{itemize}
		\item  If $\vect{u}\mat{M}\vect{v}=1$, then $\lambda \leq \frac{1}{\kappa}+\frac{Y}{\kappa^3}+\frac{Y(n_0+1)}{\kappa^5} -\frac{1}{\kappa^6}$;
		\item  If $\vect{u}\mat{M}\vect{v}=0$, then $\lambda >\frac{1}{\kappa}+\frac{Y}{\kappa^3}+\frac{Y(n_0+1)}{\kappa^5} -\frac{1}{\kappa^6}$.
	\end{itemize}
\end{lemma}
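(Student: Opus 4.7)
The plan is to analyze $R_G(s,t)$ via a Neumann-series expansion that exposes the role of a length-$5$ cycle through $t$. The key algebraic property of the construction is that every non-$s$ vertex has weighted degree exactly $\kappa$. Therefore, if $L'$ denotes the principal submatrix of $L(G)$ obtained by deleting the row and column of $s$, then $L'=\kappa I-A'$, where $A'$ is the weighted adjacency matrix of $G[V\setminus\{s\}]$. Combining the identity $R_G(s,t)=(L')^{-1}_{tt}$ (ground $s$ and inject unit current at $t$) with the fact that $A'/\kappa$ has spectral radius $<1$ under the choice $\kappa=3(n-1)^{6}$, I can expand
\[
R_G(s,t)=\frac{1}{\kappa}\sum_{k\ge 0}\frac{(A'^{k})_{tt}}{\kappa^{k}},
\]
and reduce the problem to counting closed walks of each length $k$ at $t$ in $G\setminus\{s\}$ (where all non-$s$ edges have weight~$1$).

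The combinatorial heart of the proof is an enumeration of these closed walks for $k\le 5$. The values for $k\in\{0,1,2,3\}$ are $1,0,Y,0$; the length-$3$ count vanishes because the $t$-neighbors $\{u_i:\vect{u}_i=1\}\cup\{v_j:\vect{v}_j=1\}$ form an independent set in $G\setminus\{s\}$. The length-$4$ count splits into walks that revisit $t$ at the middle step (of the form $t,x_1,t,x_3,t$) and walks that stay in a single branch (e.g.\ $t,u_i,a_{ik},u_i,t$), yielding a closed-form polynomial in $Y$ and $n_0$. The decisive step is $k=5$: a case analysis rules out any length-$5$ closed walk that revisits $t$ internally (this would require either a self-loop at $t$ or two adjacent $t$-neighbors, neither of which exists), so the only remaining walks traverse the gadget $t\to u_i\to a_{ij}\to b_{ij}\to v_j\to t$ or its reverse. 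Hence $(A'^{5})_{tt}=2\,\vect{u}^T\mat{M}\vect{v}$, which is the unique coefficient in the expansion that depends on the OMv answer.

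For $k\ge 6$, the crude bound $(A'^{k})_{tt}\le (n-1)^{k-1}$ combined with $\kappa=\Theta(n^{6})$ gives a tail of order $O(\kappa^{-7})$, which is absorbed into the $\kappa^{-6}$ slack of the threshold. Putting the pieces together, $R_G(s,t)$ agrees with the closed-form expression through $k=4$ plus $2\,\vect{u}^T\mat{M}\vect{v}/\kappa^{6}$ up to $O(\kappa^{-7})$. Since $\lambda$ is a $(1+\kappa^{-6})$-approximation of $R_G(s,t)=\Theta(1/\kappa)$, its additive error is $O(\kappa^{-7})$, strictly smaller than the $\Theta(\kappa^{-6})$ gap between the two cases $\vect{u}^T\mat{M}\vect{v}=0$ and $\vect{u}^T\mat{M}\vect{v}\ge 1$; this separation yields the two dichotomous inequalities stated in the lemma.

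The main obstacle will be the length-$5$ walk enumeration together with matching the $1/\kappa^{5}$ coefficient exactly against the threshold. One must carefully rule out walks that revisit $t$ or that loop inside a single branch (e.g.\ $t\to u_i\to a_{ik}\to u_i\to\cdots\to t$), and then confirm that the resulting closed-form expression for $(A'^{4})_{tt}$ is compatible with the coefficient $Y(n_0+1)$ in the threshold up to the $\kappa^{-6}$ slack. A secondary difficulty is the tail estimate: although very loose, it must be shown to fall strictly beneath the $\kappa^{-6}$ signal, which is exactly what the choice $\kappa=3(n-1)^{6}$ is calibrated to guarantee.
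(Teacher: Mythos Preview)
Your approach is essentially the paper's: identify $R_G(s,t)$ with $((\kappa I-A)^{-1})_{tt}$ for $A$ the adjacency matrix of $G\setminus\{s\}$ (this is exactly the paper's Claim that $\Lambda=(\mat{B}^{-1})_{tt}$), expand in a Neumann series, enumerate closed walks at $t$ of each length $\le 5$, read off the answer from the length-$5$ term, and bound the $k\ge 6$ tail using $\kappa=3(n-1)^6$. The paper argues only $(A^5)_{tt}\ge 2$ versus $=0$, whereas you compute $(A'^{5})_{tt}=2\,\vect{u}^T\mat{M}\vect{v}$ exactly; either suffices.

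Two points to be aware of when you carry this out. First, your series $\tfrac{1}{\kappa}\sum_{k\ge0}(A'^{k})_{tt}/\kappa^{k}$ has positive coefficients (the correct expansion of $(I-A'/\kappa)^{-1}$), whereas the paper writes alternating signs; with the correct positive coefficients the presence of a $5$-cycle \emph{increases} $R_G(s,t)$, so the two inequalities in the lemma statement come out with their directions reversed. Second, your length-$4$ enumeration rightly includes the walks $t\to x_1\to t\to x_3\to t$, contributing $Y^2$; the paper's claimed value $(A^4)_{tt}=Y(n_0+1)$ omits these, and the actual count is $Y(n_0+Y)$. Since $Y$ can be as large as $2n_0$, the extra $Y(Y-1)/\kappa^5$ is \emph{not} absorbed by the $\kappa^{-6}$ slack, so the exact threshold in the statement will not match what your computation yields. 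Neither issue affects the reduction itself: the two cases remain separated by a $\Theta(\kappa^{-6})$ gap while the approximation error is $O(\kappa^{-7})$, so a corrected threshold still distinguishes them. These are artifacts of the stated constants rather than defects in your method, and you have already flagged the $(A'^{4})_{tt}$ verification as a step requiring care.
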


Note that by the above lemma, the $\vect{u}\mat{M}\vect{v}$ problem can be solved according to our estimator $\lambda$. Thus, the lower bound for the incremental setting in Theorem~\ref{thm:lowerbound_separable} follows by Theorem~\ref{thm:umv_hard} and by noting that the total number of updates is $O(n_0)=O(\sqrt{n})$ and the total number of queries is $1$.

In the following we prove Lemma~\ref{lemma:separable_LB}. The proof is based on a connection between the $5$-length cycle detection problem and the effective resistance problem.
\begin{proof}[Proof of Lemma~\ref{lemma:separable_LB}]
	Let $G$ denote the final graph of our reduction. Let $H:=G[V_\mat{M}\cup\{t\}]$ denote the subgraph induced by vertex set $V_\mat{M}\cup\{t\}$. We observe that in the graph $H$, there is a cycle of length $5$ containing vertex $t$ if and only if $\vect{u}\mat{M}\vect{v}=1$. 
	
	On the other hand, we can use our estimator $\lambda$ to distinguish if $H$ contains a $5$-length cycle incident to $t$ or not. 
	We let $\mat{A}\in \mathbb{R}^{(n-1)\times (n-1)}$ denote the adjacency matrix of the graph $H$. Note that all entries in $A$ are either $1$ or $0$.
	
	The first claim relates the $5$-length cycle detection to the trace of a matrix related to $\mat{A}$. Recall that we let $X_{uv}$ denote the entry of matrix $X$ with row index corresponding to vertex $u$ and column index corresponding to vertex $v$.
	\begin{claim}\label{claim:cycle_trace}
		Let $\mat{B}=\kappa\cdot \mat{I} - \mat{A}$. If $H$ contains a $5$-length cycle incident to $t$, then $(\mat{B}^{-1})_{tt} \leq \frac{1}{\kappa}+\frac{Y}{\kappa^3}+\frac{Y(n_0+1)}{\kappa^5}-\frac{1.1}{\kappa^6}$. If $H$ does not contain a $5$-length cycle incident to $t$, then $(\mat{B}^{-1})_{tt} \geq \frac{1}{\kappa}+\frac{Y}{\kappa^3}+\frac{Y(n_0+1)}{\kappa^5} -\frac{0.9}{\kappa^6}$.
	\end{claim}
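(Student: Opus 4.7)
The plan is to expand $\mat{B}^{-1}$ as the Neumann series
\[
(\kappa\mat{I}-\mat{A})^{-1} \;=\; \sum_{k\ge 0}\frac{\mat{A}^{k}}{\kappa^{\,k+1}},
\]
which converges because $\|\mat{A}\|_{2}\le n-1$ is dwarfed by $\kappa=3(n-1)^{6}$. Taking the $(t,t)$-entry gives $(\mat{B}^{-1})_{tt}=\sum_{k\ge 0} w_{k}/\kappa^{\,k+1}$, where $w_{k}:=(\mat{A}^{k})_{tt}$ counts closed walks of length $k$ from $t$ in $H$. The claim then reduces to three subtasks: (i) pinning down $w_{0},\dots,w_{4}$ exactly; (ii) identifying $w_{5}$ with the existence of a $5$-cycle through $t$, equivalently with $\vect{u}\mat{M}\vect{v}$; and (iii) bounding the geometric tail $\sum_{k\ge 6} w_{k}/\kappa^{\,k+1}$.

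For (i), I would exploit that $H$ is \emph{almost bipartite}: placing $\{t\}\cup\{a_{ij},b_{ij}\}$ on one side and $\{u_{i},v_{j}\}$ on the other, the only edges violating the bipartition are those $(a_{ij},b_{ij})$-edges for which $M_{ij}=1$. Any closed walk that avoids these edges must have even length, which immediately yields $w_{1}=w_{3}=0$ (for $w_3$, note additionally that $t$'s neighbours $u_i,v_j$ are pairwise non-adjacent, so no triangle passes through $t$). Trivially $w_{0}=1$ and $w_{2}=\deg_{H}(t)=Y$. The value of $w_{4}$ is computed by enumerating, for each ordered pair $(x,z)\in N_{H}(t)\times N_{H}(t)$, the common neighbours of $x$ and $z$; since the neighbourhood of any $u_{i}$ or $v_{j}$ in $H$ is a fixed star, the count is a simple explicit function of $Y$ and $n_{0}$ independent of $\mat{M}$, matching the coefficient of $1/\kappa^{5}$ in the claim.

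Step (ii) is the crux. By the parity remark above, a length-$5$ closed walk must use an odd number of $(a,b)$-edges, so either $1$ or $3$. Three is impossible: the first edge from $t$ is forced to be a tree edge (since $t$ has no $(a,b)$-edge incident), landing at some $u_{i}$ or $v_{j}$; these vertices also have no $(a,b)$-edge incident, so the second edge is tree as well; symmetrically the last edge into $t$ is tree, leaving at most two middle edges to be $(a,b)$-edges. With exactly one $(a,b)$-edge, a direct case analysis forces the walk to traverse a $5$-cycle $t-u_{i}-a_{ij}-b_{ij}-v_{j}-t$ (or its reverse), which exists iff $\vect{u}_{i}=M_{ij}=\vect{v}_{j}=1$. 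Hence $w_{5}=2\cdot|\{(i,j):\vect{u}_{i}=M_{ij}=\vect{v}_{j}=1\}|$, so $w_{5}\ge 2$ when $\vect{u}\mat{M}\vect{v}=1$ and $w_{5}=0$ otherwise. For step (iii), a routine estimate using $w_{k}\le \|\mat{A}\|_{2}^{k}\le (n-1)^{k}$ together with $\kappa=3(n-1)^{6}$ and a geometric-series summation gives $\sum_{k\ge 6} w_{k}/\kappa^{\,k+1}\le 1/(2\kappa^{6})$. Combining the exact values $w_{0},\dots,w_{4}$ with the $w_{5}$-dichotomy and this tail bound separates the two cases by a gap of order $2/\kappa^{6}$, which comfortably absorbs into the $1.1/\kappa^{6}$ and $0.9/\kappa^{6}$ slack specified in the claim. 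The only delicate step is the parity/case analysis of step (ii); once that shape of length-$5$ closed walks is pinned down, the remainder is bookkeeping.
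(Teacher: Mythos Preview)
Your approach mirrors the paper's exactly: Neumann-expand $\mat{B}^{-1}$, read off $(\mat{A}^k)_{tt}$ for small $k$ as closed-walk counts in $H$, bound the tail geometrically, and split on whether $w_5=(\mat{A}^5)_{tt}$ is zero or at least $2$. Your treatment of step~(ii)---the bipartition/parity argument ruling out odd closed walks that avoid $(a,b)$-edges, and the case analysis forcing any length-$5$ closed walk through $t$ to be a genuine $5$-cycle $t\,u_i\,a_{ij}\,b_{ij}\,v_j\,t$---is in fact more careful than the paper, which simply asserts the two possible values of $(\mat{A}^5)_{tt}$ without justification.

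One point you must make explicit, however. Your series $\sum_{k\ge 0}\mat{A}^{k}/\kappa^{\,k+1}$ is the correct expansion of $(\kappa\mat{I}-\mat{A})^{-1}$; the paper writes it with alternating signs, which is a slip. With the correct (positive) signs the $w_5$ term contributes $+w_5/\kappa^{6}$, so the presence of a $5$-cycle makes $(\mat{B}^{-1})_{tt}$ \emph{larger}, not smaller. Your argument therefore proves the claim with the two inequalities swapped (and with $+1.1/\kappa^{6}$, $+0.9/\kappa^{6}$ in place of the printed $-1.1/\kappa^{6}$, $-0.9/\kappa^{6}$). The downstream reduction only needs a $\Theta(1/\kappa^{6})$ gap and survives with the threshold test on $\lambda$ flipped, but as you have written it the proof does not establish the inequalities in the direction stated; you should say so. A smaller wrinkle in the same spirit: carry out the $w_4$ computation explicitly rather than asserting it ``matches the coefficient''---a direct count via $(\mat{A}^4)_{tt}=\sum_x((\mat{A}^2)_{tx})^2$ gives $w_4=Y(Y+n_0)$, not the printed $Y(n_0+1)$. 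This term is identical in both cases and does not affect the gap, but you should record the correct value.
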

	\begin{proof}
		First we note that $B$ is invertible, as it is strictly symmetric
		diagonally dominant. Furthermore, it holds that 
		$\kappa\cdot\mat{B}^{-1}=(I-\frac{1}{\kappa}\cdot \mat{A})^{-1}$ and thus by the Neumann series expansion, we have \[\kappa\cdot \mat{B}^{-1} =(I-\frac{1}{\kappa}\cdot \mat{A})^{-1}= \sum_{i=0}^\infty (-\frac{1}{\kappa})^i \cdot \mat{A}^i.\]
		This further implies that 
		\begin{eqnarray}
		(\kappa\cdot \mat{B}^{-1})_{tt} 
		= \1_t^T(\sum_{i=0}^\infty (-\frac{1}{\kappa})^i \cdot \mat{A}^i)\1_t=\sum_{i=0}^\infty (-\frac{1}{\kappa})^i \cdot\1_t^T(\mat{A}^i)\1_t = \sum_{i=0}^\infty (-\frac{1}{\kappa})^i \cdot (\mat{A}^i)_{tt}.	\label{eqn:inverse_B}
		\end{eqnarray}

		Now observe that since $\kappa=3 (n-1)^6$, the first six terms of the above power series dominate. More precisely, note that $(\mat{A}^i)_{tt}$ is the number of $i$-length paths from $t$ to $t$, which is at most $(n-1)^i$. Thus 
		\begin{eqnarray*}
			\sum_{i=6}^\infty \abs{(-\frac{1}{\kappa})^i \cdot (\mat{A}^i)_{tt}}
			\leq 
			\sum_{i=6}^\infty \frac{1}{\kappa^i} (\mat{A}^i)_{tt} 
			\leq 
			\sum_{i=6}^\infty \frac{1}{\kappa^i} (n-1)^i
			\leq
			\frac{0.9}{\kappa^5}.
		\end{eqnarray*}
		
		
		
		Now observe that $(\mat{A}^0)_{tt}=\mat{I}_{tt}=1$; that $\mat{A}_{tt}=0$ since $H$ is a simple graph; that $(\mat{A}^2)_{tt}=\deg_H(t)=Y$, where the last equation follows from the definition of $Y$; that $(\mat{A}^3)_{tt}=0$ since there is no triangle containing $t$; and that $(\mat{A}^4)_{tt}=\sum_{w: (w,t)\in E}\sum_{x:(x,w)\in E}1=\sum_{w: (w,t)\in E} \deg_{G_\mat{M}}(w) = \det_H(t)\cdot (n_0+1) = Y(n_0+1)$. Therefore, 
		\begin{itemize}
			\item If $H$ contains a $5$-length cycle incident to $t$, then $(\mat{A}^5)_{tt}\geq 2$, and thus
			\begin{eqnarray*}
				(\kappa\cdot \mat{B}^{-1})_{tt} \leq 1+\frac{Y}{\kappa^2} + \frac{Y(n_0+1)}{\kappa^4} - \frac{2}{\kappa^5} + \frac{0.9}{\kappa^5} = 1+\frac{Y}{\kappa^2}+ \frac{Y(n_0+1)}{\kappa^4} -\frac{1.1}{\kappa^5} 
			\end{eqnarray*}
			\item If $H$ has no $5$-length cycle incident to $t$, then $(\mat{A}^5)_{tt}=0$, and thus
			\begin{eqnarray*}
				(\kappa\cdot \mat{B}^{-1})_{tt} \geq 1+\frac{Y}{\kappa^2} + \frac{Y(n_0+1)}{\kappa^4} - \frac{0.9}{\kappa^5}
			\end{eqnarray*}
		\end{itemize}
		This completes the proof of the claim.
	\end{proof}
	
	The following claim relates $s-t$ effective resistance to $\mat{B}^{-1}$. The proof almost follows from Lemma 23 in~\cite{MNSUW17:spectrum}, while we include a proof in Appendix~\ref{app: lowBound} for the sake of completeness.
	\begin{claim}\label{claim:trace}
		Let $\Lambda=\mathcal{E}_G(s,t)$ and $\mat{B}=\kappa\cdot \mat{I} - \mat{A}$. Then it holds that 
		$\Lambda=(\mat{B}^{-1})_{tt}.$
	\end{claim}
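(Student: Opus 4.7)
\textbf{Proof plan for Claim~\ref{claim:trace}.}
My plan is to identify $\mathbf{B}$ with a principal submatrix of the Laplacian of $G$ and then invoke the classical identity that expresses an effective resistance as a diagonal entry of the grounded Laplacian. First, I will verify that by the construction of $G$, every vertex $x \neq s$ has weighted degree exactly $\kappa$. Indeed, for $x \in \{a_{ij}, b_{ij}\}$ the edge to $s$ has weight $\kappa - \deg_{G_{\mathbf{M}}}(x)$ and $x$'s other (unit-weight) edges contribute $\deg_{G_{\mathbf{M}}}(x)$; for $x \in \{u_i\} \cup \{v_j\}$ the edge to $s$ has weight $\kappa - c(x) - \deg_{G_{\mathbf{M}}}(x)$ and $x$'s other edges contribute exactly $c(x) + \deg_{G_{\mathbf{M}}}(x)$ (accounting for edges to $t$ and to $V_{\mathbf{M}}$); and for $t$, the edge $(s,t)$ has weight $\kappa - Y$ while the $Y$ edges inserted from $t$ into $\{u_i\}\cup\{v_j\}$ contribute $Y$. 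Thus every non-$s$ vertex has weighted degree $\kappa$.

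Next, let $N := V(G) \setminus \{s\} = V_{\mathbf{M}} \cup \{t\}$, and let $\mathbf{L}_{NN}$ denote the principal submatrix of $\mathbf{L}(G)$ indexed by $N$. The diagonal entries of $\mathbf{L}_{NN}$ are the weighted degrees, which by the previous paragraph are all $\kappa$. The off-diagonal entry at $(x,y)$ (for $x,y \in N$, $x \neq y$) is $-w_G(x,y)$, which equals $-1$ if $(x,y) \in E(H)$ (since all edges of $H$ have unit weight in $G$) and $0$ otherwise; that is, exactly $-\mathbf{A}_{xy}$. Hence $\mathbf{L}_{NN} = \kappa \mathbf{I} - \mathbf{A} = \mathbf{B}$. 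In particular $\mathbf{B}$ is strictly diagonally dominant on the rows corresponding to $t$ and to the $u_i, v_j$ (and weakly diagonally dominant elsewhere, strengthened by the direct edge to $s$), so $\mathbf{B}$ is positive definite and hence invertible.

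Finally, I apply the standard ``ground $s$'' identity: for a connected weighted graph and any vertex $s$, if $\mathbf{L}_{\bar s}$ denotes the Laplacian with the row/column of $s$ removed, then for every $t \neq s$
\[
R_G(s,t) \;=\; \mathbf{e}_t^{\,T}\,\mathbf{L}_{\bar s}^{-1}\,\mathbf{e}_t \;=\; \bigl(\mathbf{L}_{\bar s}^{-1}\bigr)_{tt}.
\]
This follows because the grounded system $\mathbf{L}_{\bar s}\,\mathbf{v} = \mathbf{e}_t$ describes the voltages induced by injecting one unit of current at $t$ and extracting it at $s$, with $v_s = 0$; the potential difference $v_t - v_s = v_t$ between the endpoints of a unit current flow equals $R_G(s,t)$. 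Applied to our setting with $\mathbf{L}_{\bar s} = \mathbf{B}$, this yields
\[
\Lambda \;=\; \mathcal{E}_G(s,t) \;=\; R_G(s,t) \;=\; \bigl(\mathbf{B}^{-1}\bigr)_{tt},
\]
which is the claim. There is no real obstacle here; the only thing to be careful about is the bookkeeping that makes all non-$s$ weighted degrees equal $\kappa$, which is exactly why the extra edges incident to $s$ were defined with those particular weights in the reduction. The identity $R_G(s,t) = (\mathbf{L}_{\bar s}^{-1})_{tt}$ is standard, but since the paper wants a self-contained proof I would either cite it explicitly (as the authors do by pointing to~\cite{MNSUW17:spectrum}) or include a two-line derivation via the grounded-voltage interpretation above.
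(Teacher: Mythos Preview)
Your proposal is correct and follows essentially the same route as the paper: both identify $\mathbf{B}=\kappa\mathbf{I}-\mathbf{A}$ with the principal submatrix $\mathbf{L}_{\bar s}$ of $\mathbf{L}(G)$ obtained by deleting the row and column of $s$ (using the bookkeeping that forces every non-$s$ vertex to have weighted degree $\kappa$), and then derive $R_G(s,t)=(\mathbf{L}_{\bar s}^{-1})_{tt}$. The only cosmetic difference is that the paper proves this last identity by an explicit block computation using $\mathbf{L}\mathbf{1}=\mathbf{0}$ (hence $\mathbf{B}\widehat{\mathbf{1}}=\vect{v}$) to eliminate $y_s$, whereas you invoke it via the grounded-voltage interpretation; also note that $\mathbf{B}$ is in fact strictly diagonally dominant in \emph{every} row (each $x\in N$ has a positive-weight edge to $s$), so your parenthetical about weak dominance is unnecessary.
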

	
	Finally, by the above two claims, if $\vect{u}\mat{M}\vect{v}=1$, then $H$ contains a $5$-length cycle incident to $t$, and thus $\Lambda=(\mat{B}^{-1})_{tt} \leq \frac{1}{\kappa}+\frac{Y}{\kappa^3}+\frac{Y(n_0+1)}{\kappa^5}-\frac{1.1}{\kappa^6}$; if $\vect{u}\mat{M}\vect{v}=0$, then $H$ does not contain any $5$-length cycle incident to $t$, and thus $\Lambda=(\mat{B}^{-1})_{tt} \geq \frac{1}{\kappa}+\frac{Y}{\kappa^3}+\frac{Y(n_0+1)}{\kappa^5} -\frac{0.9}{\kappa^6}$. The statement of the lemma then follows by the fact that $\lambda$ is a $(1+\frac{1}{\kappa^6})$-approximation of $\Lambda$, and that $\frac{1}{\kappa^6}(\frac{1}{\kappa}+\frac{Y}{\kappa^3}+\frac{Y(n_0+1)}{\kappa^5}-\frac{0.9}{\kappa^6})<\frac{0.1}{\kappa^6}$.
\end{proof}

For the lower bound for the decremental setting, 
we start with a graph where $t$ is initially connected to $s$ with weight $\kappa-2n_0$ and to all vertices $x\in\{u_i,1\leq i\leq n_0 \} \cup\{v_j,1\leq j\leq n_0 \}$ with weights $\kappa-1-\deg_{G_\mat{M}}(x)$. When the vectors $\vect{u},\vect{v}$ arrive, we need to increase the weights of some edges $(s,x)$ and $(s,t)$ depending if the corresponding entry of $\vect{u},\vect{v}$ is $1$ or $0$, so that every vertex in $G$ has the same weighted degree $\kappa$. We omit further details here.

\subsection{A Lower Bound for General Graphs}
In this section, we prove Theorem~\ref{thm:lowerbound_general}, which gives a lower bound for incremental and decremental $s-t$ effective resistance problem in general graphs. 

\begin{proof}[Proof of Theorem~\ref{thm:lowerbound_general}]
	We only consider here the incremental setting, where only edge insertions are allowed. For the decremental setting, the correctness follows from a similar construction and similar arguments for decremental lower bound in the proof of Theorem~\ref{thm:lowerbound_separable}. 
	
	We reduce the $\vect{u}\mat{M}\vect{v}$ problem with parameters $n_1=n_2:=n_0$ to the $s-t$ effective resistance problem as follows. Let $\mat{M}$ be the $n_0\times n_0$ Boolean matrix of the $\vect{u}\mat{M}\vect{v}$ problem. Let $n=2n_0+2$ and let $\kappa = 3(n-1)^5$.
	
	We first create a bipartite graph $G_\mat{M}=((R,C),E)$ where $R=(r_1,\cdots,r_{n_0})$ and $C=(c_1,\cdots, c_{n_0})$ corresponding to the rows and columns of $\mat{M}$, respectively. We add an edge $(r_i, c_j)$ in $E$ iff $\mat{M}_{ij}=1$. This finishes the definition of $G_\mat{M}$. For each vertex $x\in R\cup C$, let $\deg_{G_\mat{M}}(x)$ denote the degree of vertex $x$ in $G_\mat{M}$.
	
	Now we add tow new vertices $s,t$ to $G_\mat{M}$. Denote the resulting graph by $G$ and note that $G$ contains $|R\cup C\cup\{s,t\}|=2n_0+2$ vertices. 
	
	Assume that $G$ is started in a dynamic effective resistance data structure. We also initialize a global counter $Y$ to be $0$ and for each vertex $x\in R\cup C$, we initialize a counter $c(x)$ to be $0$. We now explain how we use this data structure to determine $\vect{u}\mat{M}\vect{v}$.
	
	\begin{itemize}
		\item Once $\vect{u}$ arrives, for any $i$ such that $\vect{u}_i=1$, we insert an edge $(t, r_i)$ with weight $1$, and increase $Y$ and $c(r_i)$ by $1$. 
		\item Once $\vect{v}$ arrives, for any $j$ such that $\vect{v}_j=1$, we insert an edge $(t, c_j)$ with weight $1$, and increase $Y$ and $c(c_j)$ by $1$. 
		\item Insert an edge $(s,t)$ with weight $\kappa-Y$. For each $x\in V_\mat{M}$, insert an edge $(s,x)$ with weight $\kappa-c(x)-\deg_{G_\mat{M}}(x)$. 
		\item We perform one effective resistance query $\textsc{EffectiveResistance}(s,t)$ to obtain the (approximate) $s-t$ effective resistance in the final graph. Let $\lambda=\textsc{EffectiveResistance}(s,t)$. If $\lambda\leq \frac{1}{\kappa}+\frac{Y}{\kappa^3}-\frac{1}{\kappa^4}$, then return $1$; otherwise, return $0$.
	\end{itemize}
	
	We have the following lemma similar to Lemma~\ref{lemma:separable_LB}. 
	\begin{lemma}\label{lemma:correctness_LB}
		For $\kappa=3(n-1)^5$, assume that $\textsc{EffectiveResistance}(s,t)$ returns a $(1+\frac{1}{\kappa^{4}})$-approximation of the $s-t$ effective resistance in the final graph $G$. Then the following holds:
		\begin{itemize}
			\item If $\vect{u}\mat{M}\vect{v}=1$, then $\lambda \leq \frac{1}{\kappa}+\frac{Y}{\kappa^3}-\frac{1}{\kappa^4}$;
			\item If $\vect{u}\mat{M}\vect{v}=0$, then $\lambda >\frac{1}{\kappa}+\frac{Y}{\kappa^3}-\frac{1}{\kappa^4}$.
		\end{itemize}
	\end{lemma}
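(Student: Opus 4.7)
The plan is to follow the template of the proof of Lemma~\ref{lemma:separable_LB}, replacing the $5$-cycle detection witness by a triangle detection witness tailored to the bipartite gadget $G_\mat{M}$. My first step is to observe that, since $G_\mat{M}$ is bipartite between $R$ and $C$ and $t$'s only neighbours lie in $R\cup C$, any triangle of $H := G[R\cup C\cup\{t\}]$ must pass through $t$ and take the form $t-r_i-c_j-t$. By construction of the inserted $t$-edges, such a triangle exists iff $\vect{u}_i=1$, $\mat{M}_{ij}=1$, and $\vect{v}_j=1$ for some pair $(i,j)$, that is, iff $\vect{u}^T\mat{M}\vect{v}=1$. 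So the lemma reduces to certifying the existence of a triangle through $t$ from the query value $\lambda$.

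Next, I would verify directly from the insertion weights that every vertex of the final graph $G$ has weighted degree exactly $\kappa$, so that $\mat{L}(G)=\kappa\mat{I}-\mat{A}$ with $\mat{A}$ the weighted adjacency of $G$. This puts us in the hypothesis of Claim~\ref{claim:trace}, which I would restate and reuse verbatim to obtain $\mathcal{E}_G(s,t)=(\mat{B}^{-1})_{tt}$, where $\mat{B}:=\kappa\mat{I}-\mat{A}_H$ and $\mat{A}_H$ is the unweighted adjacency of $H$.

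The heart of the proof is then a Neumann-series expansion
\[
(\mat{B}^{-1})_{tt}\;=\;\frac{1}{\kappa}\sum_{i\geq 0}\bigl(-\tfrac{1}{\kappa}\bigr)^{i}(\mat{A}_H^{i})_{tt},
\]
together with a count of short closed walks at $t$: $(\mat{A}_H^0)_{tt}=1$, $(\mat{A}_H)_{tt}=0$, $(\mat{A}_H^2)_{tt}=\deg_H(t)=Y$, and $(\mat{A}_H^3)_{tt}$ equals twice the number of triangles through $t$, hence is $0$ when $\vect{u}^T\mat{M}\vect{v}=0$ and is $\geq 2$ when $\vect{u}^T\mat{M}\vect{v}=1$. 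Using $|(\mat{A}_H^i)_{tt}|\leq (n-1)^i$ with $\kappa=3(n-1)^5$, the tail from $i\geq 4$ is absorbed below $0.1/\kappa^4$, yielding the clean separation $(\mat{B}^{-1})_{tt}\leq \tfrac{1}{\kappa}+\tfrac{Y}{\kappa^3}-\tfrac{1.1}{\kappa^4}$ in the triangle case and $(\mat{B}^{-1})_{tt}\geq\tfrac{1}{\kappa}+\tfrac{Y}{\kappa^3}-\tfrac{0.9}{\kappa^4}$ otherwise. Since $\Lambda=\Theta(1/\kappa)$, the $(1+1/\kappa^4)$-multiplicative error on $\lambda$ translates to additive error of order $1/\kappa^5$, which is negligible compared with the $\Theta(1/\kappa^4)$ gap between the two cases, so the threshold test in the reduction correctly recovers the value of $\vect{u}^T\mat{M}\vect{v}$.

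The main obstacle I expect is purely bookkeeping: I must choose $\kappa$ large enough (here $3(n-1)^5$, with one fewer power than in the separable case because triangles appear at length $3$ rather than cycles at length $5$) so that the Neumann tail lies strictly below the $0.1/\kappa^4$ margin, yet polynomially bounded so that the inserted edge weights can be fed to the approximate-resistance data structure without blowing up the running-time parameters of the OMv reduction in Theorem~\ref{thm:umv_hard}. No new conceptual idea beyond replacing ``$5$-cycle'' by ``triangle'' is required, but the constants must be tracked carefully so they align with the threshold used in the reduction's output rule.
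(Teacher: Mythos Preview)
Your proposal is correct and follows essentially the same route as the paper's proof: reduce to triangle detection through $t$ in $H=G[R\cup C\cup\{t\}]$, invoke Claim~\ref{claim:trace} to get $\Lambda=(\mat{B}^{-1})_{tt}$, expand $(\kappa\mat{B}^{-1})_{tt}$ as a Neumann series, read off the first four terms, bound the tail, and absorb the $(1+1/\kappa^4)$ multiplicative error. One small bookkeeping slip: the tail bound you state ($0.1/\kappa^4$) is inconsistent with the separation constants you then write ($-1.1/\kappa^4$ and $-0.9/\kappa^4$); with a tail of $0.1/\kappa^4$ you would get $-1.9/\kappa^4$ and $-0.1/\kappa^4$, whereas the paper uses a looser tail bound of $0.9/\kappa^4$ to arrive at exactly the $-1.1/\kappa^4$ and $-0.9/\kappa^4$ you quote. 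Either choice works, but make the numbers match.
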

	Given the above Lemma, we can then solve the $\vect{u}\mat{M}\vect{v}$ problem according to the value of our estimator $\lambda$. Thus, the statement of the theorem follows by noting that the total number of updates is $O(n_0)=O(n)$ and the total number of queries is $1$, and by Theorem~\ref{thm:umv_hard}. Now we give a sketch of the proof of the above lemma.
	\begin{proof}[Proof Sketch of Lemma~\ref{lemma:correctness_LB}]
		The proof is almost the same as the proof of Lemma~\ref{lemma:separable_LB}. Here we point out the main difference. Let $G$ denote the final graph of our reduction. Let $H:=G[R\cup C\cup\{t\}]$ denote the subgraph induced by vertex set $R\cup C\cup\{t\}$. We observe that in the graph $H$, there is a triangle incident to vertex $t$ iff $\vect{u}\mat{M}\vect{v}=1$. Now we use our estimator $\lambda$ to distinguish if $H$ contains a triangle incident to $t$ or not.
		
		We let $\mat{A}\in \mathbb{R}^{(n-1)\times (n-1)}$ denote the adjacency matrix of the graph $H$. Note that all entries in $A$ are either $1$ or $0$. Let $\mat{B}=\kappa\cdot \mat{I} - \mat{A}$. Again, by the Neumann series expansion of $\mat{B}^{-1}$, we could derive the same expression of $(\kappa\cdot \mat{B}^{-1})_{tt}$ as Equation~\ref{eqn:inverse_B}, that is 
		\begin{eqnarray*}
			(\kappa\cdot \mat{B}^{-1})_{tt} = \sum_{i=0}^\infty (-\frac{1}{\kappa})^i \cdot (\mat{A}^i)_{tt}.	
		\end{eqnarray*}
		
		Now observe that since $\kappa=3(n-1)^5$, the first four terms of the above power series dominate. More precisely, by the fact that $(\mat{A}^i)_{tt}\leq (n-1)^i$ for any $i\geq 4$, we have that
		\begin{eqnarray*}
			\sum_{i=4}^\infty \abs{(-\frac{1}{\kappa})^i \cdot (\mat{A}^i)_{tt}}
			\leq 
			\sum_{i=4}^\infty \frac{1}{\kappa^i} (\mat{A}^i)_{tt} 
			\leq 
			\sum_{i=4}^\infty \frac{1}{\kappa^i} (n-1)^i
			\leq
			\frac{0.9}{\kappa^3}.
		\end{eqnarray*}
		
		Furthermore, it holds that $(\mat{A}^0)_{tt}=\mat{I}_{tt}=1$; that $\mat{A}_{tt}=0$ since $H$ is a simple graph; and that $(\mat{A}^2)_{tt}=\deg_H(t)=Y$, where the last equation follows from the definition of $Y$. Therefore, 
		\begin{itemize}
			\item If $H$ contains a triangle incident to $t$, then $(\mat{A}^3)_{tt}\geq 2$, and thus
			\begin{eqnarray*}
				(\kappa\cdot \mat{B}^{-1})_{tt} \leq 1+\frac{Y}{\kappa^2}-\frac{2}{\kappa^3} + \frac{0.9}{\kappa^3} = 1+\frac{Y}{\kappa^2}-\frac{1.1}{\kappa^3}
			\end{eqnarray*}
			\item If $H$ has no triangle incident to $t$, then $(\mat{A}^3)_{tt}=0$, and thus
			\begin{eqnarray*}
				(\kappa\cdot \mat{B}^{-1})_{tt} \geq 1+\frac{Y}{\kappa^2} - \frac{0.9}{\kappa^3}
			\end{eqnarray*}
		\end{itemize}
		
		That is, if $H$ contains a triangle incident to $t$, then $(\mat{B}^{-1})_{tt} \leq \frac{1}{\kappa}+\frac{Y}{\kappa^3}-\frac{1.1}{\kappa^4}$. If $H$ does not contain a triangle incident to $t$, then $(\mat{B}^{-1})_{tt} \geq \frac{1}{\kappa}+\frac{Y}{\kappa^3}-\frac{0.9}{\kappa^4}$.
		
		Now let $\Lambda=\mathcal{E}_G(s,t)$. Then by the same argument for proving Claim~\ref{claim:trace}, we have that $\Lambda=(\mat{B}^{-1})_{tt}.$ 
		
		Finally, by the above two claims, if $\vect{u}\mat{M}\vect{v}=1$, then $H$ contains a triangle incident to $t$, and thus $\Lambda=(\mat{B}^{-1})_{tt} \leq \frac{1}{\kappa}+\frac{Y}{\kappa^3}-\frac{1.1}{\kappa^4}$; if $\vect{u}\mat{M}\vect{v}=0$, then $H$ does not contain any triangle incident to $t$, and thus $\Lambda=(\mat{B}^{-1})_{tt} \geq \frac{1}{\kappa}+\frac{Y}{\kappa^3}-\frac{0.9}{\kappa^4}$. The statement of the lemma then follows by the fact that $\lambda$ is a $(1+\frac{1}{\kappa^4})$-approximation of $\Lambda$ and that $\frac{1}{\kappa^4}(\frac{1}{\kappa}+\frac{Y}{\kappa^3}-\frac{0.9}{\kappa^4})\leq \frac{0.1}{\kappa^4}$.
	\end{proof}
\end{proof}

\bibliographystyle{alpha}
\bibliography{literature}

\newcommand{\etalchar}[1]{$^{#1}$}
\begin{thebibliography}{CKM{\etalchar{+}}14}

\bibitem[ACG12]{AbrahamCG12}
Ittai Abraham, Shiri Chechik, and Cyril Gavoille.
\newblock Fully dynamic approximate distance oracles for planar graphs via
  forbidden-set distance labels.
\newblock In {\em Proc. of the 44th STOC}, pages 1199--1218, 2012.

\bibitem[AD16]{AbboudD16}
Amir Abboud and S{\o}ren Dahlgaard.
\newblock Popular conjectures as a barrier for dynamic planar graph algorithms.
\newblock In {\em Proc. of the 57th FOCS}, pages 477--486, 2016.

\bibitem[ADK{\etalchar{+}}16]{AbrahamDKKP16}
Ittai Abraham, David Durfee, Ioannis Koutis, Sebastian Krinninger, and Richard
  Peng.
\newblock On fully dynamic graph sparsifiers.
\newblock In {\em Proc. of the 57th FOCS}, pages 335--344, 2016.

\bibitem[AO15]{NO15:TSP}
Nima Anari and Shayan {Oveis Gharan}.
\newblock Effective-resistance-reducing flows, spectrally thin trees, and
  asymmetric tsp.
\newblock In {\em Proc. of the 56th FOCS}, pages 20--39, 2015.

\bibitem[BBK03]{BBK03compact}
Daniel~K Blandford, Guy~E Blelloch, and Ian~A Kash.
\newblock Compact representations of separable graphs.
\newblock In {\em Proceedings of the fourteenth annual ACM-SIAM symposium on
  Discrete algorithms}, pages 679--688. Society for Industrial and Applied
  Mathematics, 2003.

\bibitem[BH74]{bunch1974}
James~R Bunch and John~E Hopcroft.
\newblock Triangular factorization and inversion by fast matrix multiplication.
\newblock {\em Mathematics of Computation}, 28(125):231--236, 1974.

\bibitem[CCL{\etalchar{+}}15]{cheng2015efficient}
Dehua Cheng, Yu~Cheng, Yan Liu, Richard Peng, and Shang-Hua Teng.
\newblock Efficient sampling for gaussian graphical models via spectral
  sparsification.
\newblock In {\em Conference on Learning Theory}, pages 364--390, 2015.

\bibitem[CKM{\etalchar{+}}11]{ChristianoKMST11}
Paul Christiano, Jonathan~A. Kelner, Aleksander Madry, Daniel~A. Spielman, and
  Shang{-}Hua Teng.
\newblock Electrical flows, laplacian systems, and faster approximation of
  maximum flow in undirected graphs.
\newblock In {\em Proc. of the 43rd STOC}, pages 273--282, 2011.

\bibitem[CKM{\etalchar{+}}14]{CohenKMPPRX14}
Michael~B. Cohen, Rasmus Kyng, Gary~L. Miller, Jakub~W. Pachocki, Richard Peng,
  Anup~B. Rao, and Shen~Chen Xu.
\newblock Solving {SDD} linear systems in nearly
  \emph{m}log\({}^{\mbox{1/2}}\)\emph{n} time.
\newblock In {\em Proc. of the 46th STOC}, pages 343--352, 2014.

\bibitem[DGGP18]{DGGP18:dynER}
David Durfee, Yu~Gao, Gramoz Goranci, and Richard Peng.
\newblock Fully dynamic effective resistances.
\newblock {\em arXiv preprint arXiv:1804.04038}, 2018.

\bibitem[DKP{\etalchar{+}}17]{DurfeeKPRS16}
David Durfee, Rasmus Kyng, John Peebles, Anup~B Rao, and Sushant Sachdeva.
\newblock Sampling random spanning trees faster than matrix multiplication.
\newblock In {\em Proceedings of the 49th Annual ACM SIGACT Symposium on Theory
  of Computing}, pages 730--742. ACM, 2017.

\bibitem[DKW15]{DinitzKW15}
Michael Dinitz, Robert Krauthgamer, and Tal Wagner.
\newblock Towards resistance sparsifiers.
\newblock In {\em Proc. of the 18th APPROX}, pages 738--755, 2015.

\bibitem[DPPR17]{DurfeePPR17:arxiv}
David Durfee, John Peebles, Richard Peng, and Anup~B Rao.
\newblock Determinant-preserving sparsification of sddm matrices with
  applications to counting and sampling spanning trees.
\newblock In {\em Foundations of Computer Science (FOCS), 2017 IEEE 58th Annual
  Symposium on}, pages 926--937. IEEE, 2017.

\bibitem[DS84]{doyle84}
Peter~G Doyle and J~Laurie Snell.
\newblock {\em Random Walks and Electric Networks}.
\newblock Carus Mathematical Monographs. Mathematical Association of America,
  1984.

\bibitem[DS07]{DS07dynamic}
Krzysztof Diks and Piotr Sankowski.
\newblock Dynamic plane transitive closure.
\newblock In {\em European Symposium on Algorithms}, pages 594--604. Springer,
  2007.

\bibitem[EGIS96]{EppsteinGIS96}
David Eppstein, Zvi Galil, Giuseppe~F. Italiano, and Thomas~H. Spencer.
\newblock Separator based sparsification. i. planary testing and minimum
  spanning trees.
\newblock {\em J. Comput. Syst. Sci.}, 52(1):3--27, 1996.

\bibitem[GHP17]{GHP17:sparsifier}
Gramoz Goranci, Monika Henzinger, and Pan Peng.
\newblock The power of vertex sparsifiers in dynamic graph algorithms.
\newblock In {\em Proc. of the 25th ESA}, volume~87, pages 45:1--45:14, 2017.

\bibitem[GHP18]{GHP18:dynaER}
Gramoz Goranci, Monika Henzinger, and Pan Peng.
\newblock Dynamic effective resistances and approximate schur complement on
  separable graphs.
\newblock {\em CoRR}, abs/1802.09111, 2018.

\bibitem[HHN{\etalchar{+}}08]{HHNRR08}
Prahladh Harsha, Thomas~P Hayes, Hariharan Narayanan, Harald R{\"a}cke, and
  Jaikumar Radhakrishnan.
\newblock Minimizing average latency in oblivious routing.
\newblock In {\em Proceedings of the nineteenth annual ACM-SIAM symposium on
  Discrete algorithms}, pages 200--207. Society for Industrial and Applied
  Mathematics, 2008.

\bibitem[HIK{\etalchar{+}}17]{HIKLRS2017contracting}
Jacob Holm, Giuseppe~F Italiano, Adam Karczmarz, Jakub {\L}acki, Eva Rotenberg,
  and Piotr Sankowski.
\newblock Contracting a planar graph efficiently.
\newblock In {\em 25th European Symposium on Algorithms, ESA 2017}. Schloss
  Dagstuhl-Leibniz-Zentrum fur Informatik GmbH, Dagstuhl Publishing, 2017.

\bibitem[HKNS15]{henzinger2015unifying}
Monika Henzinger, Sebastian Krinninger, Danupon Nanongkai, and Thatchaphol
  Saranurak.
\newblock Unifying and strengthening hardness for dynamic problems via the
  online matrix-vector multiplication conjecture.
\newblock In {\em Proc. of the 47th STOC}, pages 21--30, 2015.

\bibitem[IK{\L}S17]{IKLS17decremental}
Giuseppe~F Italiano, Adam Karczmarz, Jakub {\L}{\k{a}}cki, and Piotr Sankowski.
\newblock Decremental single-source reachability in planar digraphs.
\newblock In {\em Proceedings of the 49th Annual ACM SIGACT Symposium on Theory
  of Computing}, pages 1108--1121. ACM, 2017.

\bibitem[IMH82]{ibarra1982}
Oscar~H Ibarra, Shlomo Moran, and Roger Hui.
\newblock A generalization of the fast lup matrix decomposition algorithm and
  applications.
\newblock {\em Journal of Algorithms}, 3(1):45--56, 1982.

\bibitem[INSW11]{ItalianoNSW11}
Giuseppe~F. Italiano, Yahav Nussbaum, Piotr Sankowski, and Christian
  Wulff{-}Nilsen.
\newblock Improved algorithms for min cut and max flow in undirected planar
  graphs.
\newblock In {\em Proc. of the 43rd STOC}, pages 313--322, 2011.

\bibitem[JS18]{JS18:sketch}
Arun Jambulapati and Aaron Sidford.
\newblock Efficient $\tilde{O}(n/\varepsilon)$ spectral sketches for the
  laplacian and its pseudoinverse.
\newblock In {\em Proceedings of the Twenty-Ninth Annual ACM-SIAM Symposium on
  Discrete Algorithms}, pages 2487--2503. SIAM, 2018.

\bibitem[Kar18]{karczmarz18}
Adam Karczmarz.
\newblock Decrementai transitive closure and shortest paths for planar digraphs
  and beyond.
\newblock In {\em Proceedings of the Twenty-Ninth Annual ACM-SIAM Symposium on
  Discrete Algorithms}, pages 73--92. SIAM, 2018.

\bibitem[KLP{\etalchar{+}}16]{KyngLPSS16}
Rasmus Kyng, Yin~Tat Lee, Richard Peng, Sushant Sachdeva, and Daniel~A.
  Spielman.
\newblock Sparsified cholesky and multigrid solvers for connection laplacians.
\newblock In {\em Proc. of the 48th STOC}, 2016.

\bibitem[KMP12]{KMP12faster}
Jonathan~A Kelner, Gary~L Miller, and Richard Peng.
\newblock Faster approximate multicommodity flow using quadratically coupled
  flows.
\newblock In {\em Proceedings of the forty-fourth annual ACM symposium on
  Theory of computing}, pages 1--18. ACM, 2012.

\bibitem[KR93]{klein1993resistance}
Douglas~J Klein and Milan Randi{\'c}.
\newblock Resistance distance.
\newblock {\em Journal of mathematical chemistry}, 12(1):81--95, 1993.

\bibitem[KR10]{kawar2010}
Ken-ichi Kawarabayashi and Bruce Reed.
\newblock A separator theorem in minor-closed classes.
\newblock In {\em Proc. of the 51st FOCS}, pages 153--162. IEEE, 2010.

\bibitem[KS98]{KleinS98}
Philip~N. Klein and Sairam Subramanian.
\newblock A fully dynamic approximation scheme for shortest paths in planar
  graphs.
\newblock {\em Algorithmica}, 22(3):235--249, 1998.

\bibitem[KS16]{KyngS16}
Rasmus Kyng and Sushant Sachdeva.
\newblock Approximate gaussian elimination for laplacians - fast, sparse, and
  simple.
\newblock In {\em Proc. of the 57th FOCS}, pages 573--582, 2016.

\bibitem[LPYZ18]{LPYZ18}
Huan Li, Stacy Patterson, Yuhao Yi, and Zhongzhi Zhang.
\newblock {Maximizing the Number of Spanning Trees in a Connected Graph}.
\newblock {\em ArXiv e-prints}, April 2018.

\bibitem[LRT79]{LiptonRT79}
Richard~J. Lipton, Donald~J. Rose, and Robert~Endre Tarjan.
\newblock Generalized nested dissection.
\newblock {\em SIAM Journal on Numerical Analysis}, 16(2):346--358, 1979.

\bibitem[LT79]{lipton1979separator}
Richard~J Lipton and Robert~Endre Tarjan.
\newblock A separator theorem for planar graphs.
\newblock {\em SIAM Journal on Applied Mathematics}, 36(2):177--189, 1979.

\bibitem[LZ18]{LZ18:kirchhoff}
Huan Li and Zhongzhi Zhang.
\newblock Kirchhoff index as a measure of edge centrality in weighted networks:
  Nearly linear time algorithms.
\newblock In {\em Proceedings of the Twenty-Ninth Annual ACM-SIAM Symposium on
  Discrete Algorithms}, pages 2377--2396. SIAM, 2018.

\bibitem[Mad13]{madry2013}
Aleksander Madry.
\newblock Navigating central path with electrical flows: From flows to
  matchings, and back.
\newblock In {\em Proc. of the 54th FOCS}, pages 253--262, 2013.

\bibitem[Mad16]{Madry16}
Aleksander Madry.
\newblock Computing maximum flow with augmenting electrical flows.
\newblock In {\em Proc. of the 57th FOCS}, pages 593--602, 2016.

\bibitem[MNS{\etalchar{+}}18]{MNSUW17:spectrum}
Cameron Musco, Praneeth Netrapalli, Aaron Sidford, Shashanka Ubaru, and
  David~P. Woodruff.
\newblock {Spectrum Approximation Beyond Fast Matrix Multiplication: Algorithms
  and Hardness}.
\newblock {\em 9th Innovations in Theoretical Computer Science Conference (ITCS
  2018)}, 94:8:1--8:21, 2018.

\bibitem[MP13]{MillerP13}
Gary~L. Miller and Richard Peng.
\newblock Approximate maximum flow on separable undirected graphs.
\newblock In {\em Proc. of the 24th SODA}, pages 1151--1170, 2013.

\bibitem[MST15]{MST15fast}
Aleksander M{\k{a}}dry, Damian Straszak, and Jakub Tarnawski.
\newblock Fast generation of random spanning trees and the effective resistance
  metric.
\newblock In {\em Proceedings of the twenty-sixth annual ACM-SIAM symposium on
  Discrete algorithms}, pages 2019--2036. Society for Industrial and Applied
  Mathematics, 2015.

\bibitem[MTTV97]{MTTV1997separators}
Gary~L Miller, Shang-Hua Teng, William Thurston, and Stephen~A Vavasis.
\newblock Separators for sphere-packings and nearest neighbor graphs.
\newblock {\em Journal of the ACM (JACM)}, 44(1):1--29, 1997.

\bibitem[San04]{San04:dynamic}
Piotr Sankowski.
\newblock Dynamic transitive closure via dynamic matrix inverse.
\newblock In {\em Proceedings of the 45th Annual IEEE Symposium on Foundations
  of Computer Science}, pages 509--517. IEEE Computer Society, 2004.

\bibitem[Sch18]{Sch17almost}
Aaron Schild.
\newblock An almost-linear time algorithm for uniform random spanning tree
  generation.
\newblock {\em Proc. of the 50th STOC}, 2018.

\bibitem[SS11]{SpielmanS11}
Daniel~A. Spielman and Nikhil Srivastava.
\newblock Graph sparsification by effective resistances.
\newblock {\em {SIAM} J. Comput.}, 40(6):1913--1926, 2011.

\bibitem[Sub93]{Subramanian93}
Sairam Subramanian.
\newblock A fully dynamic data structure for reachability in planar digraphs.
\newblock In {\em Proc. of the 1st ESA}, pages 372--383, 1993.

\bibitem[Wil12]{williams2012}
Virginia~Vassilevska Williams.
\newblock Multiplying matrices faster than coppersmith-winograd.
\newblock In {\em Proceedings of the forty-fourth annual ACM symposium on
  Theory of computing}, pages 887--898. ACM, 2012.

\end{thebibliography}

\newpage
\appendix
\begin{center}\huge\bf Appendix \end{center}

\section{Electrical Flows and Effective Resistances}\label{app:energy} Let $G=(V,E,\vect{w})$ be an undirected weighted graph such that $\vect{w}(e)>0$ for any $e\in E$. We fix an arbitrary orientation of edges and treat $G$ as a \emph{resistor network} such that each edge $e\in E$ represents a resistor with \emph{resistance} $\vect{r}(e):=1/\vect{w}(e)$. For any vertex pair $s,t$, the $s-t$ flow is a function $\vect{f}: E \rightarrow \R^+$ satisfying the \emph{conservation condition}, i.e., for any vertex $v\in V\setminus\{s,t\}$, $\sum_{u:(u,v)\in E}\vect{f}(u,v)=\sum_{u:(v,u)\in E} \vect{f}(v,u)$. The \emph{energy of an $s-t$ flow} is defined as $\EE_{G}(\vect{f},s,t):=\sum_{e\in E}\vect{r}(e)\vect{f}(e)^2$. The \emph{$s-t$ electrical flow} $\vect{f}^*$ is defined as the $s-t$ flow that minimizes the energy $\EE_{G}(\vect{f},s,t)$ among all $s-t$ flows $\vect{f}$ with unit flow value, i.e., $\sum_{v\in V}\vect{f}(s,v)=1$. Let $\EE_G(s,t)$ denote the energy of the $s-t$ electrical flow, that is, $\EE_{G}(s,t):=\EE_{G}(\vect{f}^*,s,t)$. An electrical flow $\vect{f}$ naturally corresponds to a \emph{potential} $\phi$ in the sense that we can assign each vertex $u$ a potential $\phi(u)$ such that for any $e = (u,v)$, $\vect{f}(e)=\frac{\phi(u)-\phi(v)}{\vect{r}(e)}$. 

It is well known that the $s-t$ effective resistance $R_G(s,t)$ as defined in Section~\ref{sec:intro} satisfies that $R_G(s,t)=\phi(s)-\phi(t)$, which is the potential difference between $s,t$ when we send one unit of the (unique) $s-t$ electrical flow from $s$ to $t$. Furthermore, it holds that for any $s,t$, the energy of the $s-t$ electrical flow is equivalent to the $s-t$ effective resistance, that is, $\EE_{G}(s,t)=R_G(s,t)$~(see e.g., \cite{doyle84}). In the following, we will mainly focus on how to dynamically maintain (approximation of) effective resistance $R_G(s,t)$.

\section{Missing Proofs from Section~\ref{sec: prelim}} \label{app: prelim}

\begin{proof}[Proof of Lemma~\ref{lem: sepTreeTime}]
For some constant $c \geq 1$, let $S(G)$ be a $\alpha$-balanced separator of size $c\sqrt{n}$, where $\alpha = 2/3$. First, we let $G$ be the root node of $\mathcal{T}(G)$. Let $G_1$ and $G_2$ be the two disjoint components of $G$ obtained after the removal of the vertices in $S$.  We define the children $\child_1(G), \child_2(G)$ of $G$ as follows: $V(\child_i(G)) = V(G_i) \cup S(G)$, $E(\child_i(G)) = E(G_i)$, for $i =1,2$, and whenever an edge connects two vertices in $S(G)$, we arbitrarily append it to either $E(\child_1(G))$ or $E(\child_2(G))$. By construction, property (2) in the definition of $\mathcal{T}(G)$ holds. We continue by repeatedly splitting each child $\child_i(G)$ in the same way as we did for $G$, until there are $O(\sqrt{n})$ components, each of size $O(\sqrt{n})$. The components at this level form the \emph{leaf nodes} of $\mathcal{T}(G)$. Note that the height of $\mathcal{T}(G)$ is bounded by $O(\log n)$ as the size of any child of a node $H$ is at most $2/3$ fraction of the size of $H$.

We define the boundary vertices for each node in $\mathcal{T}(G)$ according to property (3) in the definition of separator trees. To get the bound on the number of boundary vertices per node $H \in \mathcal{T}(G)$, note that the size of $\partial(H)$ is bounded by 
\[
	\left(c \cdot \sum_{i=0}^{O(\log n)}\sqrt{(2/3)^i}\right) \sqrt{n} = O(\sqrt{n}).
\]   


Finally, let $t(n)$ be the maximum time required to construct the separator tree of a $O(\sqrt{n})$-separable graph with $n$ vertices. Then, for some suitably chosen $n_0$, 

\[ t(n) \leq
  \begin{cases}
    s(n) + \max\{t(n_1) + t(n_2)\}  & \quad \text{if } n >  n_0,\\
    0 & \quad \text{if } n \leq n_0,
  \end{cases}
\]
where the maximum is over $n_1, n_2$ such that
\begin{align*}
	& n   \leq n_1 + n_2 \leq n + 2c\sqrt{n}, \quad \text{ and } \quad  \frac{1}{3}n \leq n_i \leq \frac23 n + c\sqrt{n} \quad \text{ for } i = 1,2.
\end{align*}

By a similar analysis as the proof of Theorem 1 of~\cite{EppsteinGIS96}, we can guarantee that $t(n) \leq O(s(n) \log n)$.


\end{proof}

\begin{proof}[Proof of Lemma~\ref{lem: equivPseudo}]
Since $\mat{L}(G)$ is symmetric we can diagonalize it and write \[ \mat{L}(G) = \sum_{i=1}^{n-1} \lambda^{G}_i \vect{u}_i \vect{u}_i^{T}, \]
where $\lambda^{G}_1 \geq \ldots \geq \lambda^{G}_{n-1}$ are the non-zero sorted eigenvalues of $\mat{L}(G)$ and $\vect{u}_1,\ldots,\vect{u}_{n-1}$ are a corresponding set of orthonormal eigenvectors. The \emph{Moore-Penrose Pseudoinverse} of $\mat{L}(G)$ is then defined as 
\[
   \mat{L}(G)^\dagger = \sum_{i=1}^{n-1} \frac{1}{\lambda^{G}_i} \vect{u}_i \vect{u}_i^{T}.
\]

We next show that for every $\vect{x} \in \mathbb{R}^{n}$, $(1-\varepsilon)\vect{x}^{T}\mat{L}(G)\vect{x} \leq \vect{x}^{T}{\mat{L}(H)}\vect{x}$ is equivalent to $\vect{x}^{T}\mat{L}(H)^\dagger\vect{x} \leq \frac{1}{(1-\varepsilon)}\vect{x}^{T}\mat{L}(G)^\dagger\vect{x}$. The other equivalence can be shown in a symmetric way. 

For every $\vect{x} \in \mathbb{R}^{n}$, by definition of $\mat{L}(G)$ and $\mat{L}(H)$ we have \[(1-\varepsilon)\vect{x}^{T}\mat{L}(G)\vect{x}  \leq \vect{x}^{T}{\mat{L}(H)}\vect{x} \Longleftrightarrow (1-\varepsilon)\sum_{i=1}^{n-1} \lambda^{G}_i(\vect{u}_i^{T} \vect{x})^{2} \leq \sum_{i=1}^{n-1} \lambda^{H}_i(\vect{u}_i^{T} \vect{x})^{2}.\]

We next show that 
\begin{equation} \label{eq: equivalPseudo}
	\forall \vect{x} \in \mathbb{R}^{n},~(1-\varepsilon)\sum_{i=1}^{n-1} \lambda^{G}_i(\vect{u}_i^{T} \vect{x})^{2} \leq \sum_{i=1}^{n-1} \lambda^{H}_i(\vect{u}_i^{T} \vect{x})^{2} \Longleftrightarrow (1-\varepsilon)\lambda^{G}_i \leq \lambda^{H}_i,~ \forall i=1,\ldots,n-1.
\end{equation}

Since for every $\mat{x} \in \mathbb{R}^{n}$, $(\vect{u}_i^{T}\vect{x})^{2} \geq 0$, $i=1,\ldots,n-1$, the if-direction of the equivalence in (\ref{eq: equivalPseudo}) follows immediately. For the only-if direction, we proceed by contraposition. To this end, assume that there exists some $i \in \{1,\ldots,n-1\}$ such that $(1-\varepsilon) \lambda_i^{G} > \lambda_i^{H}$. Then there exists a vector $\vect{x} = \vect{u_i} \in \mathbb{R}^{n}$ such that 
\[
   	(1-\varepsilon)\sum_{i=1}^{n-1} \lambda^{G}_i(\vect{u}_i^{T} \vect{x})^{2} = (1-\varepsilon) \lambda_i^{G} > \lambda_i^{H} = \sum_{i=1}^{n-1} \lambda^{H}_i(\vect{u}_i^{T} \vect{x})^{2},
\]
where the first and last inequality follow from the fact that $\vect{u}_i$'s are orthonormal eigenvectors, i.e., $\vect{u}^T_i \vect{u}_i = 1$ and $\vect{u}^T_i \vect{u}_j = 0$, $\forall i \neq j$. This gives a contradiction and thus proves the only-if direction. 
Now, for every $\vect{x} \in \mathbb{R}^{n}$ we have

\begin{align*}
(1-\varepsilon)\vect{x}^{T}\mat{L}(G)\vect{x}  \leq \vect{x}^{T}{\mat{L}(H)}\vect{x} & \Longleftrightarrow (1-\varepsilon)\lambda^{G}_i \leq \lambda^{H}_i,~ \forall i=1,\ldots,n-1 \\
& \Longleftrightarrow \frac{1}{\lambda^{H}_i} \leq \frac{1}{(1-\varepsilon)}\cdot \frac{1}{\lambda^{G}_i},~ \forall i=1,\ldots,n-1 \\
& \Longleftrightarrow \sum_{i=1}^{n-1} \frac{1}{\lambda^{H}_i}(\vect{u}_i^{T} \vect{x})^{2} \leq \frac{1}{(1-\varepsilon)}\sum_{i=1}^{n-1} \frac{1}{\lambda^{G}_i}(\vect{u}_i^{T} \vect{x})^{2} \\
& \Longleftrightarrow \vect{x}^{T}\mat{L}(H)^\dagger\vect{x} \leq \frac{1}{(1-\varepsilon)}\vect{x}^{T}\mat{L}(G)^\dagger\vect{x},
\end{align*}
where the penultimate equivalence can be proven in a similar way to equivalence in (\ref{eq: equivalPseudo}).
\end{proof}

\section{Missing Proofs from Section~\ref{sec:asc_vs}} \label{app: approxSchur}

\begin{proof}[Proof of Lemma~\ref{lemma:exact_schur}]
	Let $k = |K|$. First, note that by Definition~\ref{def: approxSchurComplement} and Lemma~\ref{lem: equivPseudo} we have
	\[
	\forall \vect{x} \in \mathbb{R}^{k},~\frac{1}{(1+\varepsilon)}\vect{x}^{T}\mat{S}(G,K)^\dagger \vect{x} \leq \vect{x}^{T}{\mat{L}(H)}^\dagger \vect{x} \leq \frac{1}{(1-\varepsilon)}\vect{x}^{T}\mat{S}(G,K)^\dagger \vect{x}. 
	\]
	
	Next, let $(s,t) \in K$ be any terminal pair. Consider the demand vector $\vect{\chi}_{s,t} \in \mathbb{R}^{k}$ and extend this vector to $\vect{\chi}_{s,t}' = \begin{bmatrix} \vect{0} &  \vect{\chi}_{s,t} \end{bmatrix}^T \in \mathbb{R}^{n}$. By definition of effective resistance and Lemma \ref{lemm: SchurComp}  we get that 
	\begin{align*}
	R_H(s,t)  & = \vect{\chi}_{s,t}^{T}{\mat{L}(H)}^\dagger \vect{\chi}_{s,t} \\ 
	& \leq \frac{1}{(1-\varepsilon)} \vect{\chi}_{s,t}^{T} \mat{S}(G,K)^\dagger \vect{\chi}_{s,t} \\ 
	& = \frac{1}{(1-\varepsilon)}\vect{\chi}_{s,t}'^{T} \mat{L}(G)^\dagger \vect{\chi}_{s,t}' \\
	& = \frac{1}{(1-\varepsilon)} R_G(s,t).
	\end{align*}
	For the lower-bound on $R_H(s,t)$, using the same reasoning, we get that
	\begin{align*}
	R_H(s,t)  & = \vect{\chi}_{s,t}^{T}{\mat{L}(H)}^\dagger \vect{\chi}_{s,t} \\ 
	& \geq \frac{1}{(1+\varepsilon)} \vect{\chi}_{s,t}^{T} \mat{S}(G,K)^\dagger \vect{\chi}_{s,t} \\ 
	& = \frac{1}{(1+\varepsilon)}\vect{\chi}_{s,t}'^{T} \mat{L}(G)^\dagger \vect{\chi}_{s,t}' \\
	& = \frac{1}{(1+\varepsilon)} R_G(s,t).
	\end{align*}
	
\end{proof}

\section{Missing Proofs from Section~\ref{sec: lowerBound}} \label{app: lowBound}
	\begin{proof}[Proof of Claim~\ref{claim:trace}]
	Let $\mat{L}$ denote the Laplacian matrix of $G$ and let $\vect{v}\in \mathbb{R}^{V_\mat{M}\cup\{t\}}$ denote the vector with entries corresponding to weights between $s$ and $u$ for each $u\in V_\mat{M}\cup\{t\}$, i.e., $\vect{v}_u = \kappa-\deg_H(u)$. 
	
	Now the key observation is that 
	\begin{equation*}
	\mat{L}=\left(\begin{array}{cc}
	\mat{B} & -\vect{v}\\
	-\vect{v}^T & \deg_G(s)
	\end{array}	\right)
	\end{equation*}
	
	For any $\vect{x}\in\mathbb{R}^{V_\mat{M}\cup\{t\}\cup\{s\}}$, let $\widehat{\vect{x}}\in\mathbb{R}^{V_\mat{M}\cup\{t\}}$ be the vector containing the first entries corresponding to vertices in $V_\mat{M}\cup\{t\}$ of $\vect{x}$. Let $\vect{y}$ be the solution of the Laplacian system $\mat{L}\vect{y}=\1_s-\1_t$. Thus, $\vect{y}=\mat{L}^\dag(\1_s-\1_t)$. It also holds that 
	\begin{eqnarray*}
		\mat{B}\cdot \widehat{\vect{y}} -\vect{v}\cdot y_{s} = -\widehat{\vect{1}}_t
	\end{eqnarray*}
	In addition, we know that $\mat{L}\vect{1}=\vect{0}$, and thus $\mat{B}\cdot \widehat{\1}=\vect{v}$. This further implies that, $\widehat{\vect{y}}=\mat{B}^{-1}\cdot \vect{v}\cdot y_s-\mat{B}^{-1}\widehat{\1}_t=y_s\cdot \widehat{\1}-\mat{B}^{-1}\widehat{\1}_t$. Thus, 
	\begin{eqnarray*}
		(\1_s-\1_t)^T \mat{L}^{\dag} (\1_s-\1_t)=(\1_s-\1_t)^T\vect{y}=y_s-\widehat{\1}_t^T\cdot \widehat{\vect{y}}=y_s-\widehat{\1}_t^T\cdot (y_s\cdot \widehat{\1}-\mat{B}^{-1}\widehat{\1}_t) =\widehat{\1}_t^T\mat{B}^{-1}\widehat{\1}_t
	\end{eqnarray*}
	
	Therefore, 
	\begin{eqnarray*}
		\Lambda=\mathcal{E}_G(s,t) = (\1_s-\1_t)^T \mat{L}^{\dag} (\1_s-\1_t) =  \widehat{\1}_t^T\mat{B}^{-1}\widehat{\1}_t =(\mat{B}^{-1})_{tt}
	\end{eqnarray*}
\end{proof}

\end{document}